\documentclass[11pt, article]{IEEEtran}
\onecolumn
\usepackage{amsmath,amsfonts,amssymb,amscd,amsthm,dsfont}
\usepackage{graphicx,epsfig}
\usepackage{color}
\usepackage{mathtools}
\usepackage{bbm}
\usepackage{url}
\usepackage{balance}
\usepackage{epstopdf}
\usepackage{subfigure,enumitem}
\usepackage{cite,url}
\usepackage{verbatim}
\usepackage{titlesec}
\usepackage{bm}

\topmargin=-0.5in \headsep=0.5in \oddsidemargin=0.0in
\textwidth=6.5in
\textheight=9.0in
\parskip=1.5ex
\parindent=2ex

\footnotesep=3.0ex


\newtheorem{definition}{\textbf{Definition}}

\newtheorem{lemma}{\textbf{Lemma}}
\newtheorem{theorem}{\textbf{Theorem}}
\newtheorem{proposition}{\textbf{Proposition}}

\newcommand{\nn}{\nonumber}


\newcommand\redsout{\bgroup\markoverwith{\textcolor{red}{\rule[0.5ex]{2pt}{0.8pt}}}\ULon}
\newcommand{\zsfd}[1]{\ifmmode\text{\redsout{\ensuremath{#1}}}\else\redsout{#1}\fi}

\usepackage{color}

\begin{document}
	
\title{Kernel Robust Hypothesis Testing}
\author{Zhongchang Sun \quad Shaofeng Zou
	\thanks{ This paper was presented in part at the 2021 IEEE International Symposium on Information Theory \cite{sun2021datadriven} and the 2022 IEEE International Symposium on Information Theory \cite{sun2022robust}.}
	\thanks{Zhongchang Sun and Shaofeng Zou are with the Department of Electrical Engineering, University at Buffalo, Buffalo, NY 14228 USA (e-mail: zhongcha@buffalo.edu, szou3@buffalo.edu).}
}
\maketitle
\begin{abstract}
The problem of robust hypothesis testing is studied, where under the null and the alternative hypotheses,  the data-generating distributions are assumed to be in some uncertainty sets, and the goal is to design a test that performs well under the worst-case distributions over the uncertainty sets. In this paper, uncertainty sets are constructed in a data-driven manner using kernel method, i.e., they are centered around empirical distributions of training samples from the null and alternative hypotheses, respectively; and are constrained via the distance between kernel mean embeddings of distributions in the reproducing kernel Hilbert space, i.e., maximum mean discrepancy (MMD). 
The Bayesian setting and the Neyman-Pearson setting are investigated. For the Bayesian setting where the goal is to minimize the worst-case error probability, an optimal test is firstly obtained when the alphabet is finite. When the alphabet is infinite, a tractable approximation is proposed to quantify the worst-case average error probability, and a kernel smoothing method is further applied to design test that generalizes to unseen samples. A direct robust kernel test is also proposed and proved to be exponentially consistent. For the Neyman-Pearson setting, where the goal is to minimize the worst-case probability of miss detection subject to a constraint on the worst-case probability of false alarm, an efficient robust kernel test is proposed and is shown to be asymptotically optimal. Numerical results are provided to demonstrate the performance of the proposed robust tests.
\end{abstract}

\section{Introduction}
Hypothesis testing is a fundamental problem in statistical inference where the goal is to distinguish among different hypotheses with a small probability of error \cite{moulin2018statistical,cover2006information,kay1993fundamentals}. The likelihood ratio test is known to be optimal under different settings, e.g., the Neyman-Pearson setting and the Bayesian setting \cite{moulin2018statistical, kay1993fundamentals}. For example, for binary hypothesis testing, we compare the likelihood ratio between the two hypotheses with a pre-specified threshold to make the decision. Therefore, the data-generating distributions under different hypotheses are needed. In practice, these distributions are usually estimated from historical data or designed using domain knowledge, and thus may deviate from the true data-generating distributions. When the distributions applied in the likelihood ratio test deviate from the true data-generating distributions, the performance of the test may degrade significantly. To address this problem, the approach of robust hypothesis testing is proposed, e.g.,  \cite{huber1965robust, levy2009robust,gul2017minimax,gao2018wasserstein, xie2021robust, wang2022datadriven, barni2013source, jin2020adversarial, rieder1977least, osterreicher1978on, bednarski1981on, hafner1982prokhorov, kassam1981robust,hafner1993construction,vastola1984ppoint, faub2021review}, where uncertainty sets are introduced to model the uncertainty in the underlying distributions. Generally, the uncertainty sets are constructed as collections of distributions that lie in the neighborhood of nominal distributions based on some distance measure. The goal is to design a test that performs well under the worst-case distributions over the uncertainty sets.

The robust hypothesis testing problem has been widely studied and various ways of constructing uncertainty sets have been introduced (see, e.g.,  \cite{faub2021review, gul2017minimax} for a review). The $\epsilon$-contamination uncertainty sets and the total variation uncertainty sets were investigated in \cite{huber1965robust} and a censored likelihood ratio test was constructed and shown to be minimax optimal. The problem with uncertainty sets defined via the Kullback-Leibler (KL) divergence was investigated in \cite{levy2009robust, gul2017minimax}. The least-favorable distributions (LFDs) were identified under some conditions, and the robust likelihood ratio test based on the LFDs were constructed. In \cite{barni2013source}, the robust hypothesis testing problem under the Bernoulli distribution was investigated. In \cite{jin2020adversarial}, the uncertainty sets were constructed via  distortion constraints. In those works, nominal distributions are usually 
estimated from historical data. However, when it comes to the high-dimensional data, which is common in the big data era, it is in general difficult to obtain an accurate estimate of the data-generating distributions. Existing studies are mostly limited to the 1-dimensional case, and a generalization to high-dimensional data, e.g., finding the LFDs, is still an open problem in the literature. 

In this paper, we employ a data-driven approach \cite{gao2018wasserstein, xie2021robust} to construct the nominal distributions, and extend the robust hypothesis testing problem to the high-dimensional setting. Specifically, a number of training samples are available from the null and alternative hypotheses, respectively, and their empirical distributions are used as the nominal distributions to design the uncertainty sets. We note that in this case, the uncertainty sets defined via the KL divergence \cite{levy2009robust, gul2017minimax} are not applicable, since such uncertainty sets only contain distributions supported on the training samples, which may be problematic if the alphabet is actually infinite. 
	
	
In \cite{gao2018wasserstein, xie2021robust}, the robust hypothesis testing problem was investigated where uncertainty sets are centered around empirical distributions via the Wasserstein distance. In \cite{gao2018wasserstein}, the original 0-1 loss, i.e., the error probability, was firstly smoothed. Then, this relaxed formulation can be solved efficiently, and the LFDs and the nearly-optimal robust detector were identified. In \cite{xie2021robust}, the minimax problem with the 0-1 loss, i.e., the exact probability of error, was considered, where a computationally tractable reformulation and the optimal robust test were characterized. In \cite{wang2022datadriven}, the data-driven robust hypothesis testing problem with the Sinkhorn distance, which is a variant of Wasserstein distance with entropic regularization, was studied. The original 0-1 loss, i.e., the error probability, was smoothed as in \cite{gao2018wasserstein}. Then, a finite-dimensional convex optimization problem was proposed to approximate the smoothed problem. The solutions were further used to approximate the LFDs, and design the robust test. However, Wasserstein distance based approach has certain drawbacks. First, the Wasserstein distance between the empirical distribution with $m$ samples and its data-generating distribution is bounded by $\mathcal O(m^{-1/d})$ \cite{fournier2015rate}, which depends on the dimension $d$ of the data. Therefore, when choosing radii of uncertainty sets to guarantee that the true data-generating distributions lie in the uncertainty sets with high probability, it is too pessimistic when $d$ is large. Moreover, coefficients in such a concentration bound depend on the true distribution which is unknown, and thus makes it difficult to use in practice. Second, Wasserstein distance is computationally expensive, especially in the high-dimensional setting.

Moment information, such as mean and variance, is usually used to measure the difference between distributions. In \cite{pandit2004moment}, the uncertainty sets are constructed using moment classes, where a finite alphabet was considered, and an asymptotically optimal test was designed. Specifically, the moment uncertainty sets in \cite{pandit2004moment} are defined as $\{P: E_P[f]\leq \theta\}$, where $f$ is a real-valued function, $E_P[f]$ denotes the expectation of $f$ under $P$, and $\theta$ is a constant. In this paper, we generalize the moment classes to the reproducing kernel Hilbert space (RKHS) \cite{Berl2004,Srip2010,gretton2012kernel} and construct uncertainty sets using the maximum mean discrepancy (MMD). Specifically, let $f = g - E_{\hat{P}}[g]$, where $\hat{P}$ is the empirical distribution of samples from $P$, $g$ is any function in the RKHS. We then consider the worst-case and take the supremum of $g$ with bounded norm over the RKHS. This leads to uncertainty sets centered at $\hat{P}$ and defined by MMD (see more details in Section \ref{sec:model}). Compared with the Wasserstein distance, the kernel MMD between the empirical distribution with $m$ samples and its data-generating distribution can be bounded by  $\mathcal O(1/\sqrt{m})$ \cite{altun2006unifying, szabo2015twostage}, which is dimension-free and also this bound does not depend on the data-generating distribution. This makes it much easier to choose the radii of the uncertainty sets. Moreover, the kernel MMD is computationally efficient to evaluate.

The MMD-based test statistic has been widely used in statistical signal processing and machine learning. For the one-sample testing problem, where the goal is to distinguish if a sequence of samples come from a certain distribution, and the two-sample testing problem, where the goal is to distinguish if two sequences of samples come from the same distribution, the MMD-based methods and some variants are proposed in \cite{chwi2015fast, fuku2009kernel, gretton2009fast, gretton2012optimal, suth2017generative, zaremba2013btest,zhu2021asymptotically, lloyd2015statistical, kim2016examples}. In \cite{zou2017stream} and \cite{zou2017structures}, MMD-based tests are proposed to detect anomalous data streams and anomalous network structures, respectively, where the anomalous samples generated from a different distribution from the normal samples. In \cite{li2015mstatistic}, an MMD-based M-statistic is proposed for data-driven quickest change detection. In our paper, we apply MMD to design robust tests which perform well under the worst-case distributions over the uncertainty sets.

\subsection{Main Contributions}
In this paper, we develop a data-driven approach with the kernel method to design uncertainty sets for the problem of robust hypothesis testing. Specifically, empirical distributions are used directly as the nominal distributions, which avoids the estimation error when fitting the data into a parametric family of distributions. We then use the kernel MMD as the distance metric, which can be viewed as a generalization of the moment classes \cite{pandit2004moment}. The advantage of the kernel method is that it scales well for high-dimensional data, and choosing radii of the uncertainty set does not require the knowledge of the underlying true distribution. More importantly, our designed uncertainty sets contain continuous distributions (not only distributions supported on the training data), and our robust kernel test generalizes with guaranteed out-of-sample performance. 

We first focus on the Bayesian setting where the goal is to minimize the worst-case error probability. We first study the case with a finite alphabet, and reformulate the original problem equivalently to a finite-dimensional convex optimization problem via the strong duality of kernel robust optimization \cite{zhu2021kernel} and then derive the optimal robust test. For the case with an infinite alphabet, we propose a tractable approximation to quantify the worst-case error probability. The basic idea is to generate a finite number of samples randomly, and reduce the uncertainty set to be supported on these samples. We then rewrite equivalently the original problem as a convex optimization problem, and  approximate the original objective function value using the approximated uncertainty set supported on these randomly generated samples. 
This approximation is tractable since it is a finite-dimensional convex optimization, and it builds connection between the finite-alphabet case and the infinite-alphabet case. We then show that the solutions to the approximation converge almost surely to the solutions to the original infinite-alphabet problem as the number of randomly generated samples goes to infinity. The LFDs (for the approximation) can be recovered, which are also supported on these randomly generated samples. 
To generalize to unseen data, we further apply the kernel smoothing method on the LFDs, and design a robust test that is the likelihood ratio test between the smoothed LFDs. The computational complexity lies in solving a finite-dimensional convex optimization problem the complexity of which depends on the number of randomly generated samples, and implementing the test using kernel smoothed LFDs the complexity of which is quadratic in the number of randomly generated samples and testing samples. We also propose a direct robust kernel test that can be implemented with a quadratic complexity in the number of samples, and show that it is exponentially consistent. The basic idea is to compare the closest MMD distance between the empirical distribution of samples and the two uncertainty sets. 

We then study the Neyman-Pearson setting, where the goal is to minimize the worst-case probability of miss detection subject to a constraint on the worst-case probability of false alarm. We first develop the universal upper bound on the error exponent of miss detection under the Neyman-Pearson setting. The analysis is based on a generalization of the Chernoff-Stein lemma \cite{cover2006information, dembo2009large}. We then design a novel robust kernel test, which is to compare the closest distance between the empirical distribution of the test samples and the uncertainty set with a threshold. We further demonstrate that it is asymptotically optimal under the Neyman-Pearson setting. Our proposed robust kernel test does not need to solve for the LFDs, which might be computationally intractable in practice. We also show that our test can be implemented efficiently with a quadratic complexity in the number of samples.

\subsection{Paper Organization}
In Section \ref{sec:model}, we present the preliminaries on MMD and the problem formulation. In Section \ref{sec:baysetting}, we focus the Bayesian setting, and derive the optimal test for the case with a finite alphabet. For the case with an infinite alphabet, we provide a tractable approximation to quantify the worst-case error probability and propose a kernel smoothing robust test. We also propose an exponentially consistent direct robust kernel test. In Section \ref{sec:npsetting}, we study the robust hypothesis testing under the Neyman-Pearson setting, and propose an asymptotically optimal robust kernel test. In Section \ref{sec:simulation}, we provide numerical results to validate our theoretical analysis. In Section \ref{sec:conclusion}, we present some concluding remarks.
	
\section{Preliminaries and Problem Formulation}\label{sec:model}
Let $\mathcal{X}\subset \mathbb R^d$ be a compact set where samples are taken from. Denote by $\mathcal{P}$ the set of all probability measures on $\mathcal{X}$.

\subsection{Maximum Mean Discrepancy (MMD)}

We first give a brief introduction to the kernel mean embedding and the MMD \cite{Berl2004,Srip2010}. 
Let $\mathcal H$ denote the RKHS associated with a kernel $k(\cdot,\cdot): \mathcal{X}\times\mathcal{X}\rightarrow\mathbb R$. Specifically,  $k(x,\cdot)$ denotes the feature map: $\mathcal{X}\rightarrow\mathcal{H}$, and $k(x, y) = \langle k(x,\cdot), k(y, \cdot)\rangle_{\mathcal{H}}$ defines an inner product on $\mathcal{H}$. In this paper, we consider the bounded kernel: $0\leq k(x,x')\leq K$, $\forall x,x'\in\mathcal X$, where $K>0$ is some positive constant. The kernel mean embedding of a distribution is a mapping from $\mathcal P$ to $\mathcal H$ defined as $\mu_P = \int k(x,\cdot) dP$. 
Let $E_P[f]$ denote the expectation of  a function $f\in\mathcal H$.  Denote by $\|\cdot\|_{\mathcal{H}}$ the norm on $\mathcal{H}$. 
Define the MMD between two distributions $P_0$ and $P_1$ as:
\begin{flalign}
d_{\text{MMD}}(P_0,P_1) =\sup_{f\in\mathcal{H}: \|f\|_{\mathcal{H}}\leq 1}E_{P_0}[f(x)]-E_{P_1}[f(x)].
\end{flalign}
With the reproducing property of the RKHS, we have that $E_P[f] = \langle f, \mu_P\rangle_{\mathcal{H}}$. The MMD between $P_0$ and $P_1$ can be equivalently written as the distance between $\mu_{P_0}$ and $\mu_{P_1}$ in the RKHS \cite{gretton2012kernel}:
\begin{flalign}\label{eq:mmddef}
&d_{\text{MMD}}(P_0,P_1) = \big\|\mu_{P_0} - \mu_{P_1}\big\|_{\mathcal{H}}\nn\\& = \Big(E_{x\sim P_0, x^\prime\sim P_0}[k(x,x^\prime)] + E_{y\sim P_1, y^\prime\sim P_1}[k(y,y^\prime)] - 2E_{x\sim P_0, y\sim P_1}[k(x,y)]\Big)^{1/2}.
\end{flalign}
Given samples $x^n = (x_1, x_2, \cdots, x_n) \sim P_0$ and $y^m = (y_1, y_2, \cdots, y_m) \sim P_1$, an unbiased estimate of the squared MMD \cite{gretton2012kernel} between $P_0$ and $P_1$ is 
\begin{flalign}
&\hat {d^2}_{\text{MMD}}(P_0,P_1) = \frac{1}{n(n-1)}\sum_{i=1}^n\sum_{j\neq i} k(x_i, x_j) + \frac{1}{m(m-1)}\sum_{i=1}^m\sum_{j\neq i} k(y_i, y_j) -\frac{2}{nm}\sum_{i=1}^n\sum_{j=1}^mk(x_i, y_j).
\end{flalign}
If a kernel $k$ is characteristic \cite{muandet2017kernel}, the kernel mean embedding  is injective, and then $d_{\text{MMD}}(\cdot,\cdot)$ is a metric on $\mathcal{P}$ \cite{gretton2012kernel, sripe2010hilbert}. In this paper, we consider kernels such that the weak convergence on $\mathcal{P}$ is metrized by MMD \cite{simon2018kernel, sripe2016weak}, e.g., Gaussian kernels and Laplacian kernels.
	
\subsection{Problem Setup}\label{sec:introsetup}
Let $\mathcal{P}_0, \mathcal{P}_1\subseteq \mathcal P $ denote the uncertainty sets under the null and alternative hypotheses, respectively.
We propose a data-driven approach to construct the uncertainty sets. Instead of fitting nominal probability distributions in a parametric form, we have two sequence of training samples: $\hat{x}_0^m = (\hat{x}_{0,1}, \hat{x}_{0,2}, \cdots, \hat{x}_{0,m})$ and $\hat{x}^m_1 = (\hat{x}_{1,1}, \hat{x}_{1,2}, \cdots, \hat{x}_{1,m})$ from the two hypotheses, respectively. Let $\hat{Q}^l_m = \frac{1}{m}\sum_{i=1}^m\delta_{\hat{x}_{l,i}},$ be the empirical distribution of $\hat{x}^m_l$, $l = 0,1$, where $\delta_{\hat{x}_{l,i}}$ denotes the Dirac measure on $\hat{x}_{l,i}$. 
The nominal distributions are then the empirical distributions of data from the two hypotheses, respectively.
The uncertainty sets $\mathcal{P}_0, \mathcal{P}_1$ are defined via the MMD:
\begin{flalign}
\mathcal{P}_l = \Big\{P\in\mathcal{P}: d_{\text{MMD}}(P, \hat{Q}^l_m) \leq \theta\Big\},\ l = 0, 1,
\end{flalign}
where $\theta$ is the pre-specified radius of the uncertainty sets, and shall be chosen to guarantee that the population distribution falls into the uncertainty sets with high probability. It is assumed that $\mathcal{P}_0, \mathcal{P}_1$ do not overlap, i.e., $\theta < \frac{\|\mu_{\hat{Q}^1_m} - \mu_{\hat{Q}^0_m}\|_{\mathcal{H}}}{2}$. Otherwise, the problem is trivial.

In \cite{pandit2004moment}, the moment class is defined as $\{P\in\mathcal{P}:  E_P[f] \leq \theta\}$, where $f$ is real-value function on $\mathcal{X}$. In the definition of moment class, if we let $f = g - E_{\hat{Q}_m^l}[g]$ and take the supremum over $g$ with $\|g\|_{\mathcal H}\leq 1$ in the RKHS, it is then the MMD between $P$ and $\hat{Q}_m^l$. Therefore, the MMD uncertainty sets can be viewed as a generalization of moment classes to the RKHS. 

In this paper, we focus on the robust hypothesis testing problem with MMD uncertainty sets under the Bayesian setting and the Neyman-Pearson setting. 

\textit{1) Bayesian Setting}. Given a sample $x$ following an unknown distribution $Q$, the goal is to distinguish between the null hypothesis $H_0:  Q\in \mathcal{P}_0$ and the alternative hypothesis $H_1: Q\in \mathcal{P}_1$. For a randomized test $\phi:\mathcal{X}\rightarrow [0,1]$, it accepts the null hypothesis $H_0$ with probability $1-\phi(x)$ and accepts the alternative hypothesis $H_1$ with probability $\phi(x)$. Let 
\begin{flalign}
&P_F(\phi) \triangleq \sup_{P_0\in\mathcal{P}_0}E_{P_0}[\phi(x)],\nn\\& P_M(\phi) \triangleq\sup_{P_1\in\mathcal{P}_1}E_{P_1}[1-\phi(x)]
\end{flalign}
denote the worst-case probability of false alarm (type-\uppercase\expandafter{\romannumeral1} error probability) and the worst-case probability of miss detection (type-\uppercase\expandafter{\romannumeral2} error probability) for the test $\phi$. 

For the simple hypothesis testing with equal priors on the two hypotheses, the error probability in the Bayesian setting is given by
\begin{flalign}
P_E(\phi) &\triangleq \frac{1}{2}E_{P_0}[\phi(x)] + \frac{1}{2}E_{P_1}[1-\phi(x)]\nn\\&= \frac{1}{2}\int\phi(x) dP_0(x) + \frac{1}{2}\int\big(1-\phi(x)\big) dP_1(x),
\end{flalign}
where $P_0$ and $P_1$ denote the distributions under the null and alternative hypotheses, respectively. For the Bayesian robust hypothesis testing, the goal is to solve the following problem:
\begin{flalign}\label{eq:minmax}
\inf_\phi \sup_{P_0\in\mathcal{P}_0, P_1\in\mathcal{P}_1}P_E(\phi).
\end{flalign}
The results in this paper can be easily generalized to the case with non-equal priors.

Denote a sequence of independent and identically distributed (i.i.d.) samples by $x^n = (x_1, x_2, \cdots, x_n)$. The worst-case type-\uppercase\expandafter{\romannumeral1} error exponent $e_F$ and the worst-case type-\uppercase\expandafter{\romannumeral2} error exponent $e_M$ are defined as follows:
\begin{flalign}
&e_F(\phi) = \inf_{P_0\in\mathcal{P}_0}\lim_{n\rightarrow\infty} -\frac{1}{n} \log E_{P_0}[\phi(x^n)],\nn\\&
e_M(\phi) = \inf_{P_1\in\mathcal{P}_1}\lim_{n\rightarrow\infty} -\frac{1}{n} \log E_{P_1}[1-\phi(x^n)].
\end{flalign}

\begin{definition}
A test $\phi$ is said to be exponentially consistent if $e_F(\phi) >0$ and $e_M(\phi) >0$.
\end{definition}

\textit{2) Neyman-Pearson Setting}. In this paper, we focus on the asymptotic Neyman-Pearson setting, where the goal is to solve the following problem:
\begin{flalign}\label{eq:goal}
\sup_{\phi:P_F(\phi)\leq \alpha} \inf_{P_1\in\mathcal{P}_1}\lim_{n\rightarrow\infty} -\frac{1}{n} \log E_{P_1}[1-\phi(x^n)],
\end{flalign}
where $\alpha\in(0, 1]$ is a pre-specified constraint on the worst-case false alarm probability. Specifically, among the tests that satisfy the false alarm constraint $P_F(\phi) \leq \alpha$, we aim to find one that maximizes the worst-case type-\uppercase\expandafter{\romannumeral2} error exponent.

In this paper, any distributions $P_0, P_1$ are assumed to admit probability density functions (PDFs) $p_0, p_1$, since we can always choose a reference measure $\mu$ such that both $P_0$ and $P_1$ are absolutely continuous with respect to $\mu$. In general, $\mu$ can be chosen as $\mu = P_0 + P_1$. For the continuous distributions and the discrete distributions, $\mu$ can be chosen as the Lebesgue measure and the counting measure, respectively.

\section{Robust Hypothesis Testing Under Bayesian Setting}\label{sec:baysetting}
In this section, we focus on the Bayesian setting. We aim to solve the minimax problem for the average probability of error in \eqref{eq:minmax}.
	
\subsection{Finite-Alphabet Case}
Consider the case with a finite alphabet, i.e., $N \triangleq |\mathcal{X}|<\infty$. Let $\mathcal{X} = \{z_1, z_2,\cdots, z_N\}$. Then, $\hat{x}_{l,j}\in \{z_i\}_{i=1}^N \ \text{for}\ l = 0,1, j = 1, \cdots, m $. In this case, 
\begin{flalign}
P_E(\phi)=\sum_{i=1}^N (1-\phi_N(z_i))P_1^N(z_i) + \phi_N(z_i)P_0^N(z_i),
\end{flalign}
where we introduce the superscript $N$ on $P_0$ and $P_1$ to emphasize its dependence on $N$.  Therefore, \eqref{eq:minmax} can be written as
\begin{flalign}\label{eq:finiteminimax}
\inf_{\phi_N \in [0, 1]^{\otimes N}}\sup_{P_0^N \in \mathcal{P}_0, P_1^N \in \mathcal{P}_1}\sum_{i=1}^N (1-\phi_N(z_i))P_1^N(z_i) + \phi_N(z_i)P_0^N(z_i).
\end{flalign}
Note that \eqref{eq:finiteminimax} is a minimax problem. We then provide the following strong duality result for \eqref{eq:finiteminimax}, which is a finite-dimensional convex optimization problem.

\begin{lemma}\label{theorem:strongdual}
The minimax problem in \eqref{eq:finiteminimax} has the following strong dual formulation:  
	\begin{flalign}\label{eq:finitetract}
	\inf_{\phi_N\in [0, 1]^{\otimes N}, f_0, g_0, \alpha_j, \beta_j\in \mathbb{R}}& f_0 + g_0 + \frac{1}{m}\sum_{i=1}^m \sum_{j=1}^N \alpha_j k(z_j, \hat{x}_{1,i}) + \frac{1}{m}\sum_{i=1}^m \sum_{j=1}^{N} \beta_j k(z_j, \hat{x}_{0,i})\nn\\&+ \theta\Big\|\sum_{j=1}^N \alpha_j k(z_j, \cdot)\Big\|_{\mathcal{H}}+ \theta\Big\|\sum_{j=1}^{N} \beta_j k(z_j, \cdot)\Big\|_{\mathcal{H}}\nn\\
	\text{subject to}\ & 1-\phi_N(z_i) \leq f_0 + \sum_{j=1}^N \alpha_j k(z_j, z_i),\ \text{for}\ i = 1, \cdots, N\nn\\
	& \phi_N(z_i) \leq g_0 + \sum_{j=1}^{N} \beta_j k(z_j, z_i),\ \text{for}\ i = 1, \cdots, N\nn\\& 0\leq \phi_N(z_i) \leq 1,\ \text{for}\ i = 1, \cdots, N,
	\end{flalign}
which is a finite-dimensional convex optimization problem.
\end{lemma}
\begin{proof}
From the strong duality of kernel robust optimization \cite[Theorem 3.1]{zhu2021kernel}, we have that \eqref{eq:finiteminimax} has the following dual problem and the strong duality holds. 
\begin{flalign}\label{eq:finitewithf}
\inf_{\phi_N\in [0, 1]^{\otimes N}, f_0, g_0\in \mathbb{R}, f_1, g_1\in\mathcal{H}}& f_0 + g_0 + \frac{1}{m}\sum_{i=1}^m f_1(\hat{x}_{1,i}) + \frac{1}{m}\sum_{i=1}^m g_1(\hat{x}_{0,i})+ \theta\|f_1\|_{\mathcal{H}}+ \theta\|g_1\|_{\mathcal{H}}\nn\\
\text{subject to}\ & 1-\phi_N(z_i) \leq f_0 + f_1(z_i),\ \text{for}\ i = 1, \cdots, N\nn\\
& \phi_N(z_i) \leq g_0 + g_1(z_i),\ \text{for}\ i = 1, \cdots, N\nn\\& 0\leq \phi_N(z_i) \leq 1,\ \text{for}\ i = 1, \cdots, N,
\end{flalign}

From the robust representer theorem \cite{zhu2021kernel}, the functions $f_1, g_1$ admit the finite expansions $f_1(\cdot) = \sum_{j=1}^N \alpha_j k(z_j, \cdot)$ and $g_1(\cdot) = \sum_{j=1}^{N} \beta_j k(z_j, \cdot)$. Therefore, the optimization problem in \eqref{eq:finiteminimax} can be reformulated as a finite-dimensional convex optimization problem thus can be solved efficiently in practice. 	
\end{proof}
	
Note that \eqref{eq:finitetract} is a convex optimization problem with linear constraints, and thus can be solved using standard optimization tools \cite{cvx}. By solving \eqref{eq:finitetract}, we obtain the optimal robust test $\phi_N^*$ and can also find the optimal solutions $P_0^{*, N}, P_1^{*, N}$ for the inner problem in \eqref{eq:finiteminimax} by plugging $\phi_N^*$ back to \eqref{eq:finiteminimax}. When $\phi_N^*$ is known, \eqref{eq:finiteminimax} reduces to a finite-dimensional convex optimization problem and can be solved efficiently. In the following section, we also show that the results in the finite-alphabet case can be used to provide an asymptotically accurate approximation for the infinite-alphabet case.

\subsection{Infinite-Alphabet Case}
Consider the case where $\mathcal{X}$ is infinite. Then, \eqref{eq:finitetract} is infinite-dimensional, and is not directly solvable. To simplify the analysis of \eqref{eq:minmax}, we first interchange the $\sup$ and $\inf$ operators in \eqref{eq:minmax} based on the following proposition. Since the likelihood ratio test is optimal for the binary hypothesis testing problem, the inner problem can be solved by applying the likelihood ratio test. The original problem is then converted to a maximization problem. 
\begin{proposition}
The minimax problem in \eqref{eq:minmax} has the following reformulation:
\begin{flalign}\label{eq:sion}
&\inf_\phi \sup_{P_0\in\mathcal{P}_0, P_1\in\mathcal{P}_1}P_E(\phi) = \sup_{P_0\in\mathcal{P}_0, P_1\in\mathcal{P}_1}\inf_\phi P_E(\phi)\nn\\ =&\frac{1}{2} \sup_{P_0\in\mathcal{P}_0, P_1\in\mathcal{P}_1} \int \min\big\{p_0(x), p_1(x)\big\}dx.
\end{flalign}
\end{proposition}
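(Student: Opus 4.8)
The plan is to prove the two displayed equalities separately: the right-hand equality by solving the inner minimization in closed form, and the left-hand equality (minimax $=$ maximin) by invoking a minimax theorem, as the label \eqref{eq:sion} suggests. First I would dispatch the right-hand equality. Fix $P_0, P_1\in\mathcal{P}_0\times\mathcal{P}_1$ with densities $p_0, p_1$ with respect to a common reference measure. Writing $P_E(\phi) = \frac{1}{2}\int\big[\phi(x)p_0(x) + (1-\phi(x))p_1(x)\big]dx$, the integrand is affine in the scalar $\phi(x)\in[0,1]$ for each $x$, so it is minimized pointwise at an endpoint, giving value $\min\{p_0(x), p_1(x)\}$, attained by the nonrandomized likelihood-ratio (Bayes) test $\phi(x)=\mathbbm{1}\{p_1(x) > p_0(x)\}$. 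Hence $\inf_\phi P_E(\phi) = \frac{1}{2}\int\min\{p_0,p_1\}dx$ for every fixed pair, and taking the supremum over $\mathcal{P}_0\times\mathcal{P}_1$ yields the third expression. This also confirms that the inner infimum is attained, which is convenient below.

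Second, for the interchange of $\inf_\phi$ and $\sup_{P_0,P_1}$, I would apply Sion's minimax theorem to $P_E(\phi; P_0, P_1)$. The objective is affine in $\phi$ (hence quasiconvex) and jointly affine in $(P_0, P_1)$ (hence quasiconcave), since $P_E$ splits into $\frac{1}{2}E_{P_0}[\phi]$ and $\frac{1}{2}E_{P_1}[1-\phi]$ with no cross terms. The feasible test set $\{\phi:\mathcal{X}\to[0,1]\}$ is convex. Each uncertainty set $\mathcal{P}_l$ is convex because the mean embedding $P\mapsto\mu_P$ is linear, so $P\mapsto d_{\text{MMD}}(P,\hat{Q}^l_m)=\|\mu_P-\mu_{\hat{Q}^l_m}\|_{\mathcal{H}}$ is convex and its sublevel set is convex. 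Compactness is where the hypothesis $\mathcal{X}\subset\mathbb R^d$ compact enters: by Prokhorov's theorem $\mathcal{P}$ is weakly compact, and since MMD metrizes weak convergence on $\mathcal{P}$ (assumed in Section \ref{sec:model}), $P\mapsto d_{\text{MMD}}(P,\hat{Q}^l_m)$ is weakly continuous, so the MMD ball $\mathcal{P}_l$ is weakly closed, hence weakly compact. Thus $\mathcal{P}_0\times\mathcal{P}_1$ is convex and weakly compact, and Sion's theorem delivers $\inf_\phi\sup_{P_0,P_1}P_E = \sup_{P_0,P_1}\inf_\phi P_E$.

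I expect the main obstacle to be verifying the semicontinuity hypotheses of Sion's theorem in the correct topologies, rather than any algebra. The delicate point is that a test $\phi$ is only required to be measurable, so $(P_0,P_1)\mapsto E_{P_0}[\phi]+E_{P_1}[1-\phi]$ need not be weakly continuous: weak convergence only controls integrals of bounded \emph{continuous} functions, and the portmanteau theorem supplies upper or lower semicontinuity only for semicontinuous integrands. I would resolve this either by restricting (without loss of value) to continuous tests and using a density argument, or by endowing the test set with the weak-$*$ topology of $L^\infty$ against a dominating measure, so that $\phi\mapsto P_E$ is continuous and the order interval $\{0\le\phi\le1\}$ is compact by Banach--Alaoglu, with the affine structure on the measure side then furnishing the remaining semicontinuity.

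An alternative that sidesteps the topological bookkeeping entirely is the least-favorable-distribution route, which I would fall back on if the Sion hypotheses prove awkward to certify at the required level of generality. Weak duality gives $\inf_\phi\sup_P P_E\ge\sup_P\inf_\phi P_E$ for free. For the reverse inequality, note that $V(P_0,P_1)=\frac{1}{2}\int\min\{p_0,p_1\}dx$ is concave and upper semicontinuous on the compact convex set $\mathcal{P}_0\times\mathcal{P}_1$, so its supremum is attained at some pair $(P_0^*,P_1^*)$; one then shows that $(P_0^*,P_1^*)$ together with its Bayes test $\phi^*$ forms a saddle point, which matches the two sides and closes the argument.
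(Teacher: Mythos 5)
Your proposal is correct and shares the paper's skeleton---Sion's minimax theorem to exchange $\inf$ and $\sup$, followed by pointwise optimality of the likelihood-ratio test for the inner infimum---but it differs in where the single compactness hypothesis of Sion's theorem is placed, and this difference is substantive. The paper puts compactness on the test side: $\Phi=[0,1]^{\mathcal{X}}$ is compact in the product (pointwise-convergence) topology by Tychonoff, and $\mathcal{P}_0\times\mathcal{P}_1$ is only required to be convex. You put compactness on the distribution side: $\mathcal{P}$ is weakly compact by Prokhorov (using $\mathcal{X}$ compact) and the MMD balls are weakly closed because MMD metrizes weak convergence---which is in fact the argument the paper itself uses later, in the proof of Theorem \ref{theorem:tractable}. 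Each placement creates a semicontinuity obligation in the opposite variable: the paper needs $P_E$ lower semicontinuous in $\phi$ for the product topology, which it asserts as ``continuity'' without proof and which actually fails for merely measurable tests against non-atomic measures (take $P_0,P_1$ with disjoint supports and the net, indexed by finite sets $F$ in the support of $P_1$, of tests equal to $1$ on the support of $P_0$ and to $\mathbb{I}_{F^c}$ on the support of $P_1$: it converges pointwise to a test with $P_E=1$ while $P_E=\tfrac12$ along the net; dominated convergence is only a sequential statement and the product topology is not metrizable here). Your placement faces the mirror-image obstruction---weak upper semicontinuity of $(P_0,P_1)\mapsto P_E$ for measurable $\phi$---but you identify it explicitly and give a repair that closes it: restrict to continuous tests, for which $P_E$ is weakly continuous in $(P_0,P_1)$, apply Sion there, then use density of $C(\mathcal{X})$ in $L^1(P_0+P_1)$ to see that for each fixed pair the infimum over continuous tests equals $\frac12\int\min\{p_0,p_1\}dx$, and finally weak duality $\sup\inf\le\inf\sup$ sandwiches all four quantities into equality. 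This is rigorous, and it is the version of the argument that actually withstands scrutiny; the paper's route cannot be patched the same way, since the pointwise-convergence semicontinuity fails even for nets of continuous tests.

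Two caveats on your alternatives. The $L^{\infty}(\mu)$ weak-$*$/Banach--Alaoglu variant does not apply as stated: $\mathcal{P}_0$ and $\mathcal{P}_1$ are not dominated by any single $\sigma$-finite measure (they contain mutually singular distributions, e.g.\ all point masses within MMD-radius $\theta$ of the centers), so $E_P[\phi]$ is not well defined on $\mu$-equivalence classes; the paper's device $\mu=P_0+P_1$ dominates only one fixed pair. And the least-favorable-distribution fallback needs more than existence of a maximizer of the concave, weakly upper semicontinuous functional (which Lemma \ref{lemma:uppersemi} plus compactness do provide): one must still verify the saddle-point property, i.e.\ that the Bayes test for $(P_0^*,P_1^*)$ attains its worst case at $(P_0^*,P_1^*)$, which is the real content of that classical argument. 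Your primary route (Sion on the measure side with continuous tests) is the one to keep.
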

\begin{proof}
The error probability $P_E(\phi)$ is continuous, real-valued and linear in $\phi$, $P_0$ and $P_1$. For any distributions $Q_1, Q_2 \in \mathcal{P}_l$, $l = 0, 1$, from the triangle inequality of MMD \cite{gretton2012kernel}, the convex combination $\lambda Q_1 + (1-\lambda) Q_2, 0<\lambda<1$, lies in $\mathcal{P}_l$. Therefore, the uncertainty set $\mathcal{P}_0$ and $\mathcal{P}_1$ are convex sets and $\mathcal{P}_0 \times\mathcal{P}_1$ is also convex. Denote by $\Phi$ the collection of all $\phi$. We have that $\Phi$ is the product of uncountably many compact sets of $[0, 1]$. Since $\mathcal{X}$ is compact, from the Tychonoff's theorem \cite{tychonoff1930, johnstone1981}, $\Phi$ is compact with respect to the product topology. Moreover, for any $\phi_1, \phi_2 \in \Phi$, the convex combination $\lambda\phi_1 + (1-\lambda)\phi_2,\ 0<\lambda<1$, also lies in $\Phi$. Therefore, $\Phi$ is convex. From the Sion's minimax theorem \cite{sion1958minimax}, we have that 
\begin{flalign}
&\inf_\phi \sup_{P_0\in\mathcal{P}_0, P_1\in\mathcal{P}_1}P_E(\phi)\nn\\ =& \sup_{P_0\in\mathcal{P}_0, P_1\in\mathcal{P}_1}\inf_\phi P_E(\phi) \nn\\=& \sup_{P_0\in\mathcal{P}_0, P_1\in\mathcal{P}_1} \frac{1}{2}\int \mathbb{I}_{\big\{\frac{p_1(x)}{p_0(x)}\geq 1\big\}}p_0(x)dx + \frac{1}{2} \int \mathbb{I}_{\big\{\frac{p_1(x)}{p_0(x)}< 1\big\}}p_1(x)dx \nn\\=& \frac{1}{2}\sup_{P_0\in\mathcal{P}_0, P_1\in\mathcal{P}_1} \int \min\big\{p_0(x), p_1(x)\big\}dx,
\end{flalign}
where $\mathbb{I}$ denotes the indicator function and the second equality is due to the fact that the likelihood ratio test is optimal for the binary hypothesis testing problem \cite{moulin2018statistical, kay1993fundamentals}.
\end{proof}
Observe that the problem in \eqref{eq:sion} is an infinite-dimensional optimization problem and the closed-form optimal solutions $P_0^*, P_1^*$ are difficult to derive. In the following, we propose a tractable approximation for the minimax error probability in \eqref{eq:sion}. With this approximation, the worst-case error probability in \eqref{eq:sion} can be quantified. The optimal solutions of this tractable approximation can be further used to design a robust test that generalizes to unseen samples.

Let $P$ be an arbitrary distribution supported on the whole space $\mathcal{X}$, and is absolutely continuous w.r.t. a uniform distribution on $\mathcal{X}$. Let $\{z_i\}^N_{i=1}$ be $N$ i.i.d. samples generated from $P$. We then propose the following approximation of \eqref{eq:sion} by restricting to distributions supported on the $N$ samples:
\begin{flalign}\label{eq:tractable}
\frac{1}{2}\sup_{P_0^N\in\mathcal{P}_0^N, P_1^N\in\mathcal{P}_1^N}\ \ &\sum_{i=1}^N \min\big\{P_0^N(z_i), P_1^N(z_i)\big\},
\end{flalign}
where $\mathcal{P}_l^N (l = 0, 1)$ denotes the collection of distributions that are supported on $\{z_i\}_{i=1}^N$ and satisfy $\big\|\mu_{P_l^N} - \mu_{\hat{Q}^l_m}\big\|_\mathcal{H} \leq \theta$.
We note that \eqref{eq:tractable} is a finite-dimensional convex optimization problem which can be solved by standard optimization tools. Let
\begin{flalign}\label{eq:definef}
&f(\mathcal{P}_0, \mathcal{P}_1) = \frac{1}{2} \sup_{P_0\in\mathcal{P}_0, P_1\in\mathcal{P}_1} \int \min\big\{p_0(x), p_1(x)\big\}dx,\nn\\& f(\mathcal{P}^N_0, \mathcal{P}^N_1) = \frac{1}{2}\sup_{P_0^N\in\mathcal{P}_0^N, P_1^N\in\mathcal{P}_1^N}\sum_{i=1}^N \min\big\{P_0^N(z_i), P_1^N(z_i)\big\}.
\end{flalign} 
Clearly, \eqref{eq:tractable} is a lower bound of \eqref{eq:sion}, i.e., $f(\mathcal{P}^N_0, \mathcal{P}^N_1)\leq f(\mathcal{P}_0, \mathcal{P}_1)$. The following theorem demonstrates that as $N\rightarrow\infty$, the value of \eqref{eq:tractable} converges to the value of \eqref{eq:sion} almost surely.
\begin{theorem}\label{theorem:tractable}
As $N\rightarrow\infty$, $f(\mathcal{P}^N_0, \mathcal{P}^N_1)$ converges to $f(\mathcal{P}_0, \mathcal{P}_1)$ almost surely.
\end{theorem}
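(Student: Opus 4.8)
The plan is to prove the two matching inequalities. The inequality $f(\mathcal{P}^N_0,\mathcal{P}^N_1)\le f(\mathcal{P}_0,\mathcal{P}_1)$ holds on every sample path and has already been observed, since any pair of discrete distributions supported on $\{z_i\}_{i=1}^N$ that is feasible for \eqref{eq:tractable} is also feasible for \eqref{eq:sion} with identical objective value. Hence it remains to show the reverse bound in the limit, namely $\liminf_{N\to\infty} f(\mathcal{P}^N_0,\mathcal{P}^N_1)\ge f(\mathcal{P}_0,\mathcal{P}_1)$ almost surely. Throughout I would take $P$ to have a density $q$ that is strictly positive on $\mathcal{X}$ (e.g.\ the uniform distribution), so that every distribution with a density is absolutely continuous with respect to $P$ and the importance weights below are well defined.

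Fix $\epsilon>0$ and pick a feasible pair $(P_0^\star,P_1^\star)$, $P_l^\star\in\mathcal{P}_l$, that is $\epsilon$-optimal, i.e.\ $\tfrac12\int\min\{p_0^\star,p_1^\star\}\,dx\ge f(\mathcal{P}_0,\mathcal{P}_1)-\epsilon$; after a smoothing step (see below) I may assume $P_l^\star\ll P$ with Radon--Nikodym density $\psi_l=dP_l^\star/dP$. The central construction is importance reweighting: define discrete distributions on the generated support by $\tilde{P}_l^N(z_i)=\psi_l(z_i)/S_l$ with $S_l=\sum_{j=1}^N\psi_l(z_j)$. Since $E_P[\psi_l]=\int p_l^\star\,dx=1$, the strong law of large numbers gives $S_l/N\to1$ almost surely. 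Writing $c_l=N/S_l\to1$, the finite objective becomes $\sum_{i}\min\{\tilde P_0^N(z_i),\tilde P_1^N(z_i)\}=\tfrac1N\sum_i\min\{c_0\psi_0(z_i),c_1\psi_1(z_i)\}$, and since $\min$ is jointly $1$-Lipschitz the correction from $c_l\neq1$ is controlled by $|c_0-1|(S_0/N)+|c_1-1|(S_1/N)\to0$. The leading term $\tfrac1N\sum_i\min\{\psi_0(z_i),\psi_1(z_i)\}$ converges almost surely, by the strong law again, to $E_P[\min\{\psi_0,\psi_1\}]=\int\min\{p_0^\star,p_1^\star\}\,dx$. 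Feasibility of $\tilde P_l^N$ is handled by the Hilbert-space (Bochner) strong law: because the kernel is bounded, $\mu_{\tilde P_l^N}=\tfrac{1}{S_l}\sum_i\psi_l(z_i)k(z_i,\cdot)\to\mu_{P_l^\star}$ in $\mathcal{H}$ almost surely, hence $d_{\text{MMD}}(\tilde P_l^N,\hat{Q}^l_m)\to d_{\text{MMD}}(P_l^\star,\hat{Q}^l_m)$.

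To guarantee that $\tilde P_l^N$ actually lies in $\mathcal{P}_l^N$ for all large $N$, I would first move $(P_0^\star,P_1^\star)$ strictly inside the MMD balls. Choosing a reference distribution $R_l$ with a density and $d_{\text{MMD}}(R_l,\hat{Q}^l_m)=\rho_l<\theta$ (such $R_l$ exist because MMD metrizes weak convergence, so a slightly smoothed $\hat{Q}^l_m$ has arbitrarily small MMD to it), set $P_l^{\star,\gamma}=(1-\gamma)P_l^\star+\gamma R_l$. Its embedding satisfies $\|\mu_{P_l^{\star,\gamma}}-\mu_{\hat{Q}^l_m}\|_{\mathcal{H}}\le(1-\gamma)\theta+\gamma\rho_l<\theta$, so it is strictly feasible, while the overlap only decreases by a factor, $\int\min\{p_0^{\star,\gamma},p_1^{\star,\gamma}\}\ge(1-\gamma)\int\min\{p_0^\star,p_1^\star\}$. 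Running the importance-reweighting construction on $(P_0^{\star,\gamma},P_1^{\star,\gamma})$, the almost-sure convergence of $d_{\text{MMD}}(\tilde P_l^N,\hat{Q}^l_m)$ to a value strictly below $\theta$ makes $\tilde P_l^N$ feasible for all large $N$, and the corresponding value of \eqref{eq:tractable} converges to $\tfrac12\int\min\{p_0^{\star,\gamma},p_1^{\star,\gamma}\}\ge(1-\gamma)(f(\mathcal{P}_0,\mathcal{P}_1)-\epsilon)$. Taking $\gamma$ small and then intersecting the probability-one events over a countable sequence $\epsilon_k\downarrow0$ yields $\liminf_N f(\mathcal{P}^N_0,\mathcal{P}^N_1)\ge f(\mathcal{P}_0,\mathcal{P}_1)$ almost surely, which together with the sample-path upper bound gives the claim.

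The step I expect to be the main obstacle is justifying the reduction to an $\epsilon$-optimal pair that is absolutely continuous with respect to $P$. The importance-sampling estimator only ever places mass on the generated points $\{z_i\}$, which almost surely avoid any $P$-null set; consequently any overlap that $(P_0^\star,P_1^\star)$ realizes on a part singular with respect to $P$ (for instance a shared atom) would be invisible to $\tilde P_l^N$, and the objective would be undercounted. The remedy is to spread such singular mass into a narrow absolutely continuous blob simultaneously in both distributions: this preserves the overlap integral exactly and, because MMD metrizes weak convergence, perturbs $d_{\text{MMD}}(P_l^\star,\hat{Q}^l_m)$ by an amount that can be made smaller than the feasibility slack $\theta-(1-\gamma)\theta$ created by the mixing step. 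Making this smoothing argument quantitative---and checking the integrability conditions for the scalar and Hilbert-valued strong laws when $\psi_l$ is unbounded---is where the real care is needed; the remaining convergence statements are routine consequences of the strong law of large numbers.
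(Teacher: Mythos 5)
Your proposal is correct in substance, but it takes a genuinely different route from the paper's proof, so it is worth comparing the two. The paper first establishes existence of exact maximizers $(P_0^*,P_1^*)$ of \eqref{eq:sion}, using compactness of $\mathcal{P}_0\times\mathcal{P}_1$ in the weak topology together with upper semi-continuity of $\int\min\{p_0,p_1\}dx$ (its Lemma \ref{lemma:uppersemi}, a substantial appendix argument); it then quantizes the maximizers on partitions of vanishing diameter, where Jensen's inequality guarantees the quantized objective can only increase, and snaps the resulting atoms to nearest sampled points, which is valid almost surely because $\{z_i\}_{i=1}^N$ becomes dense in $\mathcal{X}$. The price of snapping is that feasibility is only obtained with inflated radius $\theta+\epsilon$, and the paper needs a further argument that the value functions $g(\theta)$ and $g^*(\theta)$ are concave, hence continuous, in $\theta$ in order to remove the inflation. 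Your argument dispenses with all three of these ingredients: working with $\epsilon$-optimal pairs removes the need for existence of maximizers and for upper semi-continuity; the mixing step toward a strictly interior reference $R_l$ creates slack so that feasibility holds at radius exactly $\theta$ for all large $N$, replacing the concavity-in-$\theta$ machinery; and importance reweighting combined with the scalar and Hilbert-valued strong laws of large numbers replaces quantization plus denseness. What you pay is exactly what you flagged: your construction only sees the part of the near-optimal pair that is absolutely continuous with respect to $P$, so the mollification step is genuinely needed, whereas the paper's snapping handles singular optimizers for free. Your sketched remedy does close this gap: smoothing both measures with a \emph{common} Markov kernel (e.g., spreading the mass at $x$ uniformly over $B(x,\delta)\cap\mathcal{X}$) preserves total masses and contracts total variation, hence can only increase $\int\min\{p_0,p_1\}dx$, while perturbing each measure weakly by an amount that vanishes as $\delta\rightarrow 0$ and is therefore, since MMD metrizes weak convergence, absorbed by the mixing slack. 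Your final integrability worry is also not an obstacle: Mourier's strong law in the separable Hilbert space $\mathcal{H}$ requires only $E_P\big[\|\psi_l(z)k(z,\cdot)\|_{\mathcal{H}}\big]\leq \sqrt{K}\,E_P[\psi_l]=\sqrt{K}<\infty$, and one can even force $\psi_l$ to be bounded by mollifying the entire measure rather than only its singular part.
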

	
Before we prove Theorem \ref{theorem:tractable}, we will first show that $\int\min\big\{p_0(x), p_1(x)\big\}dx$ is upper semi-continuous in $P_0, P_1$ with respect to the weak convergence in the following lemma. 
\begin{lemma}\label{lemma:uppersemi}
$\int\min\big\{p_0(x), p_1(x)\big\}dx$ is upper semi-continuous in $P_0, P_1$ with respect to the weak convergence.
\end{lemma}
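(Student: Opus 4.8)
The plan is to recognize the functional $\int \min\{p_0(x), p_1(x)\}\,dx$ as one minus the total variation distance between $P_0$ and $P_1$, and then exploit the fact that total variation is lower semi-continuous under weak convergence. Using the elementary identity $\min\{a,b\} = \tfrac12(a+b-|a-b|)$, I would first write
\[
\int \min\{p_0(x), p_1(x)\}\,dx = 1 - \tfrac{1}{2}\int |p_0(x) - p_1(x)|\,dx = 1 - d_{\mathrm{TV}}(P_0, P_1),
\]
where $d_{\mathrm{TV}}$ denotes the total variation distance. I would note that this quantity is intrinsic, i.e.\ independent of the choice of dominating reference measure, which incidentally confirms that the object is well defined. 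With this identity in hand, upper semi-continuity of the left-hand side is exactly equivalent to lower semi-continuity of $d_{\mathrm{TV}}$ with respect to weak convergence.

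The second step is to obtain a dual (variational) representation of $d_{\mathrm{TV}}$ involving only continuous test functions, which is what renders weak convergence usable. Since $\mathcal{X}\subset\mathbb R^d$ is compact, the Riesz representation theorem identifies the dual of $C(\mathcal{X})$ with the space of finite signed Borel measures equipped with the total variation norm, so that
\[
d_{\mathrm{TV}}(P_0, P_1) = \tfrac{1}{2}\sup_{f \in C(\mathcal{X}),\ \|f\|_\infty \le 1} \Big( \int f\,dP_0 - \int f\,dP_1 \Big).
\]
The purpose of this representation is to replace the indicator/all-measurable-set description of total variation (which is invisible to weak convergence) by a supremum over continuous functions.

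The third step concludes the argument. For each fixed $f \in C(\mathcal{X})$ with $\|f\|_\infty \le 1$, the map $(P_0, P_1) \mapsto \int f\,dP_0 - \int f\,dP_1$ is continuous with respect to weak convergence, directly by the definition of weak convergence. Hence $d_{\mathrm{TV}}$, being a pointwise supremum of weakly continuous functionals, is weakly lower semi-continuous, and therefore $1 - d_{\mathrm{TV}}(P_0, P_1) = \int \min\{p_0, p_1\}\,dx$ is weakly upper semi-continuous, as claimed.

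I expect the crux — or rather the key insight that sidesteps the obstacle — to be the passage to the variational representation over continuous functions. A naive attempt to work directly with the densities $p_0, p_1$ fails, because weak convergence implies neither pointwise nor $L^1$ convergence of densities, so $\int |p_0 - p_1|$ cannot be controlled by manipulating densities. Recognizing the total-variation structure and invoking Riesz duality is precisely what converts the problem into one that weak convergence can handle. The one point I would verify with care is that the supremum over continuous functions with $\|f\|_\infty \le 1$ recovers the \emph{full} total variation norm (an exact equality, not merely a lower bound); this holds precisely because $\mathcal{X}$ is a compact metric space, where $C(\mathcal{X})$ is dense in $L^1(|P_0-P_1|)$.
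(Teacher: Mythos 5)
Your proof is correct, but it takes a genuinely different route from the paper's. You recognize the functional as $1 - d_{\mathrm{TV}}(P_0,P_1)$ via $\min\{a,b\}=\tfrac12(a+b-|a-b|)$, and then get weak lower semi-continuity of the total variation distance from its dual representation $\sup_{f\in C(\mathcal{X}),\,\|f\|_\infty\le 1}\big(\int f\,dP_0-\int f\,dP_1\big)$, each such functional being weakly continuous; the exactness of this representation (not just a lower bound) is indeed the point to verify, and it holds by Riesz duality together with regularity of finite Borel measures on a compact metric space (equivalently, a Lusin-type approximation of $\mathrm{sgn}(p_0-p_1)$ by continuous functions). The paper instead works directly with the functional itself: it shows by concavity and Jensen's inequality that $\int\min\{p_0,p_1\}\,dx=\inf_{\mathcal{A}}\sum_i\min\{P_0(\mathcal{A}_i),P_1(\mathcal{A}_i)\}$ over finite measurable partitions, then (via a Halmos approximation argument) shows the infimum may be restricted to partitions made of $P_0,P_1$-continuity sets, on which the Portmanteau theorem makes each term $\min\{P_0(\mathcal{A}_i),P_1(\mathcal{A}_i)\}$ continuous along weakly convergent sequences, so the infimum is upper semi-continuous. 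The two arguments are dual in flavor: you take a supremum-of-weakly-continuous-functionals representation of the complement, the paper takes an infimum-of-weakly-continuous-terms representation of the quantity itself. Yours is shorter and leans on standard functional analysis (Riesz representation); the paper's is longer but more self-contained, requiring only the Portmanteau theorem and measure-theoretic set approximation rather than duality on $C(\mathcal{X})$.
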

\begin{proof}
The proof of Lemma \ref{lemma:uppersemi} can be found in Appendix \ref{sec:prooflemma1}.
\end{proof}
	
With Lemma \ref{lemma:uppersemi}, we are ready to prove Theorem \ref{theorem:tractable}.
\begin{proof}[Proof sketch]
We first show that the solutions to $\sup_{P_0\in\mathcal{P}_0, P_1\in\mathcal{P}_1}\int\min\big\{p_0(x), p_1(x)\big\}dx$ exist and let $P_0^*, P_1^*$ denote the optimal solutions. We then show that there exist distributions $P_0^s, P_1^s$ supported on $s$ samples converging weakly to $P_0^*, P_1^*$ respectively, as $s\rightarrow \infty$. Thirdly, for a fixed $s$, we show that there exist distributions $P_0^{s,N}, P_1^{s, N}$ supported on $\{z_i\}_{i=1}^N$ converging weakly to $P_0^s, P_1^s$ almost surely as $N\rightarrow \infty$ and 
\begin{flalign}
\sum_{i=1}^N\min\big\{P_0^{s, N}(z_i), P_1^{s, N}(z_i)\big\} \geq \int\min\{p_0^*(x), p_1^*(x)\}dx.
\end{flalign} 
Moreover, we show that for any $\epsilon>0$, there exists large $s$ and $N$ such that the MMD between $P_0^{s,N}, P_1^{s, N}$ and $\hat{Q}^0_m, \hat{Q}^1_m$ can be bounded by $\theta+\epsilon$. Finally, by letting $\epsilon\rightarrow0$, we prove the convergence result in Theorem \ref{theorem:tractable}.

The full proof of Theorem \ref{theorem:tractable} can be found in Appendix \ref{sec:pfthm2}.
\end{proof}

Though the optimal solutions  $P_0^*, P_1^*$ for \eqref{eq:sion} is difficult to derive, \eqref{eq:tractable} provides a lower bound on the worst-case error probability, and is asymptotically accurate as $N\rightarrow \infty$.

\subsection{Convergence Rate}
In this section, we characterize the approximation error in Theorem \ref{theorem:tractable} in terms of $N$ if radial basis function (RBF) kernels \cite{vert2004primer} are used, i.e., $k(x, y) = \exp\big(-\frac{\|x-y\|^\rho_2}{2\sigma^2}\big)$, where $\rho >0$ is some constant. For example, when $\rho = 1$, $k(x, y)$ is the Laplacian kernel (exponential kernel); and when $\rho = 2$, $k(x, y)$ is the Gaussian kernel. A $\delta$-net of $\mathcal{X}$ is a set of points $\{z_i\}_{i=1}^N$ in $\mathcal{X}$ such that for any $z\in\mathcal{X}$, there exists some $z_i$ that satisfies $\|z-z_i\|_2^\rho\leq \delta$. From classic covering number results, we can construct a $\delta$-net where $\delta = \frac{\max_{z, z^\prime\in\mathcal{X}}\|z-z^\prime\|_2^\rho}{N^{\frac{1}{d}}}$ with the number of $N$ points.
Following the similar idea as in \cite{magesh2022robust}, we use $\{z_i\}_{i=1}^N$ to construct the support sample set. There exists a partition $\mathcal{A}_N = \{\mathcal{A}_N^1, \mathcal{A}_N^2, \cdots, \mathcal{A}_N^N\}$ based on the $\delta$-net such that $z_i\in \mathcal{A}_N^i$ and $\max_{z\in\mathcal{A}_N^i}\|z-z_i\|_2^\rho \leq \delta$.
We rewrite $f(\mathcal{P}_0, \mathcal{P}_1), f(\mathcal{P}_0^N, \mathcal{P}_1^N)$ in \eqref{eq:definef} as a function of $\theta$ and define
\begin{flalign}\label{eq:defineg}
g(\theta) = \sup_{\substack{P_0\in\mathcal{P}:\big\|\mu_{P_0} - \mu_{\hat{Q}^0_m}\big\|_\mathcal{H}\leq \theta \\P_1\in\mathcal{P}:\big\|\mu_{P_1} - \mu_{\hat{Q}^1_m}\big\|_\mathcal{H}\leq \theta}}\int\min\big\{p_0(x), p_1(x)\big\}dx,
\end{flalign} 
and
\begin{flalign}\label{eq:definegn}
g_N(\theta) = \sup_{\substack{P_0^N\in\mathcal{P}:\big\|\mu_{P_0^N} - \mu_{\hat{Q}^0_m}\big\|_\mathcal{H}\leq \theta \\ P_1^N\in\mathcal{P}:\big\|\mu_{P_1^N} - \mu_{\hat{Q}^1_m}\big\|_\mathcal{H}\leq \theta\\ P_0^N, P_1^N\ \text{are supported on}\ \{z_i\}_{i=1}^N}}\sum_{i=1}^N\min\big\{P_0^N(z_i), P_1^N(z_i)\big\}.
\end{flalign}
We note that for RBF kernels, $0\leq k(x,y)\leq 1$. Therefore, for any $P_0\in\mathcal{P}_0$, we have that $\big\|\mu_{P_0} - \mu_{\hat{Q}^0_m}\big\|_\mathcal{H} = \Big(E_{x\sim P_0, x^\prime\sim P_0}[k(x,x^\prime)] + E_{y\sim \hat{Q}^0_m, y^\prime\sim \hat{Q}^0_m}[k(y,y^\prime)] - 2E_{x\sim P_0, y\sim \hat{Q}^0_m}[k(x,y)]\Big)^{1/2}\leq \sqrt{2}.$ Similarly, we have that $\big\|\mu_{P_1} - \mu_{\hat{Q}^1_m}\big\|_\mathcal{H}\leq \sqrt{2}$ for any $P_1\in\mathcal{P}_1$. Therefore, it suffices to consider $\theta\in(0, \sqrt{2}]$.
We then have the following theorem that bounds the approximation error similar as in \cite{magesh2022robust}.
\begin{theorem}\label{theorem:rate}
	Let $\epsilon = \sqrt{2-2\exp\Big(-\frac{\delta}{2\sigma^2}\Big)}$. For any $\theta\in(0, \sqrt{2}]$, the approximation error satisfies $|g(\theta)-g_N(\theta)|\leq L\epsilon$, where $L$ is some constant.
\end{theorem}
\begin{proof}
Denote by $P_0^N, P_1^N$ discrete distributions with $P_0^N(z_i) = P_0^*(\mathcal{A}_N^i), P_1^N(z_i) = P_1^*(\mathcal{A}_N^i)$. Consider the MMD between $P_0^N$ and $P_0^*$, we then have that
\begin{flalign}\label{eq:errorn}
&\|\mu_{P_0^N}-\mu_{P_0^*}\|_{\mathcal{H}} \nn\\&= 
\sup_{h:\|h\|_\mathcal{H}\leq 1}\int hdP_0^N - \int hdP_0^*\nn\\&=\sup_{h:\|h\|_\mathcal{H}\leq 1}\sum_{i=1}^N\int_{\mathcal{A}_N^i}\big(h-h(z_i)\big)dP_0^*\nn\\&\leq\sup_{h:\|h\|_\mathcal{H}\leq 1}\sum_{i=1}^N\int_{\mathcal{A}_N^i}\max_{z\in\mathcal{A}_N^i}\big(h(z)-h(z_i)\big)dP_0^*\nn\\& \overset{(a)}{=}\sup_{h:\|h\|_\mathcal{H}\leq 1}\sum_{i=1}^N\int_{\mathcal{A}_N^i}\max_{z\in\mathcal{A}_N^i}\langle h, k(z,\cdot) - k(z_i, \cdot)\rangle_{\mathcal{H}} dP_0^*\nn\\ &\overset{(b)}{\leq}\sup_{h:\|h\|_\mathcal{H}\leq 1}\sum_{i=1}^N\int_{\mathcal{A}_N^i}\max_{z\in\mathcal{A}_N^i}\|h\|_{\mathcal{H}} \|k(z, \cdot)-k(z_j, \cdot)\|_{\mathcal{H}}dP_0^*\nn\\ &\leq \sup_{h:\|h\|_\mathcal{H}\leq 1}\sum_{i=1}^N\int_{\mathcal{A}_N^i} \|h\|_{\mathcal{H}}\max_{z \in\mathcal{A}_N^i} \sqrt{k(z, z)+k(z_i, z_i)-2k(z,z_i)}dP_0^* \nn\\& =\sum_{i=1}^N\int_{\mathcal{A}_N^i}\max_{z\in\mathcal{A}_N^i} \sqrt{k(z, z)+k(z_i, z_i)-2k(z, z_i)}dP_0^* \nn\\&= \sum_{i=1}^N\int_{\mathcal{A}_N^i} \sqrt{2-2\min_{z\in\mathcal{A}_N^i} k(z, z_i)}dP_0^*\nn\\&\overset{(c)}{\leq} \sqrt{2-2\exp\Big(-\frac{\delta}{2\sigma^2}\Big)},
\end{flalign}
where $(a)$ is from the reproducing property of the kernel, $(b)$ is from the Cauchy-Schwartz inequality and $(c)$ is due to the fact that $\min_{z\in\mathcal{A}_N^i} k(z, z_i)\geq \exp\Big(-\frac{\delta}{2\sigma^2}\Big)$.
Let $\epsilon = \sqrt{2-2\exp\Big(-\frac{\delta}{2\sigma^2}\Big)}$. We then have that $\|\mu_{P_0^N} - \mu_{\hat{Q}^0_m}\|_{\mathcal{H}}\leq \theta+\epsilon$ by the triangle inequality. Therefore, $P_0^N$ lies in the uncertainty set centered around $\hat{Q}^0_m$ with radius $\theta + \epsilon$. Similarly, for $P_1^N$ and $P_1^*$, we have the same result that $P_1^N$ lies in the uncertainty set centered around $\hat{Q}^1_m$ with radius $\theta + \epsilon$. From Jensen's inequality \cite{jensen1906inequality}, we have that 
\begin{flalign}
&\int\min\big\{p_0^*(x), p_1^*(x)\big\}dx \leq \sum_{i=1}^N \min\big\{P_0^N(z_i), P_1^N(z_i)\big\}.
\end{flalign}
Therefore, $g_N(\theta+\epsilon) \geq g(\theta)$.
We then have that 
\begin{flalign}
|g(\theta)-g_N(\theta)|&\leq  |g_N(\theta+\epsilon)-g(\theta)| + |g_N(\theta+\epsilon) - g_N(\theta)|\nn\\&\leq |g(\theta+\epsilon)-g(\theta)| + |g_N(\theta+\epsilon) - g_N(\theta)|,
\end{flalign}
where the first inequality is due to the fact that $g_N(\theta+\epsilon) \geq g(\theta)$ and $g(\theta + \epsilon)\geq g_N(\theta + \epsilon)$. 
We will then bound the first term $|g(\theta+\epsilon)-g(\theta)|$. Since $g(\theta)$ is concave when $\theta\in(0, \infty)$ (see \eqref{eq:concaveg} in Appendix \ref{sec:pfthm2} for the proof), there exists $L_1$ such that $g(\cdot)$ is $L_1$-Lipschitz on $[\theta, \sqrt{2}]$ \cite{roberts1974lipschitz}. Therefore, $|g(\theta+\epsilon)-g(\theta)|\leq L_1\epsilon$. Similarly, there exists $L_2$ such that $|g_N(\theta+\epsilon) - g_N(\theta)|\leq L_2\epsilon$. Therefore, $|g(\theta)-g_N(\theta)|\leq (L_1 + L_2)\epsilon$ for any $\theta\in(0, \sqrt{2}]$. Note that $L_1$ and $L_2$ may depend on $\theta$. 
\end{proof}
This result characterizes the convergence rate of the approximation error with respect to the support sample size $N$. In practice, we can choose a proper $N$ to control the approximation error. This result also reveals the relation between the data dimension $d$ and the number of support sample $N$. In high dimensional setting, we need a larger $N$ to achieve the same approximation error as in the low dimensional setting. Note that $N$ is the number of artificially generated samples, and therefore we could generate as many as we like at the price of increased computational cost for solving \eqref{eq:tractable}.	
	
\subsection{Robust Test via Kernel Smoothing}
	
	

The optimal solutions $P_0^*, P_1^*$ of \eqref{eq:sion}, and thus the likelihood ratio test between $P_0^*$ and $P_1^*$ are difficult to derive. Note that $P_0^{*, N}, P_1^{*, N}$ are optimal solutions to \eqref{eq:tractable}. The following proposition shows that the sequence $\{P^{*,N}_0, P^{*,N}_1\}_{N=1}^{\infty}$ converges weakly to an optimal solution of \eqref{eq:sion}, i.e., for all bounded and continuous functions $h$, $\lim_{N\rightarrow\infty}E_{P^{*,N}_0}[h] = E_{P_0^*}[h]$ and $\lim_{N\rightarrow\infty}E_{P^{*,N}_1}[h] = E_{P_1^*}[h]$. The fact that $P_0^{*, N}, P_1^{*, N}$ are reasonable approximations of $P_0^*, P_1^*$ as $N\rightarrow \infty$ further motivates our kernel smoothing method to design a robust test that generalizes to the entire alphabet in this section.
\begin{proposition}\label{proposition:weakcon}
	The sequence $\{P^{*,N}_0, P^{*,N}_1\}_{N=1}^{\infty}$ converges weakly to an optimal solution of \eqref{eq:sion}. 
\end{proposition}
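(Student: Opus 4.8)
The plan is to exploit the weak compactness of the space of probability measures on the compact set $\mathcal X$, together with the value convergence already established in Theorem \ref{theorem:tractable} and the upper semi-continuity of the overlap functional in Lemma \ref{lemma:uppersemi}. The strategy is to extract a weakly convergent subsequence, show its limit is feasible, and then sandwich the objective value between a feasibility lower bound and an upper-semicontinuity upper bound so that the limit is forced to be optimal.

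First I would observe that every $P_l^{*,N}$ is feasible for the original uncertainty set: it is supported on $\{z_i\}_{i=1}^N\subset\mathcal X$ and satisfies $d_{\text{MMD}}(P_l^{*,N},\hat{Q}^l_m)\le\theta$, so $P_l^{*,N}\in\mathcal{P}_l\subseteq\mathcal P$ for $l=0,1$ and all $N$. Since $\mathcal X$ is compact, $\mathcal P$ with the weak topology is sequentially compact (Prokhorov's theorem). Hence I can pass to a subsequence $\{N_k\}$ along which $P_0^{*,N_k}$ converges weakly to some $\bar P_0\in\mathcal P$, and then, along a further subsequence, $P_1^{*,N_k}$ converges weakly to some $\bar P_1\in\mathcal P$. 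To see that the limits stay in the uncertainty sets, I would use that the kernels under consideration metrize weak convergence by MMD, so $P\mapsto d_{\text{MMD}}(P,\hat{Q}^l_m)$ is weakly continuous and the closed ball $\{P:d_{\text{MMD}}(P,\hat{Q}^l_m)\le\theta\}$ is weakly closed; passing to the limit in $d_{\text{MMD}}(P_l^{*,N_k},\hat{Q}^l_m)\le\theta$ yields $\bar P_l\in\mathcal P_l$. Thus $(\bar P_0,\bar P_1)$ is feasible for \eqref{eq:sion}, giving the lower bound $\tfrac12\int\min\{\bar p_0,\bar p_1\}dx\le f(\mathcal P_0,\mathcal P_1)$.

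Next I would establish the matching upper bound. Writing the discrete overlap as $\sum_{i=1}^N\min\{P_0^{*,N}(z_i),P_1^{*,N}(z_i)\}=\int\min\{p_0^{*,N},p_1^{*,N}\}d\mu$ (the overlap being independent of the dominating measure), Lemma \ref{lemma:uppersemi} gives $\limsup_k\sum_i\min\{P_0^{*,N_k}(z_i),P_1^{*,N_k}(z_i)\}\le\int\min\{\bar p_0,\bar p_1\}dx$. Since $P_0^{*,N},P_1^{*,N}$ are the optimizers of \eqref{eq:tractable}, the left side equals $2f(\mathcal P_0^{N_k},\mathcal P_1^{N_k})$, which converges to $2f(\mathcal P_0,\mathcal P_1)$ by Theorem \ref{theorem:tractable}. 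Combining yields $f(\mathcal P_0,\mathcal P_1)\le\tfrac12\int\min\{\bar p_0,\bar p_1\}dx$, so with the feasibility bound this is an equality and $(\bar P_0,\bar P_1)$ attains the supremum in \eqref{eq:sion}.

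The main obstacle I anticipate is the measure-theoretic bookkeeping in the last step: making the overlap functional well defined simultaneously for the discrete measures $P_l^{*,N_k}$ and their (possibly continuous) weak limits under a single reference measure, and reconciling the \emph{almost-sure} convergence of Theorem \ref{theorem:tractable} with the random subsequence extraction. Both are resolved by working on the probability-one event on which Theorem \ref{theorem:tractable} holds and performing all extractions inside that event, so that the deterministic sandwich argument applies pathwise.
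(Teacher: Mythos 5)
Your proposal is correct and follows essentially the same route as the paper's proof: compactness of $\mathcal{P}_0\times\mathcal{P}_1$ under weak convergence yields a convergent subsequence whose limit stays feasible, and sandwiching the objective between the value convergence of Theorem \ref{theorem:tractable} and the upper semi-continuity of Lemma \ref{lemma:uppersemi} forces the limit to be optimal. Your extra care about the weak closedness of the MMD balls and about performing the extraction on the probability-one event of Theorem \ref{theorem:tractable} makes explicit details the paper leaves implicit, but the argument is the same.
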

\begin{proof}
Observe that for any $N$, $\{P^{*,N}_0, P^{*,N}_1\}$ lies in the compact set $\mathcal{P}_0 \times \mathcal{P}_1$. Assume $\{P^{*,N}_0, P^{*,N}_1\}$ does not converge weakly to the optimal solution $\{P_0^*, P_1^*\}$, then there exists a subsequence of  $\{P^{*,N}_0, P^{*,N}_1\}_{N=1}^\infty$ converges weakly but not to $\{P_0^*, P_1^*\}$ \cite[Chapter 5.1.1]{shapiro2021lectures}. 
Denote the sequence by $\{P_0^{*, N(t)}, P_1^{*, N(t)}\}_{t=1}^\infty$. Assume $\{P_0^{*, N(t)}, P_1^{*, N(t)}\}_{t=1}^\infty$ converges weakly to $\{P_0^\prime, P_1^\prime\}$. Since $\{P_0^\prime, P_1^\prime\}$ is not an optimal solution to \eqref{eq:sion}, we have that 
\begin{flalign}
\int\min\{p_0^\prime(x), p_1^\prime(x)\}dx < \int\min\{p_0^*(x), p_1^*(x)\}dx.
\end{flalign}
We then have that 
\begin{flalign}\label{eq:subseq}
	&\int\min\{p_0^*(x), p_1^*(x)\}dx\nn\\&= \lim_{t\rightarrow\infty}\sum_{i=1}^N\min\{P_0^{*, N(t)}(z_i), P_1^{*, N(t)}(z_i)\} \nn\\&\leq\int\min\{p_0^\prime(x), p_1^\prime(x)\}dx,
\end{flalign}
where the equality is from Theorem \ref{theorem:tractable} and the inequality is due to the upper semi-continuity of $\int\min\{p_0(x), p_1(x)\}dx$ in Lemma \ref{lemma:uppersemi}. This leads to a contradiction. Therefore, $\{P^{*,N}_0, P^{*,N}_1\}_{N=1}^\infty$ converges weakly an optimal solution of  \eqref{eq:sion}.
\end{proof}

Note that $P^{*,N}_0, P^{*,N}_1$ are convex combinations of Dirac measures. We then extend them to the whole space via kernel smoothing to approximate $P_0^*, P_1^*$, i.e.,
\begin{flalign}
&\widetilde{P}_0^*(x) = \sum_{i=1}^NP_0^{*,N}(z_i) k(x, z_i),\nn\\&\widetilde{P}_1^*(x) = \sum_{i=1}^NP_1^{*,N}(z_i) k(x, z_i).
\end{flalign}
The kernel functions have various choices. For example, the Gaussian kernel with bandwidth parameter $\sigma$: $k(x,y) = \frac{1}{\sqrt{2}\sigma} \exp\big(-\frac{\|x-y\|^2}{2\sigma^2}\big)$. After kernel smoothing, we define the likelihood ratio test $\tilde{\phi}$ between $\widetilde{P}_1^*(x)$ and $\widetilde{P}_0^*(x)$ over the whole space $\mathcal{X}$ as follows to approximate the optimal test:
\begin{flalign}\label{eq:smoothtest}
\tilde{\phi}(x)=\left\{\begin{array}{ll}
1, &\text{ if } \log\frac{\widetilde{P}_1^*(x)}{\widetilde{P}_0^*(x)}\geq 0 \\
0, &\text{ if } \log\frac{\widetilde{P}_1^*(x)}{\widetilde{P}_0^*(x)} < 0.
\end{array}\right.
\end{flalign}
When the testing sample size is $n$, after solving $P_0^{*,N}, P_1^{*,N}$, the computational complexity for implementing $\tilde{\phi}$ is $\mathcal{O}(nN)$. The numerical results in Section \ref{sec:simulation} show that $\tilde{\phi}$ performs well in practice, and is robust to model uncertainty.

\subsection{A Direct Robust Kernel Test}
In this section, we consider the problem of testing a sequence of samples $x^n$, where $n$ is the sample size. We propose a direct robust kernel test and further show that it is exponentially consistent as $n\rightarrow \infty$ under the Bayesian setting.

Motivated by the facts that the MMD can be used to measure the distance between distributions when the kernel $k$ is characteristic, we propose a direct robust kernel test as follows
\begin{flalign}\label{eq:bayesiantest}
\phi_B(x^n)=\left\{\begin{array}{ll}
1, &\text{ if } S(x^n)\geq \gamma \\
0, &\text{ if } S(x^n) < \gamma,
\end{array}\right.
\end{flalign}
where
\begin{flalign}
S(x^n) = \inf_{P\in\mathcal{P}_0}\big\|\mu_{\hat{P}_n} - \mu_{P}\big\|_\mathcal{H} - \inf_{P\in\mathcal{P}_1}\big\|\mu_{\hat{P}_n} - \mu_{P}\big\|_\mathcal{H},
\end{flalign} 
and $\gamma$ is a pre-specified threshold. In the construction of $S(x^n)$, we use ``inf" to tackle the uncertainty of distributions and compare the closest distance between the empirical distribution of samples and the two uncertainty sets. The test statistic involves two infinite-dimensional optimization problems, and thus is difficult to solve in general. In the following proposition, we show that it can actually be solved analytically in closed-form, and the computational complexity is $\mathcal O(m^2+n^2)$.

\begin{proposition}\label{proposition: tractable}
For any $l = 0, 1$, if $\big\|\mu_{\hat{P}_n} - \mu_{\hat{Q}^l_m}\big\|_\mathcal{H} > \theta$, then
\begin{flalign}
&\inf_{P\in\mathcal{P}_l}\big\|\mu_{\hat{P}_n} - \mu_{P}\big\|_\mathcal{H} = \big\|\mu_{\hat{P}_n} - \mu_{\hat{Q}^l_m}\big\|_\mathcal{H} - \theta \nn\\&= \Big(\frac{1}{n^2}\sum_{i=1}^n\sum_{j=1}^nk(x_i, x_j) + \frac{1}{m^2}\sum_{i=1}^m\sum_{j=1}^mk(\hat{x}_{l,i}, \hat{x}_{l,j}) - \frac{2}{nm}\sum_{i=1}^n\sum_{j=1}^mk(x_i, \hat{x}_{l,j})\Big)^{1/2} - \theta;
\end{flalign}
and otherwise, $\inf_{P\in\mathcal{P}_l}\big\|\mu_{\hat{P}_n} - \mu_{P}\big\|_\mathcal{H} = 0$.
\end{proposition}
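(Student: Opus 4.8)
The plan is to recognize this as a Euclidean projection problem in the RKHS $\mathcal{H}$. Since the embedding map $P \mapsto \mu_P$ is affine (because $\mu_{\lambda P + (1-\lambda)P'} = \lambda\mu_P + (1-\lambda)\mu_{P'}$), the image $\{\mu_P : P\in\mathcal{P}_l\}$ sits inside the closed ball $B = \{h\in\mathcal{H}: \|h - \mu_{\hat{Q}^l_m}\|_{\mathcal{H}} \leq \theta\}$, and $\inf_{P\in\mathcal{P}_l}\|\mu_{\hat{P}_n} - \mu_P\|_{\mathcal{H}}$ is essentially the distance from the fixed point $\mu_{\hat{P}_n}$ to this ball. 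First I would prove the lower bound by the reverse triangle inequality: for every feasible $P$, $\|\mu_{\hat{P}_n} - \mu_P\|_{\mathcal{H}} \geq \|\mu_{\hat{P}_n} - \mu_{\hat{Q}^l_m}\|_{\mathcal{H}} - \|\mu_P - \mu_{\hat{Q}^l_m}\|_{\mathcal{H}} \geq \|\mu_{\hat{P}_n} - \mu_{\hat{Q}^l_m}\|_{\mathcal{H}} - \theta$.

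For matching achievability in the case $\|\mu_{\hat{P}_n} - \mu_{\hat{Q}^l_m}\|_{\mathcal{H}} > \theta$, I would exhibit an explicit minimizer as a convex combination of the two empirical distributions. Setting $t = \theta / \|\mu_{\hat{P}_n} - \mu_{\hat{Q}^l_m}\|_{\mathcal{H}} \in (0,1)$ and $P^* = t\hat{P}_n + (1-t)\hat{Q}^l_m$, affinity of the embedding gives $\mu_{P^*} - \mu_{\hat{Q}^l_m} = t(\mu_{\hat{P}_n} - \mu_{\hat{Q}^l_m})$, hence $\|\mu_{P^*} - \mu_{\hat{Q}^l_m}\|_{\mathcal{H}} = \theta$ so $P^*\in\mathcal{P}_l$, while $\|\mu_{\hat{P}_n} - \mu_{P^*}\|_{\mathcal{H}} = (1-t)\|\mu_{\hat{P}_n} - \mu_{\hat{Q}^l_m}\|_{\mathcal{H}} = \|\mu_{\hat{P}_n} - \mu_{\hat{Q}^l_m}\|_{\mathcal{H}} - \theta$, meeting the lower bound. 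The closed-form kernel expression then follows by expanding $\|\mu_{\hat{P}_n} - \mu_{\hat{Q}^l_m}\|_{\mathcal{H}}^2$ through the reproducing property with $\mu_{\hat{P}_n} = \frac{1}{n}\sum_i k(x_i,\cdot)$ and $\mu_{\hat{Q}^l_m} = \frac{1}{m}\sum_i k(\hat{x}_{l,i},\cdot)$, producing exactly the three pairwise sums stated; the complexity $\mathcal{O}(m^2+n^2)$ is read off from counting the kernel terms (with the cross term $\mathcal{O}(nm)$). For the remaining case $\|\mu_{\hat{P}_n} - \mu_{\hat{Q}^l_m}\|_{\mathcal{H}} \leq \theta$, I would simply note that $\hat{P}_n$ is itself feasible, as it is a genuine probability measure on $\mathcal{X}$ satisfying the MMD constraint, so the infimum is attained at $P = \hat{P}_n$ with value $0$.

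The step I expect to require the most care is the achievability argument, and its subtlety is structural rather than computational: the Euclidean projection of $\mu_{\hat{P}_n}$ onto $B$ lies on the segment joining $\mu_{\hat{P}_n}$ to the center $\mu_{\hat{Q}^l_m}$, but a priori a point of $\mathcal{H}$ need not be the mean embedding of any distribution. What rescues the argument is that $\mathcal{P}$ is closed under convex combinations and the embedding is affine, so $t\hat{P}_n + (1-t)\hat{Q}^l_m$ is a bona fide admissible distribution whose embedding is precisely that projection; this is the only place where it matters that we are projecting within the convex cone of realizable mean embeddings and not merely in the ambient Hilbert space.
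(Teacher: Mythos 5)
Your proposal is correct and follows essentially the same route as the paper's proof: the reverse triangle inequality for the lower bound, the explicit minimizer $\lambda\hat{P}_n + (1-\lambda)\hat{Q}^l_m$ with $\lambda = \theta/\|\mu_{\hat{P}_n} - \mu_{\hat{Q}^l_m}\|_{\mathcal{H}}$ (identical to your $t$) together with linearity of the mean embedding for achievability, feasibility of $\hat{P}_n$ itself in the small-distance case, and expansion of the MMD for the closed form. Your closing remark on why the Hilbert-space projection is realizable as an embedding of an actual probability measure is a nice articulation of exactly what the paper's construction implicitly exploits.
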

\begin{proof}
The proof of Proposition \ref{proposition: tractable} can be found in Appendix \ref{sec:pfcompute}.
\end{proof}

In the following theorem, we show that with a proper choice of $\gamma$, $\phi_B$ is exponentially consistent.
\begin{theorem}\label{theorem:bayesopt}
1) If $\gamma\in \Big(-\big\|\mu_{\hat{Q}_m^0} - \mu_{\hat{Q}_m^1}\big\|_\mathcal{H} + 2\theta, \big\|\mu_{\hat{Q}_m^0} - \mu_{\hat{Q}_m^1}\big\|_\mathcal{H} - 2\theta\Big)$, $\phi_B$ is exponentially consistent.\\
2) $\phi_B$ can be equivalently written as
\begin{flalign}
\phi_B^\prime(x^n)=\left\{\begin{array}{ll}
1, &\text{if } \big\|\mu_{\hat{P}_n} - \mu_{\hat{Q}_m^0}\big\|_\mathcal{H} - \big\|\mu_{\hat{P}_n} - \mu_{\hat{Q}_m^1}\big\|_\mathcal{H}\geq \gamma \\
0, &\text{if } \big\|\mu_{\hat{P}_n} - \mu_{\hat{Q}_m^0}\big\|_\mathcal{H} - \big\|\mu_{\hat{P}_n} - \mu_{\hat{Q}_m^1}\big\|_\mathcal{H} < \gamma,
\end{array}\right.
\end{flalign}
and its computational complexity is $\mathcal O\big(m^2+n^2\big)$.
\end{theorem}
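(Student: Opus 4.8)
The plan is to prove the two claims in the reverse of the stated order: first establish the equivalent form $\phi_B'$ in part 2), which replaces the piecewise statistic $S(x^n)$ by the smooth difference of MMD distances, and then use this cleaner form to derive the two error exponents in part 1). Throughout I would write $a = \big\|\mu_{\hat{P}_n} - \mu_{\hat{Q}_m^0}\big\|_\mathcal{H}$, $b = \big\|\mu_{\hat{P}_n} - \mu_{\hat{Q}_m^1}\big\|_\mathcal{H}$, and $D = \big\|\mu_{\hat{Q}_m^0} - \mu_{\hat{Q}_m^1}\big\|_\mathcal{H}$, so that the non-overlap assumption reads $D > 2\theta$ and the admissible range of $\gamma$ is $(2\theta - D,\ D - 2\theta)$.

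For part 2), Proposition \ref{proposition: tractable} gives $S(x^n) = (a-\theta)^+ - (b-\theta)^+$, whereas $\phi_B'$ thresholds $a - b$. I would split into cases according to whether $\hat{P}_n$ lies in an uncertainty set. If $a > \theta$ and $b > \theta$ then $S(x^n) = a - b$ exactly, so $\phi_B$ and $\phi_B'$ coincide. If $a \le \theta$ (i.e.\ $\hat{P}_n \in \mathcal{P}_0$), the triangle inequality forces $b \ge D - \theta > \theta$, so $b \le \theta$ cannot co-occur; here $S(x^n) = \theta - b \le 2\theta - D < \gamma$ and likewise $a - b \le 2a - D \le 2\theta - D < \gamma$, so both tests output $0$. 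The case $b \le \theta$ is symmetric and makes both tests output $1$. Hence $\phi_B = \phi_B'$. The complexity claim follows by expanding each squared MMD via \eqref{eq:mmddef}: the three kernel sums cost $\mathcal{O}(n^2)$, $\mathcal{O}(m^2)$ and $\mathcal{O}(mn)$, so evaluating $a$ and $b$ is $\mathcal{O}(m^2 + n^2)$.

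For part 1), the key tool is the dimension-free concentration of the empirical kernel mean embedding: since $0 \le k \le K$, replacing one sample perturbs $\big\|\mu_{\hat{P}_n} - \mu_P\big\|_\mathcal{H}$ by at most $2\sqrt{K}/n$ and $\mE\big\|\mu_{\hat{P}_n} - \mu_P\big\|_\mathcal{H} \le \sqrt{K/n}$, so the bounded-difference (McDiarmid) inequality yields $\mP_P\big(\big\|\mu_{\hat{P}_n} - \mu_P\big\|_\mathcal{H} \ge t\big) \le \exp\!\big(-n(t - \sqrt{K/n})^2/(2K)\big)$ for $t > \sqrt{K/n}$. Working with $\phi_B'$, for any $P_0 \in \mathcal{P}_0$ the triangle inequality (using $\big\|\mu_{P_0} - \mu_{\hat{Q}_m^0}\big\|_\mathcal{H} \le \theta$ and $\big\|\mu_{P_0} - \mu_{\hat{Q}_m^1}\big\|_\mathcal{H} \ge D - \theta$) gives $a - b \le 2\big\|\mu_{\hat{P}_n} - \mu_{P_0}\big\|_\mathcal{H} + 2\theta - D$. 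Consequently the false-alarm event $\{a - b \ge \gamma\}$ is contained in $\big\{\big\|\mu_{\hat{P}_n} - \mu_{P_0}\big\|_\mathcal{H} \ge \delta_F\big\}$ with $\delta_F = (\gamma + D - 2\theta)/2$, which is strictly positive precisely because $\gamma > 2\theta - D$. Crucially $\delta_F$ is independent of $P_0$, so for all large $n$ the concentration bound gives $\mE_{P_0}[\phi_B(x^n)] \le \exp\!\big(-n\delta_F^2/(8K)\big)$ uniformly over $\mathcal{P}_0$, whence $e_F(\phi_B) \ge \delta_F^2/(8K) > 0$.

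The miss-detection exponent is entirely symmetric: for $P_1 \in \mathcal{P}_1$ the triangle inequality gives $a - b \ge D - 2\theta - 2\big\|\mu_{\hat{P}_n} - \mu_{P_1}\big\|_\mathcal{H}$, so $\{a - b < \gamma\} \subseteq \big\{\big\|\mu_{\hat{P}_n} - \mu_{P_1}\big\|_\mathcal{H} > \delta_M\big\}$ with $\delta_M = (D - 2\theta - \gamma)/2 > 0$ (positive because $\gamma < D - 2\theta$), giving $e_M(\phi_B) \ge \delta_M^2/(8K) > 0$. The main obstacle is not any single estimate but ensuring the deviation radii $\delta_F, \delta_M$ are uniform over the entire uncertainty sets and independent of the unknown data-generating distribution; this is exactly where the distribution-free, dimension-free MMD concentration is indispensable, and where collapsing the two-sided triangle-inequality bounds onto a single scalar radius makes the argument close.
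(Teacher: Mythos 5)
Your proposal is correct, and for part 2) it follows essentially the same route as the paper: the same three-case split (both distances exceed $\theta$; $\hat{P}_n\in\mathcal{P}_0$; $\hat{P}_n\in\mathcal{P}_1$), with Proposition \ref{proposition: tractable} collapsing $S(x^n)$ to $a-b$ in the first case and the non-overlap condition forcing agreement of the two tests in the other two, plus the same kernel-sum expansion for the $\mathcal O(m^2+n^2)$ complexity. For part 1), however, you take a genuinely different route. The paper works directly with $\phi_B$ and invokes Sanov's theorem: the type-\uppercase\expandafter{\romannumeral1} (resp.\ type-\uppercase\expandafter{\romannumeral2}) exponent is lower-bounded by $\inf_{P^\prime\in\Gamma_0}D(P^\prime\|P_0)$ (resp.\ $\inf_{P^\prime\in\Gamma_1}D(P^\prime\|P_1)$), and positivity follows from showing $P_0\notin\Gamma_0$ and $P_1\notin\Gamma_1$ via the same triangle-inequality computations you use (this step implicitly also needs lower semi-continuity of KL divergence and compactness of $\Gamma_0,\Gamma_1$ so that the infimum is attained away from $P_0$). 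You instead prove part 2) first and then apply McDiarmid/bounded-difference concentration of $\big\|\mu_{\hat{P}_n}-\mu_{P}\big\|_{\mathcal H}$ around its mean, reducing each error event to a single deviation event $\big\{\big\|\mu_{\hat{P}_n}-\mu_{P_l}\big\|_{\mathcal H}\geq\delta\big\}$ with a radius $\delta_F=(\gamma+D-2\theta)/2$ or $\delta_M=(D-2\theta-\gamma)/2$ that is uniform over the uncertainty set. What each approach buys: Sanov gives the exponent as a (tight but implicit) KL-divergence optimization with no closed form; your concentration argument yields explicit, non-asymptotic, distribution-free bounds valid uniformly over $\mathcal{P}_0,\mathcal{P}_1$, and in fact it is exactly the technique the paper itself deploys in Proposition \ref{proposition: worstbound} for the special case $\gamma=0$ — so your proof effectively extends that proposition to every admissible $\gamma$ and derives exponential consistency as a corollary, while also sidestepping the large-deviations machinery (and its topological prerequisites) entirely.
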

\begin{proof}
The proof of Theorem \ref{theorem:bayesopt} can be found in Appendix \ref{sec:pfdirect}.
\end{proof}
It can be seen that the direct test for robust hypothesis testing naturally reduces to comparing the MMD distance between the empirical distribution of samples and two centers of uncertainty sets which is computationally efficient.	
The exponential consistency of $\phi_B$ implies that the error probabilities decay exponentially fast with the sample size $n$. In practice, we can choose a proper threshold to balance the trade-off between the two types of errors.

The error exponent in Theorem \ref{theorem:bayesopt} is in an asymptotic sense, and is in the form of an optimization problem without a closed-form solution. In the following proposition, we consider a special case with $\gamma = 0$ and derive the closed-form upper bound of the worst-case error probabilities. 
\begin{proposition}\label{proposition: worstbound}
Set $\gamma = 0$ in \eqref{eq:bayesiantest}. Then, the worst-case type-\uppercase\expandafter{\romannumeral1} and type-\uppercase\expandafter{\romannumeral2} errors can be bounded as follows,
\begin{flalign}\label{eq:type1bound}
&\sup_{P_0\in\mathcal{P}_0}E_{P_0}[\phi_B(x^n)]\leq \exp\Bigg(-\frac{n\Big(\big\|\mu_{\hat{Q}_m^1} - \mu_{\hat{Q}_m^0}\big\|_\mathcal{H}^2-2\theta\big\|\mu_{\hat{Q}_m^1} - \mu_{\hat{Q}_m^0}\big\|_\mathcal{H}\Big)^2}{8K^2}\Bigg)
\end{flalign}
and 
\begin{flalign}\label{eq:type2bound}
&\sup_{P_1\in\mathcal{P}_1}E_{P_1}[1-\phi_B(x^n)]\leq \exp\Bigg(-\frac{n\Big(\big\|\mu_{\hat{Q}_m^1} - \mu_{\hat{Q}_m^0}\big\|_\mathcal{H}^2-2\theta\big\|\mu_{\hat{Q}_m^1} - \mu_{\hat{Q}_m^0}\big\|_\mathcal{H}\Big)^2}{8K^2}\Bigg).
\end{flalign}
\end{proposition}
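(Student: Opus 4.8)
The plan is to exploit the fact that at $\gamma = 0$ the decision statistic becomes affine in the empirical mean embedding $\mu_{\hat P_n}$, which reduces the tail bound to a direct application of Hoeffding's inequality for a sum of i.i.d.\ bounded random variables. Write $\rho \triangleq \|\mu_{\hat Q_m^1} - \mu_{\hat Q_m^0}\|_\mathcal{H}$, so the non-overlap assumption reads $\theta < \rho/2$ and the target exponent is $n(\rho^2 - 2\theta\rho)^2/(8K^2)$. First I would invoke part 2) of Theorem \ref{theorem:bayesopt}: since $\theta < \rho/2$, the value $\gamma = 0$ lies in the admissible interval, hence $\phi_B$ coincides with $\phi_B^\prime$. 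Because both MMD terms are nonnegative, the event $\{\phi_B(x^n) = 1\}$ equals $\{T(x^n) \ge 0\}$, where $T(x^n) \triangleq \|\mu_{\hat P_n} - \mu_{\hat Q_m^0}\|_\mathcal{H}^2 - \|\mu_{\hat P_n} - \mu_{\hat Q_m^1}\|_\mathcal{H}^2$; that is, I pass to the difference of \emph{squared} MMDs, which leaves the test unchanged at threshold $0$.

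The key step is to expand $T$ and observe that the quadratic terms in $\mu_{\hat P_n}$ cancel, leaving $T(x^n) = 2\langle \mu_{\hat P_n}, \mu_{\hat Q_m^1} - \mu_{\hat Q_m^0}\rangle_\mathcal{H} + \|\mu_{\hat Q_m^0}\|_\mathcal{H}^2 - \|\mu_{\hat Q_m^1}\|_\mathcal{H}^2$. Setting $h = \mu_{\hat Q_m^1} - \mu_{\hat Q_m^0}$ and using the reproducing property, $\langle \mu_{\hat P_n}, h\rangle_\mathcal{H} = \frac{1}{n}\sum_{i=1}^n h(x_i)$ with $h(x) = \frac{1}{m}\sum_{j=1}^m \big(k(x, \hat x_{1,j}) - k(x, \hat x_{0,j})\big)$. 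Thus $T$ is an affine function of the i.i.d.\ sum $\frac{1}{n}\sum_i h(x_i)$, and the bounded-kernel assumption $0 \le k \le K$ gives $h(x) \in [-K, K]$ for every $x$, i.e.\ each summand has range at most $2K$.

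I would then compute the mean under any $P_0 \in \mathcal{P}_0$: the same expansion gives $E_{P_0}[T(x^n)] = \|\mu_{P_0} - \mu_{\hat Q_m^0}\|_\mathcal{H}^2 - \|\mu_{P_0} - \mu_{\hat Q_m^1}\|_\mathcal{H}^2$. Using $\|\mu_{P_0} - \mu_{\hat Q_m^0}\|_\mathcal{H} \le \theta$ and the reverse triangle inequality $\|\mu_{P_0} - \mu_{\hat Q_m^1}\|_\mathcal{H} \ge \rho - \theta$, this expectation is at most $\theta^2 - (\rho - \theta)^2 = -(\rho^2 - 2\theta\rho) < 0$, where negativity uses $\theta < \rho/2$. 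Hence $\{T \ge 0\}$ forces the sum $\frac{1}{n}\sum_i h(x_i)$ to exceed its mean by at least $\tfrac{1}{2}(\rho^2 - 2\theta\rho)$, a deviation bound that is uniform over $P_0 \in \mathcal{P}_0$. Applying Hoeffding's inequality to the $n$ i.i.d.\ summands of range $2K$ and then taking the supremum over $\mathcal{P}_0$ yields \eqref{eq:type1bound}. The bound \eqref{eq:type2bound} follows by the mirror-image argument: for $P_1 \in \mathcal{P}_1$ one finds $E_{P_1}[T(x^n)] \ge (\rho - \theta)^2 - \theta^2 = \rho^2 - 2\theta\rho > 0$, so $\{T < 0\}$ is a lower-tail deviation of the same magnitude and the identical Hoeffding estimate applies.

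The only genuinely non-routine move is the first one --- squaring the statistic so that the quadratic RKHS terms cancel and $T$ becomes linear in $\mu_{\hat P_n}$, turning the problem into a scalar i.i.d.\ concentration question. Everything downstream (the mean identity, the triangle-inequality bounds on it, and the Hoeffding step) is mechanical. It is worth noting that this linearization is exactly what produces the $K^2$ (rather than $K$) in the denominator: bounding the non-squared statistic directly would instead control $\|\mu_{\hat P_n} - \mu_{P_l}\|_\mathcal{H}$ through a bounded-difference argument scaling in $\sqrt{K}$, and would not give this clean exponent.
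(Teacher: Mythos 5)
Your proof is correct and takes essentially the same route as the paper's: both reduce to the equivalent test $\phi_B^\prime$, pass to the difference of squared MMDs so the quadratic terms in $\mu_{\hat{P}_n}$ cancel, bound the mean uniformly over the uncertainty set via triangle inequalities (arriving at the same constant $\big\|\mu_{\hat{Q}_m^1}-\mu_{\hat{Q}_m^0}\big\|_\mathcal{H}^2-2\theta\big\|\mu_{\hat{Q}_m^1}-\mu_{\hat{Q}_m^0}\big\|_\mathcal{H}$), and then apply a concentration inequality with bounded-difference constant $4K/n$ per sample. The paper phrases the last step as McDiarmid's inequality applied to the kernel sum $F(x_1,\dots,x_n)$, but since that $F$ is exactly an i.i.d.\ average of your bounded function $h$, this coincides with your Hoeffding step and yields the identical exponent.
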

\begin{proof}
	The proof of Proposition \ref{proposition: worstbound} can be found in Appendix \ref{sec:pfworstbound}.
\end{proof}

In Proposition \ref{proposition: worstbound}, we provide an upper bound on the worst-case error probability of $\phi_B$ when $\gamma = 0$. It can be seen that the error probabilities decay exponentially fast with an exponent of $\Big(\big\|\mu_{\hat{Q}_m^1} - \mu_{\hat{Q}_m^0}\big\|_\mathcal{H}^2-2\theta\big\|\mu_{\hat{Q}_m^1} - \mu_{\hat{Q}_m^0}\big\|_\mathcal{H}\Big)^2/8K^2$, which validates the fact that $\phi_B$ is exponentially consistent. Moreover, the decay rate is a function of the radius $\theta$ and the MMD distance between centers of two uncertainty sets. When the centers of two uncertainty sets are fixed, the upper bound on the error probabilities will increase with the radius $\theta$. Proposition \ref{proposition: worstbound} provides a closed-form non-asymptotic upper bound on the worst-case error probability. In practice, this upper bound can be used to evaluate the worst-case risk of implementing $\phi_B$ for a finite sample size $n$. Moreover, combining Theorem \ref{theorem:tractable} and Theorem \ref{theorem:bayesopt}, the performance gap between $\phi_B$ and the optimal test can be approximated.

\section{Robust Hypothesis Testing under Neyman-Pearson Setting}\label{sec:npsetting}
In this section, we focus on the Neyman-Pearson setting. We propose a robust kernel test, and show that it is asymptotically optimal under the Neyman-Pearson setting. The results in this section also hold for $\mathcal{X} = \mathbb{R}^d$.

\subsection{Universal Upper Bound on the Worst-Case Error Exponent}\label{sec:upperbound}
In this section, we derive the universal upper bound on the error exponent for the problem in \eqref{eq:goal}. The following proposition is a robust version of the Chernoff-Stein lemma \cite{cover2006information, dembo2009large}.
\begin{proposition}\label{proposition:upper}
Consider the robust hypothesis testing problem in \eqref{eq:goal}, we have that 
\begin{flalign}
&\sup_{\phi:P_F(\phi)\leq \alpha} \inf_{P_1\in\mathcal{P}_1}\lim_{n\rightarrow\infty} -\frac{1}{n} \log E_{P_1}[1-\phi(x^n)]\nn\\&\leq \inf_{P_0\in\mathcal{P}_0, P_1\in\mathcal{P}_1}D(P_0\|P_1). 
\end{flalign}
\end{proposition}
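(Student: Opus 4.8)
The plan is to reduce the robust (composite) testing problem to a family of \emph{simple} binary hypothesis tests, one for each fixed pair $(P_0,P_1)\in\mathcal{P}_0\times\mathcal{P}_1$, and then apply the converse direction of the classical Chernoff-Stein lemma to each such pair. The structural observation that makes this work is that both the false-alarm constraint and the worst-case miss-detection objective only weaken when the uncertainty sets are replaced by a single distribution drawn from each of them; consequently no property of the MMD sets is used, which is exactly why the bound holds for arbitrary $\mathcal{P}_0,\mathcal{P}_1$.

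First I would fix an arbitrary test $\phi$ with $P_F(\phi)=\sup_{P_0\in\mathcal{P}_0}E_{P_0}[\phi(x^n)]\leq\alpha$, together with an arbitrary pair $P_0\in\mathcal{P}_0$, $P_1\in\mathcal{P}_1$. Because $P_0$ is one of the distributions over which the supremum in $P_F$ is taken, the constraint immediately yields the single-distribution false-alarm bound $E_{P_0}[\phi(x^n)]\leq\alpha$. Likewise, since $P_1\in\mathcal{P}_1$ is one of the distributions in the infimum defining the worst-case miss-detection exponent, we have
\[
\inf_{P_1'\in\mathcal{P}_1}\lim_{n\to\infty}-\tfrac1n\log E_{P_1'}[1-\phi(x^n)]\ \leq\ \lim_{n\to\infty}-\tfrac1n\log E_{P_1}[1-\phi(x^n)].
\]

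Next I would invoke the converse part of the Chernoff-Stein lemma for the simple problem $H_0:P_0$ versus $H_1:P_1$. Reading $E_{P_0}[\phi(x^n)]$ and $E_{P_1}[1-\phi(x^n)]$ as the type-\uppercase\expandafter{\romannumeral1} and type-\uppercase\expandafter{\romannumeral2} errors of $\phi$ for this simple problem, Chernoff-Stein states that for any fixed level $\alpha\in(0,1)$ the largest achievable type-\uppercase\expandafter{\romannumeral2} error exponent under $E_{P_0}[\phi(x^n)]\leq\alpha$ equals $D(P_0\|P_1)$; in particular no admissible $\phi$ can exceed it, so $\lim_{n\to\infty}-\tfrac1n\log E_{P_1}[1-\phi(x^n)]\leq D(P_0\|P_1)$ (interpreting the quantity as a $\limsup$ should the limit not be assumed to exist). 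Chaining this with the previous display gives, for every valid $\phi$ and every pair $(P_0,P_1)$,
\[
\inf_{P_1'\in\mathcal{P}_1}\lim_{n\to\infty}-\tfrac1n\log E_{P_1'}[1-\phi(x^n)]\ \leq\ D(P_0\|P_1).
\]

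Finally I would take the infimum over $(P_0,P_1)\in\mathcal{P}_0\times\mathcal{P}_1$ on the right-hand side (the inequality holds for every such pair) and then the supremum over all admissible $\phi$ on the left-hand side, which delivers the claimed bound. The set-theoretic reductions in the first two steps are essentially bookkeeping, so I expect the \emph{main technical point} to be the application of the Chernoff-Stein converse: one must take $\alpha$ strictly below $1$ (otherwise the degenerate test $\phi\equiv 1$ has zero miss detection and an infinite exponent, breaking the bound), and, if the exponent is defined through a limit not known a priori to exist, phrase the Chernoff-Stein estimate via $\limsup$ and observe it still controls the quantity in the statement. This is where the information-theoretic content enters, and where it becomes transparent that the argument is agnostic to how $\mathcal{P}_0$ and $\mathcal{P}_1$ are constructed.
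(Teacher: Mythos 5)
Your proposal is correct and follows essentially the same route as the paper: fix a pair $(P_0,P_1)\in\mathcal{P}_0\times\mathcal{P}_1$, observe that $\{\phi: P_F(\phi)\leq\alpha\}\subseteq\{\phi: E_{P_0}[\phi(x^n)]\leq\alpha\}$, apply the Chernoff--Stein converse to the resulting simple test, and take the infimum over pairs at the end. Your added caveats (requiring $\alpha<1$, since $\phi\equiv 1$ breaks the bound at $\alpha=1$, and reading the exponent as a $\limsup$ when the limit is not known to exist) are sensible refinements of edge cases the paper glosses over, but they do not change the argument.
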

\begin{proof}
	For any $P_0\in\mathcal{P}_0, P_1 \in \mathcal{P}_1$, from the Chernoff-Stein lemma \cite{cover2006information, dembo2009large}, we have that 
	\begin{flalign}
	\sup_{\phi:E_{P_0}[\phi(x^n)]\leq \alpha}\lim_{n\rightarrow\infty} -\frac{1}{n} \log E_{P_1}[1-\phi(x^n)] = D(P_0\|P_1).
	\end{flalign}
	Since $P_F(\phi) \triangleq \sup_{P_0\in\mathcal{P}_0}E_{P_0}[\phi(x^n)]$, we have that $\{\phi:P_F(\phi)\leq \alpha\}\subseteq\{\phi:E_{P_0}[\phi(x^n)]\leq \alpha\}$. Therefore, for any $P_0\in\mathcal{P}_0, P_1 \in \mathcal{P}_1$, we have that 
	\begin{flalign}\label{eq:random}
	&\sup_{\phi:P_F(\phi)\leq \alpha} \inf_{P_1\in\mathcal{P}_1}\lim_{n\rightarrow\infty} -\frac{1}{n} \log E_{P_1}[1-\phi(x^n)]\nn\\&\leq\sup_{\phi:E_{P_0}[\phi(x^n)]\leq \alpha}\lim_{n\rightarrow\infty} -\frac{1}{n} \log E_{P_1}[1-\phi(x^n)]\nn\\ &= D(P_0\|P_1).
	\end{flalign}
	The solutions to $\inf_{P_0\in\mathcal{P}_0, P_1\in\mathcal{P}_1}D(P_0\|P_1)$ exist since $D(P_0\|P_1)$ is lower semi-continuous and lower semi-continuous functions attain its infimum on a compact set. Since \eqref{eq:random} holds for any $P_0\in\mathcal{P}_0, P_1 \in \mathcal{P}_1$, we then have that
	\begin{flalign}
	&\sup_{\phi:P_F(\phi)\leq \alpha} \inf_{P_1\in\mathcal{P}_1}\lim_{n\rightarrow\infty} -\frac{1}{n} \log E_{P_1}[1-\phi(x^n)]\nn\\&\leq \inf_{P_0\in\mathcal{P}_0, P_1\in\mathcal{P}_1}D(P_0\|P_1).
	\end{flalign}
\end{proof}

Proposition \ref{proposition:upper} implies that for any test, the achievable error exponent is no better than $\inf_{P_0\in\mathcal{P}_0, P_1\in\mathcal{P}_1}D(P_0\|P_1)$. This theorem also applies to  robust hypothesis testing problems with different uncertainty sets.

\subsection{Asymptotically Optimal Robust Kernel Test}\label{sec:test}
In this section, we propose a robust kernel test for the problem in \eqref{eq:goal}, and further prove that it is asymptotically optimal.

Motivated by the fact that when the kernel $k$ is characteristic, the MMD is a metric and can be used to measure the distance between distributions and the kernel test is asymptotically optimal under the Neyman-Pearson setting for the two-sample test problem \cite{zhu2021asymptotically}, we design our robust kernel test as follows:
\begin{flalign}\label{eq:kerneltest}
\phi_N(x^n)=\left\{\begin{array}{ll}
1, &\text{ if } \inf_{P\in\mathcal{P}_0}\big\|\mu_{\hat{P}_n} - \mu_{P}\big\|_\mathcal{H} > \gamma_n \\
0, &\text{ if } \inf_{P\in\mathcal{P}_0}\big\|\mu_{\hat{P}_n} - \mu_{P}\big\|_\mathcal{H}\leq \gamma_n,
\end{array}\right.
\end{flalign}
where $\gamma_n= \sqrt{2K/n}\big(1+\sqrt{-\log\alpha}\big)$ is chosen to satisfy the false alarm constraint that $P_F(\phi_N)\leq \alpha$, and $\hat{P}_n$ is the empirical distribution of $x^n$. We use ``inf" in the test statistic to tackle the uncertainty of distributions. A heuristic explanation of our test is that we use the closest distance between the empirical distribution of the test samples and the uncertainty set $\mathcal{P}_0$.
Our test statistic does not depend on $\mathcal P_1$, but later we will show that it is asymptotically optimal under the Neyman-Pearson setting, i.e., solves the problem in \eqref{eq:goal}.


From Proposition \ref{proposition: tractable}, we have that the test statistic of our robust kernel test can be solved analytically in closed-form with a computational complexity of $\mathcal{O}(m^2+n^2)$. We then show that the kernel robust test in \eqref{eq:kerneltest} is asymptotically optimal for the problem in \eqref{eq:goal} in the following theorem, i.e., it achieves the universal upper bound on the worst-case error exponent in Proposition \ref{proposition:upper}. 
\begin{theorem}\label{theorem:error}
The robust kernel test in \eqref{eq:kerneltest} is asymptotically optimal under Neyman-Pearson setting:\\
1) under $H_0$, 
\begin{flalign}
\sup_{P_0\in\mathcal{P}_0}E_{P_0}[\phi_N(x^n)] = \sup_{P_0\in\mathcal{P}_0}P_0 \Big(\inf_{P\in\mathcal{P}_0}\big\|\mu_{\hat{P}_n} - \mu_{P}\big\|_\mathcal{H} > \gamma_n\Big)\leq \alpha;
\end{flalign}
and 2) under $H_1$, 
\begin{flalign}
&\inf_{P_1\in\mathcal{P}_1}\lim_{n\rightarrow\infty}-\frac{1}{n}\log E_{P_1}[1-\phi_N(x^n)]\nn\\& = \inf_{P_1\in\mathcal{P}_1}\lim_{n\rightarrow\infty}-\frac{1}{n}\log P_1 \Big(\inf_{P\in\mathcal{P}_0}\big\|\mu_{\hat{P}_n} - \mu_{P}\big\|_\mathcal{H}\leq \gamma_n\Big)\nn\\&=\sup_{\phi:P_F(\phi)\leq \alpha} \inf_{P_1\in\mathcal{P}_1}\lim_{n\rightarrow\infty} -\frac{1}{n} \log E_{P_1}[1-\phi(x^n)]\nn\\&= \inf_{P_0\in\mathcal{P}_0,P_1\in\mathcal{P}_1} D(P_0\|P_1).
\end{flalign}	
\end{theorem}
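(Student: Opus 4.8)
The argument splits into the two displayed claims. For part 1) (the false-alarm constraint) I would exploit the elementary containment of events, and for part 2) (the type-\uppercase\expandafter{\romannumeral2} exponent) I would combine a Sanov-type large-deviations analysis with the converse already established in Proposition \ref{proposition:upper}.

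\textbf{Part 1) (feasibility).} The key observation is that for any $P_0\in\mathcal{P}_0$, the distribution $P_0$ is itself feasible for the inner infimum, so $\inf_{P\in\mathcal{P}_0}\|\mu_{\hat{P}_n}-\mu_P\|_{\mathcal{H}}\le\|\mu_{\hat{P}_n}-\mu_{P_0}\|_{\mathcal{H}}$. Hence the rejection event is contained in $\{\|\mu_{\hat{P}_n}-\mu_{P_0}\|_{\mathcal{H}}>\gamma_n\}$, and it suffices to control the deviation of the empirical mean embedding from its population value. Using the bounded-kernel estimate $\mE[\|\mu_{\hat{P}_n}-\mu_{P_0}\|_{\mathcal{H}}]\le\sqrt{K/n}$ together with McDiarmid's inequality applied to $x^n\mapsto\|\mu_{\hat{P}_n}-\mu_{P_0}\|_{\mathcal{H}}$ (which has bounded differences $2\sqrt{K}/n$, since $\|k(x,\cdot)\|_{\mathcal{H}}=\sqrt{k(x,x)}\le\sqrt{K}$), I obtain $P_0(\|\mu_{\hat{P}_n}-\mu_{P_0}\|_{\mathcal{H}}>\sqrt{K/n}+t)\le\exp(-nt^2/(2K))$. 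Choosing $t=\sqrt{2K/n}\sqrt{-\log\alpha}$ and using $\sqrt{K/n}\le\sqrt{2K/n}$ gives $\sqrt{K/n}+t\le\gamma_n$, so the rejection probability is at most $\alpha$ uniformly over $P_0\in\mathcal{P}_0$, which is exactly $P_F(\phi_N)\le\alpha$.

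\textbf{Part 2) (achievability of the exponent).} I would rewrite the miss event as $\{\hat{P}_n\in\Gamma_0^{(n)}\}$, where $\Gamma_0^{(n)}=\{Q:\inf_{P\in\mathcal{P}_0}\|\mu_Q-\mu_P\|_{\mathcal{H}}\le\gamma_n\}$ is an MMD-neighborhood of $\mathcal{P}_0$ that shrinks to $\mathcal{P}_0$ since $\gamma_n\to0$. Because the kernels considered metrize weak convergence, these are closed sets in the weak topology and Sanov's theorem applies. For each fixed $\gamma>0$, writing $\Gamma_0(\gamma)=\{Q:\inf_{P\in\mathcal{P}_0}\|\mu_Q-\mu_P\|_{\mathcal{H}}\le\gamma\}$, we have $\Gamma_0^{(n)}\subseteq\Gamma_0(\gamma)$ for all large $n$, so the Sanov upper bound yields $\liminf_n-\tfrac1n\log P_1(\hat{P}_n\in\Gamma_0^{(n)})\ge\inf_{Q\in\Gamma_0(\gamma)}D(Q\|P_1)$. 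Letting $\gamma\downarrow0$, the right-hand side is nondecreasing and bounded above by $\inf_{Q\in\mathcal{P}_0}D(Q\|P_1)$; to see the limit is attained I take near-minimizers $Q_\gamma\in\Gamma_0(\gamma)$, extract a weakly convergent subsequence by compactness of $\mathcal{P}$, note its limit lies in $\mathcal{P}_0$ (closedness, since MMD metrizes weak convergence), and invoke lower semicontinuity of $D(\cdot\|P_1)$ to conclude $\lim_{\gamma\to0}\inf_{Q\in\Gamma_0(\gamma)}D(Q\|P_1)=\inf_{Q\in\mathcal{P}_0}D(Q\|P_1)$. Taking $\inf_{P_1\in\mathcal{P}_1}$ then gives the achievability bound $\inf_{P_1\in\mathcal{P}_1}(\text{exponent})\ge\inf_{P_0\in\mathcal{P}_0,P_1\in\mathcal{P}_1}D(P_0\|P_1)$.

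\textbf{Combining the bounds.} The converse is already available: by part 1) the test is feasible, so Proposition \ref{proposition:upper} gives $\inf_{P_1\in\mathcal{P}_1}(\text{exponent of }\phi_N)\le\sup_{\phi:P_F(\phi)\le\alpha}\inf_{P_1\in\mathcal{P}_1}(\text{exponent})\le\inf_{P_0\in\mathcal{P}_0,P_1\in\mathcal{P}_1}D(P_0\|P_1)$. Together with the achievability bound this forces equality throughout the displayed chain, and in particular shows that $\phi_N$ attains the universal upper bound and is therefore asymptotically optimal. I expect the main obstacle to be the large-deviations step: making Sanov's theorem rigorous in the weak topology for the $n$-dependent neighborhoods $\Gamma_0^{(n)}$, and justifying the interchange of the limit $\gamma\to0$ with the infimum of the lower-semicontinuous relative entropy. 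This is precisely where compactness of $\mathcal{P}$ and the metrization of weak convergence by MMD carry the argument.
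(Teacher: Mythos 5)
Your proposal is correct, and its overall architecture coincides with the paper's proof: part 1) via the containment $\{\inf_{P\in\mathcal{P}_0}\|\mu_{\hat{P}_n}-\mu_P\|_{\mathcal{H}}>\gamma_n\}\subseteq\{\|\mu_{\hat{P}_n}-\mu_{P_0}\|_{\mathcal{H}}>\gamma_n\}$ plus an MMD concentration bound, and part 2) via Sanov's theorem on the closed MMD-neighborhood $\Gamma_0(\gamma)$, a limiting argument as $\gamma\downarrow 0$ using lower semicontinuity of $D(\cdot\|P_1)$, and the converse from Proposition \ref{proposition:upper}. Two steps differ in execution. First, in part 1) you re-derive the concentration inequality from scratch (expectation bound $\sqrt{K/n}$ plus McDiarmid with bounded differences $2\sqrt{K}/n$), whereas the paper simply invokes the cited Lemma \ref{lemma:empiricalmmd}; your derivation is exactly the standard proof of that lemma, so this is a matter of self-containedness, not substance. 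Second, and more interestingly, your treatment of the limit $\lim_{\gamma\to 0}\inf_{Q\in\Gamma_0(\gamma)}D(Q\|P_1)=\inf_{Q\in\mathcal{P}_0}D(Q\|P_1)$ is genuinely different: you take near-minimizers $Q_\gamma$, extract a weakly convergent subsequence by compactness of $\mathcal{P}$, place the limit in $\mathcal{P}_0$ (using that MMD is a metric metrizing weak convergence, so $Q\mapsto\inf_{P\in\mathcal{P}_0}\|\mu_Q-\mu_P\|_{\mathcal{H}}$ is weakly continuous and $\mathcal{P}_0$ is closed), and conclude by lower semicontinuity. The paper instead builds, for each $P_0\in\mathcal{P}_0$, a radius $\gamma(\epsilon,P_0)>0$ from lower semicontinuity and then sets $\gamma^*=\min_{P_0\in\mathcal{P}_0}\gamma(\epsilon,P_0)$, asserting $\gamma^*>0$ because $\mathcal{P}_0$ is closed; that uniformization step is the most delicate point of the paper's argument (an infimum of positive radii over an infinite set is not automatically positive without further justification), and your compactness/near-minimizer route sidesteps it cleanly. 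In short, your version buys a tighter and arguably more rigorous handling of the key interchange of limits, at the cost of invoking compactness of $\mathcal{P}$ explicitly; the paper's version makes the neighborhood structure around each $P_0$ explicit but leans on a uniform-radius claim that your argument does not need.
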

\begin{proof}
The proof of Theorem \ref{theorem:error} can be found in Appendix \ref{sec:pfnperror}.
\end{proof}
The optimality result for the kernel robust test \eqref{eq:kerneltest} in Theorem \ref{theorem:error} holds for general robust hypothesis testing  problems, i.e.,  it applies to robust hypothesis testing problems defined using different uncertainty sets $\mathcal{P}_0, \mathcal{P}_1$ and using any arbitrary nominal distributions. However, to solve the optimization problem in the test statistic $\inf_{P\in\mathcal{P}_0}\big\|\mu_{\hat{P}_n} - \mu_{P}\big\|_\mathcal{H}$ for any arbitrary uncertainty set $\mathcal P_0$, it may not always be tractable, since it is an infinite-dimensional problem.

\section{Simulation Results}\label{sec:simulation}
In this section, we provide some numerical results to demonstrate the performance of our proposed tests.
	
\subsection{A Toy Example}
We first provide a toy example to visualize the impact of our kernel robust framework. Assume the whole space $\mathcal{X} = \{1, 2, 3, 4, 5\}$. Under hypothesis $H_0$, the training samples are 1, 2, 3. Under hypothesis $H_1$, the training samples are 3, 4, 5. The radius is set to be $0.2$. We choose a Gaussian kernel. The bandwidth for the Gaussian kernel is chosen using the medium heuristic\cite{gretton2012kernel}.
We plot the empirical distributions and the least favorable distributions (LFDs). The supports of the empirical distributions overlap only at $x=3$. Comparing the empirical distributions and the LFDs, it can be seen from Fig.~\ref{toy} that under $H_0$, part of the probability mass is transported from $\{1, 2\}$ to $\{4, 5\}$, under $H_1$, part of the probability mass is transported from $\{4, 5\}$ to $\{1, 2\}$. Therefore, the LFDs are more difficult to distinguish than the empirical distributions.

\begin{figure}[htb]
	\centering 
	\subfigure[Empirical Distributions]{
		\label{fig:empirical}
		\includegraphics[width=0.48\linewidth]{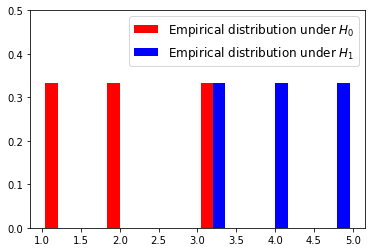}}
	\subfigure[Least Favorable Distributions]{
		\label{fig:lfds}
		\includegraphics[width=0.48\linewidth]{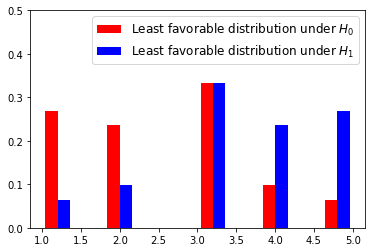}}
	\caption{Comparison of the Empirical Distributions and the Least Favorable Distributions on a Toy Example.}
	\label{toy}
\end{figure}

\subsection{Exponential Consistency of the Tests}
In this section, we validate the exponential consistency of the proposed tests. We use 40 samples from $\mathcal{N}(0, \mathbf{I})$ and 40 samples from $\mathcal{N}(0.22\mathbf{e}, \mathbf{I})$ to construct the uncertainty sets under $H_0$ and $H_1$ respectively, where $\mathbf{e}$ is a vector with all entries equal to 1, and $\mathbf{I}$ is the identity matrix. The data dimension is 20. The radii are chosen such that the uncertainty sets do not overlap. For the kernel smoothing robust test, we use training samples as the support of the finite-dimensional robust optimization problem in \eqref{eq:definef}. When testing the batch samples, we take the sum of the log-likelihood ratio for each sample and compare it with a threshold.
We choose the Gaussian kernel and the bandwidth parameter is chosen using cross-validation. We use the data-generating distributions to evaluate the performance of the two tests. We plot the log of the error probability under the Bayesian setting as a function of testing sample size $n$. It can be seen from Fig.~1 that with the increasing of sample size $n$, the error probabilities of the direct robust test and kernel smoothing robust test decay exponentially fast, which validates the theoretical result that the direct robust test is exponentially consistent. Moreover, our kernel smoothing robust test has a better performance than the direct robust kernel test. 

\begin{figure}[htbp]
	\centering
	\includegraphics[width=3.2in]{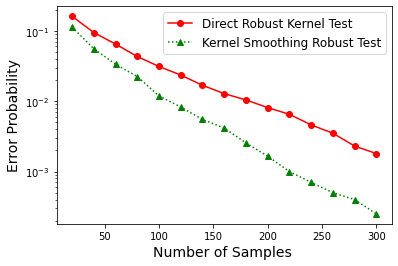}
	\caption{Exponential Consistency of the Tests.}
	\label{fig:exponential}
\end{figure}

\subsection{Comparison of the Performance}\label{sec:numBay}
In this section, we compare our tests with the Wasserstein robust test \cite{gao2018wasserstein, xie2021robust}.

We first compare the performance using synthetic data. We use $20$ samples from 
$\mathcal{N}(\mathbf{0}, \mathbf{I})$ and $20$ samples from $\mathcal{N}(\mathbf{e}, \mathbf{I})$ to construct the uncertainty sets under $H_0$ and $H_1$ respectively. The data dimension is 4. We use a Gaussian kernel and the bandwidth parameter is chosen using the cross-validation. For a fair comparison, and due to the difficulty of obtaining the coefficients in the Wasserstein distance concentration bound in \cite[Section 4]{xie2021robust}, we compute the distance between the true distribution and the empirical distribution of the training samples using Monte Carlo method and use it as the radii of the uncertainty sets so that the true distributions lie in the uncertainty sets. We then use the true distributions to evaluate the performance of the proposed tests. We plot the log of the error probability as a function of testing sample size $n$. It can be seen from Fig.~\ref{fig:wass} that the kernel smoothing robust test has the best performance. The performance of the direct robust test and the Wasserstein robust test are close.

\begin{figure}[htbp]
	\centering
	\includegraphics[width=3.2in]{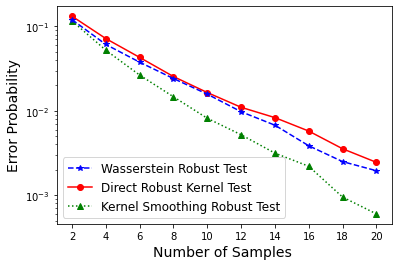}
	\caption{Comparison of the Kernel Smoothing Robust Test, the Direct Robust Kernel Test and the Wasserstein Robust Test: Synthetic Data.}
	\label{fig:wass}
\end{figure}

We then validate the performance of our robust tests using real data of human activity detection. The dataset was released by the Wireless Sensor Data Mining (WISDM) Lab in October 2013, which was collected with the Actitracker system \cite{lockhart2011, kwapisz2011activity, gary2012impact}. Users carried smartphone and were asked to do different activities. For each person, the dataset records the user's name, activities and the acceleration of the user in three directions. We use the walking data and the jogging data collected from four different users to form $H_0$ and $H_1$ respectively. We use five samples from each user to construct the uncertainty sets. The radii of the uncertainty sets are chosen by cross-validation for fair comparison. We plot the log scale error probability as a function of testing sample size $n$. In Fig.~\ref{fig:comparison_realdata}, it can be seen that the performance of the kernel smoothing robust test is better than the Wasserstein robust test and the direct robust kernel test. These results demonstrate the good performance of our kernel robust framework.

\begin{figure}[htbp]
	\centering
	\includegraphics[width=3.2in]{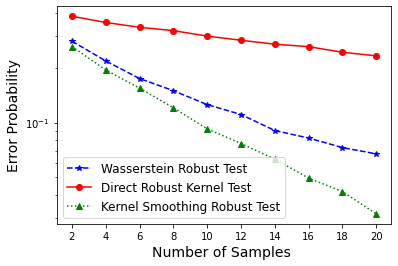}
	\caption{Comparison of the Kernel Smoothing Robust Test, the Direct Robust Kernel Test and the Wasserstein Robust Test: Human Activity Dataset.}
	\label{fig:comparison_realdata}
\end{figure}

We then compare the performance of the three algorithms using MNIST handwritten digits dataset \cite{lecun1998gradient}. We first normalize the image data and then select five images from two different classes to construct the uncertainty sets. The radii of the uncertainty sets are chosen by cross-validation for fair comparison. We plot the log scale error probability as a function of testing sample size $n$. From Fig.~\ref{fig:comparison_mnist}, it can be seen that the performance of the kernel smoothing robust test and the direct robust kernel test are close. Moreover, both the kernel smoothing robust test and the direct robust kernel test outperform the Wasserstein robust test.

\begin{figure}[htbp]
	\centering
	\includegraphics[width=3.2in]{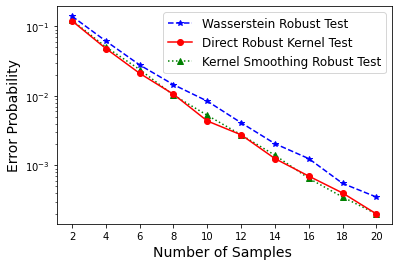}
	\caption{Comparison of the Kernel Smoothing Robust Test, the Direct Robust Kernel Test and the Wasserstein Robust Test: MNIST Handwritten Digits Dataset.}
	\label{fig:comparison_mnist}
\end{figure}

\subsection{Comparison under Different Dimensions and Training Sample Sizes}
In this section, we compare the performance of our kernel smoothing robust test under different data dimensions and different training sample sizes. 

We evaluate the performance of our kernel smoothing test when the date dimension is 5, 10, 15 and 20. When the data dimension is 5, we use 20 samples from $\mathcal{N}(\mathbf{0}, \mathbf{I})$ and $20$ samples from $\mathcal{N}(0.48\mathbf{e}, \mathbf{I})$ to construct the uncertainty sets under $H_0$ and $H_1$ respectively. When the data dimensions are 10, 15 and 20, we scale the mean of the Gaussian distribution under $H_1$ so that the KL divergence between the true distributions under $H_0$ and $H_1$ is the same for different data dimensions \cite{ramdas2015decrease}. We use a Gaussian kernel and the bandwidth parameter is chosen using cross-validation. It can be seen from Fig.~\ref{fig:dimension} that the performance of the kernel smoothing test decreases when the data dimension increases. 

\begin{figure}[htbp]
	\centering
	\includegraphics[width=3.2in]{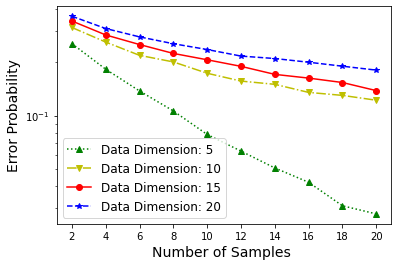}
	\caption{Comparison of the Kernel Smoothing Robust Test under Different Data Dimensions.}
	\label{fig:dimension}
\end{figure}

We then examine the impact of the training sample size on our kernel smoothing test. We use different number of training samples from $\mathcal{N}(\mathbf{0}, \mathbf{I})$ and $\mathcal{N}(0.22\mathbf{e}, \mathbf{I})$ to construct the uncertainty sets under $H_0$ and $H_1$ respectively. The data dimension is 20. The radius of the uncertainty set is chosen such that the true distributions lie in the uncertainty sets with the same probability for different training sample sizes. From Fig.~\ref{fig:train}, it can be seen that the kernel smoothing test performs better when the training sample size is larger. This validates the observation that when the training sample size is larger, we have more information about the true distributions, and the problem shall be easier to solve.

\begin{figure}[htbp]
	\centering
	\includegraphics[width=3.2in]{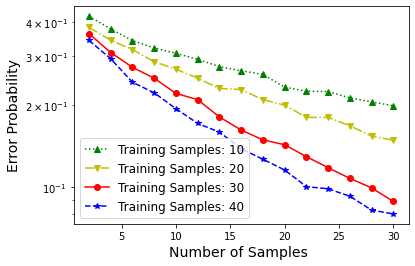}
	\caption{Comparison of the Kernel Smoothing Robust Test under Different Training Sample Sizes.}
	\label{fig:train}
\end{figure}

\subsection{Robust Kernel Test under the Neyman-Pearson Setting}\label{sec:numNP}
For the Neyman-Pearson setting, we show the good performance of our robust kernel test. We first demonstrate the performance of our tests using multivariate Gaussian distributions. For hypotheses $H_0$, we use $50$ samples generated from $\mathcal{N}(\mathbf{0},\mathbf{I})$ to construct the uncertainty set. For $H_1$, we use $50$ samples generated from $\mathcal{N}(0.5\mathbf{e}, \mathbf{I})$ to construct the uncertainty set. The data dimension is 4. The radii are chosen such that the uncertainty sets do not overlap. To test the robustness of our tests, we choose an arbitrary pair of distributions $P_0$, $P_1$ that lie on the boundary of the uncertainty sets  $\mathcal{P}_0$, $\mathcal{P}_1$. Specifically, $P_0$ and $P_1$ are multi-variate Gaussian distributions with mean $0.1\mathbf{e}$ and $0.4\mathbf{e}$, respectively, and with the same covariance matrix.

\begin{figure}[htb]
	\centering 
	\subfigure[Type-\uppercase\expandafter{\romannumeral1} error v.s. sample size]{
		\label{fig:1}
		\includegraphics[width=0.48\linewidth]{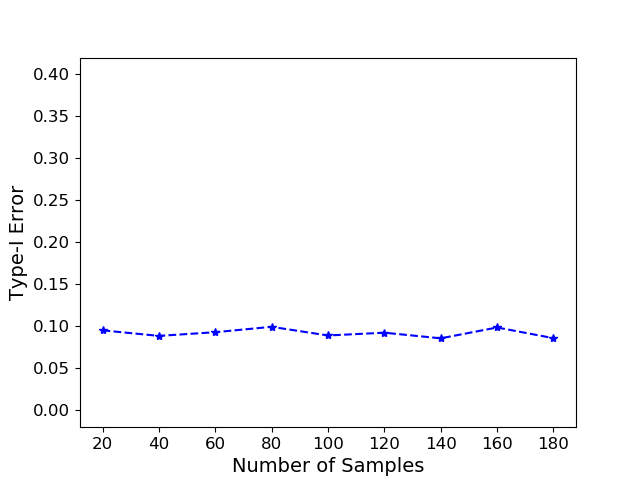}}
	\subfigure[Type-\uppercase\expandafter{\romannumeral2} error v.s. sample size.]{
		\label{fig:2}
		\includegraphics[width=0.48\linewidth]{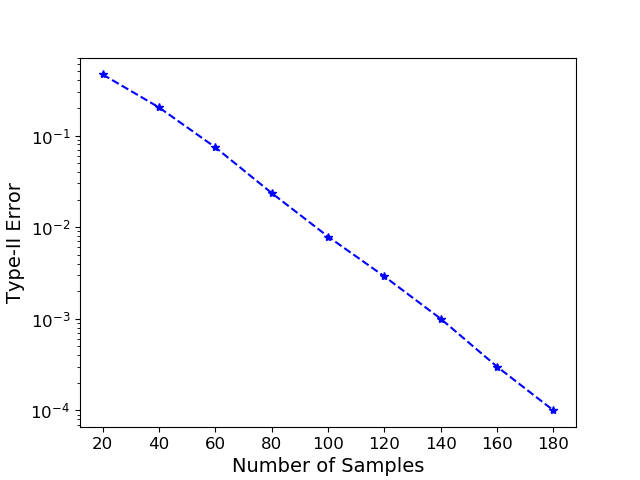}}
	\caption{Error Probability of Robust Kernel Test on Synthetic Dataset.}
	\label{Fig.1}
\end{figure}

We set the false alarm constraint $\alpha = 0.1$. With a proper choice of threshold, in Fig.~\ref{fig:1}, we plot the type-\uppercase\expandafter{\romannumeral1} error probability as a function of sample size $n$.  We repeat the experiment for 10000 times. In Fig.~\ref{fig:2}, we plot the the type-\uppercase\expandafter{\romannumeral2} error probability as a function of sample size $n$. It can be seen that the type-\uppercase\expandafter{\romannumeral2} error probability of our robust kernel test decays exponentially fast
with the sample size $n$ while the type-\uppercase\expandafter{\romannumeral1} error probability satisfies the false alarm constraint. 

We then use the real data set as in Section \ref{sec:numBay} to demonstrate the performance of our robust kernel test. We use the walking data collected from the person indexed by 685 and the walking data collected from the person indexed by 669 to form hypotheses $H_0$ and $H_1$. A small portion of the data is used to construct the uncertainty sets. The radius $\theta$ of the uncertainty sets is chosen such that the two uncertainty sets do not overlap. 

We set the false alarm constraint $\alpha = 0.1$. With a proper choice of threshold, we plot the type-I and type-II error probability as a function of sample size $n$. From Fig.~\ref{Fig.3}, it can be seen that the type-II error probability of our robust kernel test decays exponentially fast with sample size $n$ while the type-I error probability satisfies the false alarm constraint.
\begin{figure}[htb]
	\centering 
	\subfigure[Type-\uppercase\expandafter{\romannumeral1} error v.s. sample size]{
		\label{fig:5}
		\includegraphics[width=0.48\linewidth]{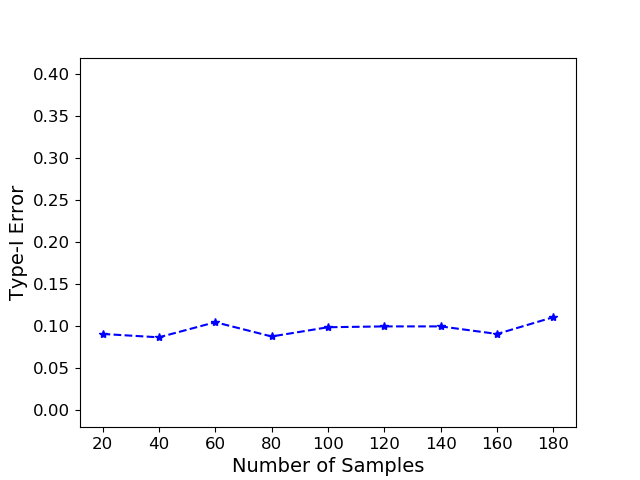}}
	\subfigure[Type-\uppercase\expandafter{\romannumeral2} error v.s. sample size.]{
		\label{fig:6}
		\includegraphics[width=0.48\linewidth]{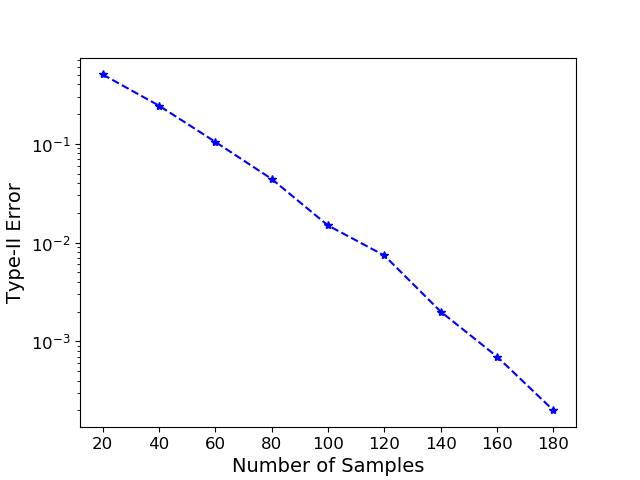}}
	\caption{Error Probability of Robust Kernel Test on the Real Dataset.}
	\label{Fig.3}
\end{figure}
	
\section{Conclusion}\label{sec:conclusion}
In this paper, we studied the robust hypothesis testing problem. We proposed a data-driven approach to construct the uncertainty sets using distance between kernel mean embeddings of distributions. 
Under the Bayesian setting, we first found the optimal test for the case with a finite alphabet. For the case with an infinite alphabet, we proposed a tractable approximation to quantify the worst-case error probability, and we developed a kernel smoothing method to generalize to unseen data in the alphabet. We also developed a direct robust kernel test which was further shown to be exponentially consistent. 
Under the Neyman-Pearson setting, we constructed a robust kernel test which can be implemented efficiently and further proved that the proposed test is asymptotically optimal. Specifically, we derived an universal upper bound on the type-\uppercase\expandafter{\romannumeral2} error exponent, and then showed that our robust kernel test achieved this universal upper bound. We also provided some numerical results to demonstrate the performance of our tests. Our approaches provide useful insights for robust hypothesis testing problems in high-dimensional setting.
	
In the future, it is of interest to investigate the robust multiple hypothesis testing problem with kernel uncertainty sets, where the design of robust detector is significantly more challenging. Another possible extension is to consider the kernel robust sequential hypothesis testing. In this case, we aim to minimize the worst-case probability of errors regarding the hypothesis using as few samples as possible, for which a data-driven approach needs to be developed. 
	
	

\appendices
\section{Useful Lemmas}\label{sec:appendix}
In this section, we list one useful lemma for our proof.

\begin{lemma}\label{lemma:KLD}\cite{hoeffding1965asymptotic, vanerven2014renyi}
For any $P_1\in\mathcal{P}$, the KL-divergence $D(\cdot\|P_1)$ is a lower semi-continuous function with respect to the weak topology of $\mathcal{P}$. That is, for any $\epsilon >0$, there exists a neighborhood $U\subset\mathcal{P}$ of $P_0$ such that for any $P^\prime\in U, D(P^\prime\|P_1)\geq D(P_0\|P_1)-\epsilon$ if $D(P_0\|P_1)<\infty$, and $D(P^\prime\|P_1)\rightarrow \infty$ as $P^\prime$ converges to $P_0$ if $D(P_0\|P_1) = \infty$.
\end{lemma}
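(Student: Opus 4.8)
The plan is to realize $D(\cdot\|P_1)$ as a pointwise supremum of functionals that are each continuous with respect to weak convergence, since a supremum of lower semi-continuous functions is again lower semi-continuous. The natural tool is the Donsker--Varadhan variational representation of the KL divergence: for any fixed $P_1 \in \mathcal{P}$,
\begin{flalign}
D(P\|P_1) = \sup_{f \in C(\mathcal{X})} \Big\{ E_P[f] - \log E_{P_1}[e^f] \Big\},
\end{flalign}
where the supremum is taken over all bounded continuous functions $f$ on $\mathcal{X}$ (which coincide with all of $C(\mathcal{X})$ since $\mathcal{X}$ is compact). First I would establish this identity, which is the technical core: the inequality ``$\geq$'' is the standard Gibbs variational bound, while the reverse inequality requires showing that restricting the test functions to the continuous class does not decrease the supremum. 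This is done by taking the formal maximizer $f = \log \frac{dP}{dP_1}$ (when $D(P\|P_1)<\infty$) and approximating it in an appropriate sense by bounded continuous functions, using density of $C(\mathcal{X})$ in $L^1(P_1)$ together with truncation; when $D(P\|P_1) = \infty$ one shows the supremum over continuous $f$ is also $+\infty$.

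Given the representation, the argument is short. For each fixed $f \in C(\mathcal{X})$, the map $P \mapsto E_P[f] - \log E_{P_1}[e^f]$ is continuous in the weak topology: by definition of weak convergence, $P_n \rightharpoonup P_0$ implies $E_{P_n}[f] \to E_{P_0}[f]$ for every bounded continuous $f$, while the second term does not depend on $P$. Hence $D(\cdot\|P_1)$ is a supremum of weakly continuous functionals and is therefore lower semi-continuous with respect to the weak topology.

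Finally I would translate lower semi-continuity into the $\epsilon$-neighborhood statement. Since $\mathcal{X}$ is compact, weak convergence on $\mathcal{P}$ is metrizable (indeed by MMD, as noted in Section \ref{sec:model}), so lower semi-continuity at $P_0$ is equivalent to $\liminf_{P' \to P_0} D(P'\|P_1) \geq D(P_0\|P_1)$. If $D(P_0\|P_1) < \infty$, this directly yields, for any $\epsilon > 0$, a weak neighborhood $U$ of $P_0$ on which $D(P'\|P_1) \geq D(P_0\|P_1) - \epsilon$. If $D(P_0\|P_1) = \infty$, lower semi-continuity forces $\liminf_{P'\to P_0} D(P'\|P_1) = \infty$, i.e. $D(P'\|P_1) \to \infty$ as $P' \to P_0$.

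The main obstacle is the justification of the variational formula with \emph{continuous} (rather than merely bounded measurable) test functions, together with the handling of the degenerate case $D(P_0\|P_1)=\infty$ (including the possibility $P_0 \not\ll P_1$); once this representation is in hand, both the lower semi-continuity and its neighborhood reformulation are routine.
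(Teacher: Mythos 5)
Your proposal is correct, but note that the paper does not actually prove Lemma~\ref{lemma:KLD}: it is stated as a known result and attributed to \cite{hoeffding1965asymptotic, vanerven2014renyi}, so there is no in-paper proof to match. Your route --- writing $D(\cdot\|P_1)$ as the Donsker--Varadhan supremum $\sup_{f\in C(\mathcal{X})}\{E_P[f]-\log E_{P_1}[e^f]\}$ over continuous test functions, observing that each such functional is weakly continuous (the first term by definition of weak convergence, the second independent of $P$), and invoking the fact that a supremum of continuous functionals is lower semi-continuous --- is a standard and genuinely different argument from the one in the cited literature. The proof in \cite{vanerven2014renyi} instead represents the divergence as a supremum over finite measurable partitions of the discrete divergence, restricts to partitions whose cells are continuity sets of the relevant measures, and applies the Portmanteau theorem; that partition-based technique is in fact the same machinery this paper uses in its own proof of Lemma~\ref{lemma:uppersemi}. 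What each buys: the variational route gives lower semi-continuity in one line once the representation with \emph{continuous} test functions is established, and it generalizes immediately to any functional admitting such a representation; the partition route avoids the variational formula entirely and handles the case $P_0\not\ll P_1$ with no extra work, at the cost of the continuity-set bookkeeping.

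One caveat on your technical core: approximating the truncated log-likelihood ratio merely in $L^1(P_1)$ is not sufficient, because the first term $E_P[f]$ integrates against $P$, not $P_1$; the approximation must be simultaneously good for both measures. The standard fix is Lusin's theorem applied with respect to the measure $P+P_1$ (or a monotone class argument): for a bounded measurable $f$ one finds a continuous $g$ with the same bound agreeing with $f$ outside a set of small $(P+P_1)$-measure, which controls both $E_P[\cdot]$ and $\log E_{P_1}[e^{(\cdot)}]$ errors. With that repair, and your separate treatment of the case $D(P_0\|P_1)=\infty$ (including $P_0\not\ll P_1$, where Urysohn functions supported near a $P_1$-null set of positive $P_0$-measure drive the supremum to infinity), the argument is complete; the final passage from lower semi-continuity to the $\epsilon$-neighborhood statement is immediate and does not even require metrizability, since $\{P: D(P\|P_1)>D(P_0\|P_1)-\epsilon\}$ is itself an open neighborhood of $P_0$.
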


\section{Radii Selection}
We first provide a concentration results for kernel MMD in the following lemma.
\begin{lemma}\label{lemma:empiricalmmd}\cite{altun2006unifying, szabo2015twostage}
Assume $0\leq k(\cdot,\cdot)\leq K $. Given samples $x^m = (x_1, x_2, \cdots, x_m)$ i.i.d. generated from  $P_0$, denote by $\hat{{P}}_m$ the empirical distribution of $x^m$, we then  have that 
\begin{flalign}
P_0\Big(\big|\big|\mu_{\hat{P}_m} - \mu_{P_0}\big|\big|_{\mathcal{H}} > \big(2K/m\big)^{1/2} + \epsilon\Big) \leq  \exp\Big(-\frac{\epsilon^2m}{2K}\Big).\nn
\end{flalign}
\end{lemma}
This lemma provides a method to choose the radius of the uncertainty set so that the true distribution lies in the uncertainty set with high probability. Let $0<\delta<1$, when the training sample size in $m$, to guarantee that the true distribution lies in the uncertainty set with probability at least $1-\delta$, the radius of the uncertainty set should be chosen as 
\begin{align}\label{eq:radii}
    \theta = \sqrt{\frac{2K}{m}} + \sqrt{\frac{2K\log \frac{1}{\delta}}{m}}.
\end{align}
%
This method in \eqref{eq:radii} is a straightforward approach to apply, and usually works very when there is a good number of training samples. In practice, to avoid being overly conservative, it is recommended to choose the radii using this method together with approaches, e.g., cross validation.

\section{Proof of Lemma\ref{lemma:uppersemi}}\label{sec:prooflemma1}
\begin{proof}
To prove Lemma \ref{lemma:uppersemi}, we will first show that $\int\min\big\{p_0(x), p_1(x)\big\}dx$ is concave in $p_0, p_1$. Let $B$ be the $\sigma$-field on $\mathcal{X}$. Let $\mathcal{A} = \{\mathcal{A}_1, \mathcal{A}_2,\cdots,\mathcal{A}_{|\mathcal{A}|}\}$ be a finite partition of $\mathcal{X}$ which divides $\mathcal{X}$ into a finite number of sets and $|\mathcal{A}|$ denotes the number of partitions in $\mathcal{A}$. Denote by $\Pi$ the collection of all finite B-measurable partitions. Let $P_0^{\mathcal{A}_i} = P_0(\mathcal{A}_i)$ and $P_1^{\mathcal{A}_i} = P_1(\mathcal{A}_i)$ for $i = 1, 2, \cdots, |\mathcal{A}|$.
We will then prove that $\int\min\big\{p_0(x), p_1(x)\big\}dx = \inf_{\mathcal{A}\in\Pi}\sum_{i=1}^{|\mathcal{A}|}\min\big\{P_0^{\mathcal{A}_i}, P_1^{\mathcal{A}_i}\big\}$, and show the upper semi-continuity.
	
\textbf{Step 1.} Let $f_0(p_0, p_1) =  p_0$ and $f_1(p_0, p_1) = p_1$. Since $f_0(p_0, p_1)$ and $f_1(p_0, p_1)$ are linear in $p_0, p_1$, $\min\{p_0, p_1\}$ is the minimum of two linear functions thus is concave. Therefore, $\int\min\big\{p_0(x), p_1(x)\big\}dx$ is concave in $p_0, p_1$.
	
\textbf{Step 2.} For any partitions $\mathcal{A}\in\Pi$, we have that  $\mathcal{A}_i\in B, \forall i\in \{1, 2, \cdots, |\mathcal{A}|\}$.  For any $\mathcal{A}\in\Pi$, from the concavity of $\min\big\{p_0(x),p_1(x)\big\}$ and Jensen's inequality \cite{jensen1906inequality}, we have that 
\begin{flalign}\label{eq:jensen}
\int\min\big\{p_0(x),p_1(x)\big\}dx \leq \sum_{i=1}^{|\mathcal{A}|}\min\big\{P_0^{\mathcal{A}_i}, P_1^{\mathcal{A}_i}\big\}.
\end{flalign}
	
We note that $0\leq \min\big\{\frac{p_0(x)}{p_1(x)}, 1\big\}\leq 1$. Therefore, for any $\epsilon >0$, there exists a partition $\{\mathcal{A}_1, \mathcal{A}_2,\cdots, \mathcal{A}_{|\mathcal{A}|}\}$ such that $\bigcup_{i=1}^{|\mathcal{A}|}\mathcal{A}_i = \mathcal{X}$ and $\bar{h}_i -\underline{h}_i <\epsilon$, where $\bar{h}_i = \sup_{x\in\mathcal{A}_i}\min\big\{\frac{p_0(x)}{p_1(x)}, 1\big\},\ \underline{h}_i = \inf_{x\in\mathcal{A}_i}\min\big\{\frac{p_0(x)}{p_1(x)}, 1\big\}$ for $i = 1, 2, \cdots, |\mathcal{A}|$. We then have that 
\begin{flalign}
\int_{\mathcal{A}_i}\bar{h}_i p_1(x)dx \geq \int_{\mathcal{A}_i}\min\Big\{\frac{p_0(x)}{p_1(x)}, 1\Big\}p_1(x)dx \geq \int_{\mathcal{A}_i}\underline{h}_i p_1(x)dx.
\end{flalign}
It follows that 
\begin{flalign}\label{eq:partition1}
P_1(\mathcal{A}_i)\bar{h}_i \geq \int_{\mathcal{A}_i}\min\Big\{\frac{p_0(x)}{p_1(x)}, 1\Big\}p_1(x)dx\geq P_1(\mathcal{A}_i)\underline{h}_i.
\end{flalign}
Moreover,
\begin{flalign}\label{eq:partition2}
P_1(\mathcal{A}_i)\bar{h}_i \geq P_1(\mathcal{A}_i)\min\Big\{\frac{P_0(\mathcal{A}_i)}{P_1(\mathcal{A}_i)},1\Big\}\geq P_1(\mathcal{A}_i)\underline{h}_i,
\end{flalign}
where the first inequality is because when $\bar{h}_i = 1$, $\min\big\{\frac{P_0(\mathcal{A}_i)}{P_1(\mathcal{A}_i)},1\big\}\leq 1$, and when $\bar{h}_i = \sup_{\mathcal{A}_i}\frac{p_0(x)}{p_1(x)}$, $\bar{h}_i = \frac{\bar{h}_i\int_{\mathcal{A}_i}p_1(x)dx}{\int_{\mathcal{A}_i}p_1(x)dx}\geq \frac{\int_{\mathcal{A}_i}p_0(x)dx}{\int_{\mathcal{A}_i}p_1(x)dx} = \frac{P_0(\mathcal{A}_i)}{P_1(\mathcal{A}_i)}$. The second inequality can also be proved similarly.
	
It then follows from \eqref{eq:partition1} and \eqref{eq:partition2} that 
\begin{flalign}
\bigg|P_1(\mathcal{A}_i)\min\Big\{\frac{P_0(\mathcal{A}_i)}{P_1(\mathcal{A}_i)}, 1\Big\} - \int_{\mathcal{A}_i}\min\big\{p_0(x), p_1(x)\big\}dx\bigg|\leq (\bar{h}_i-\underline{h}_i)P_1(\mathcal{A}_i)<\epsilon P_1(\mathcal{A}_i).
\end{flalign}
Therefore, 
\begin{flalign}
&\bigg|\sum_{i=1}^{|\mathcal{A}|} P_1(\mathcal{A}_i)\min\Big\{\frac{P_0(\mathcal{A}_i)}{P_1(\mathcal{A}_i)}, 1\Big\} - \int\min\big\{p_0(x), p_1(x)\big\}dx\bigg|\leq \epsilon.
\end{flalign}
We then have that 
\begin{flalign}
\sum_{i=1}^{|\mathcal{A}|} P_1(\mathcal{A}_i)\min\Big\{\frac{P_0(\mathcal{A}_i)}{P_1(\mathcal{A}_i)}, 1\Big\}\leq \int\min\big\{p_0(x), p_1(x)\big\}dx +\epsilon.
\end{flalign}
Let $\epsilon\rightarrow 0$, we have that 
\begin{flalign}\label{eq:leq}
\inf_{\mathcal{A}\in\Pi}\sum_{i=1}^{|\mathcal{A}|}\min\big\{P_0^{\mathcal{A}_i}, P_1^{\mathcal{A}_i}\big\}\leq \int\min\big\{p_0(x), p_1(x)\big\}dx.
\end{flalign}
Combining \eqref{eq:jensen} and \eqref{eq:leq}, we have that 
\begin{flalign}
\int\min\big\{p_0(x), p_1(x)\big\}dx = \inf_{\mathcal{A}\in\Pi}\sum_{i=1}^{|\mathcal{A}|}\min\big\{P_0^{\mathcal{A}_i}, P_1^{\mathcal{A}_i}\big\}.
\end{flalign}
	
\textbf{Step 3.} Let $C$ be the field of Borel sets of $\mathcal{X}$ that are sets of continuity for both $P_0$ and $P_1$. It was shown in  \cite[Theorem 1]{posner1975random} that $C$ generates $B$ in the sense that $B$ is the smallest $\sigma$-field containing $C$.
	
Let $\mathcal{A}^C = \big\{\mathcal{A}_1^C, \mathcal{A}_2^C, \cdots, \mathcal{A}^C_{|\mathcal{A}|}\big\}$ be a finite partition of $\mathcal{X}$ such that $\mathcal{A}_i^C \in C,\ \forall i\in \{1, 2, \cdots, |\mathcal{A}|\}$. Let $\Pi^C$ be the collection of all such finite partitions. Since $\Pi^C\subseteq\Pi$, we have that 
\begin{flalign}\label{eq:subset}
\inf_{\mathcal{A}^C\in\Pi^C}\sum_{i=1}^{|\mathcal{A}^C|}\min\big\{P_0^{\mathcal{A}_i^C}, P_1^{\mathcal{A}_i^C}\big\}\geq \inf_{\mathcal{A}\in\Pi}\sum_{i=1}^{|\mathcal{A}|}\min\big\{P_0^{\mathcal{A}_i}, P_1^{\mathcal{A}_i}\big\}.
\end{flalign}
	
Since $C$ generates $B$, applying Theorem D of section 13 \cite{halmos1950measure} to the measure $\nu = P_0 + P_1$, for any $\epsilon > 0$ and $\mathcal{A}\in\Pi$, we can find $E_i^\prime \in C$ such that $P_0(\mathcal{A}_i\Delta E_i^\prime) \leq \nu(\mathcal{A}_i\Delta E_i^\prime)\leq \epsilon$ and $P_1(\mathcal{A}_i\Delta E_i^\prime)\leq \nu(\mathcal{A}_i\Delta E_i^\prime)\leq \epsilon$, where $\Delta$ denotes the symmetric difference between two sets. Define 
\begin{flalign}
&E_1 = E^\prime_1, E_2 = E_2^\prime -E_1, \cdots, E_{|\mathcal{A}|-1} = E_{|\mathcal{A}|-1}^\prime-E_1\cup E_2\cup\cdots\cup E_{|\mathcal{A}|-2}, 
\end{flalign}
and
\begin{flalign}
E_{|\mathcal{A}|} = \mathcal{X}-E_1\cup E_2\cup\cdots\cup E_{|\mathcal{A}|-1}.
\end{flalign}
We have that $\nu(\mathcal{A}_1\Delta E_1) = \nu(\mathcal{A}_1\Delta E^\prime_1) \leq \epsilon$.
Since $\mathcal{A}_1, \mathcal{A}_2,\cdots, \mathcal{A}_{|\mathcal{A}|}$ are disjoint, we have that $\nu(A_2\cap E_1) \leq \epsilon$. It then follows that 
\begin{flalign}
\nu(\mathcal{A}_2\Delta E_2) = \nu\big(\mathcal{A}_2\Delta(E_2^\prime - E_1)\big)
\leq \nu(\mathcal{A}_2\Delta E_2^\prime) + \nu(\mathcal{A}_2\cap E_1)\leq 2\epsilon.
\end{flalign}
Similarly, we can show that for any $1\leq i\leq |\mathcal{A}|-1$, 
\begin{flalign}
\nu(\mathcal{A}_{i}\Delta E_{i}) = \nu\big(\mathcal{A}_{i}\Delta (E_{i}^\prime-E_1\cup E_2\cup\cdots\cup E_{i-1})\big)\leq i\epsilon
\end{flalign}
and 
\begin{flalign}
\nu(\mathcal{A}_{|\mathcal{A}|}\Delta E_{|\mathcal{A}|}) =& \nu\big((\mathcal{X} - \mathcal{A}_1\cup\mathcal{A}_2\cup\mathcal{A}_{|\mathcal{A}|-1})\Delta( \mathcal{X}-E_1\cup E_2\cup\cdots\cup E_{|\mathcal{A}|-1})\big)\nn\\ =& \nu\big((\mathcal{A}_1\cup\mathcal{A}_2\cup\mathcal{A}_{|\mathcal{A}|-1})\Delta(E_1\cup E_2\cup\cdots\cup E_{|\mathcal{A}|-1})\big)\leq \frac{|\mathcal{A}|(|\mathcal{A}|-1)}{2}\epsilon.
\end{flalign}
	
Therefore, for any $\epsilon>0$ and $\mathcal{A}\in\Pi$, there exists $E_1, E_2, \cdots, E_{|\mathcal{A}|} \in C$ such that 
\begin{flalign}
P_0(E_i)\leq P_0(\mathcal{A}_i) + \frac{|\mathcal{A}|(|\mathcal{A}|-1)}{2}\epsilon, P_1(E_i)\leq P_1(\mathcal{A}_i) + \frac{|\mathcal{A}|(|\mathcal{A}|-1)}{2}\epsilon, \forall 1\leq i\leq |\mathcal{A}|.
\end{flalign}
It then follows that there exists $\mathcal{A}^C\in\Pi^C$ such that 
\begin{flalign}
\sum_{i=1}^{|\mathcal{A}^C|}\min\big\{P_0^{\mathcal{A}_i^C}, P_1^{\mathcal{A}_i^C}\big\}\leq\inf_{\mathcal{A}\in\Pi}\sum_{i=1}^{|\mathcal{A}|}\min\big\{P_0^{\mathcal{A}_i}, P_1^{\mathcal{A}_i}\big\} + \frac{|\mathcal{A}|^2(|\mathcal{A}|-1)}{2}\epsilon.
\end{flalign}
Let $\epsilon \rightarrow 0$ and use \eqref{eq:subset}, we then have that
\begin{flalign}
\inf_{\mathcal{A}^C\in\Pi^C}\sum_{i=1}^{|\mathcal{A}^C|}\min\big\{P_0^{\mathcal{A}_i^C}, P_1^{\mathcal{A}_i^C}\big\}=\inf_{\mathcal{A}\in\Pi}\sum_{i=1}^{|\mathcal{A}|}\min\big\{P_0^{\mathcal{A}_i}, P_1^{\mathcal{A}_i}\big\}.
\end{flalign}
Therefore,
\begin{flalign}\label{eq:setcontinuous}
\int\min\big\{p_0(x), p_1(x)\big\}dx = \inf_{\mathcal{A}^C\in\Pi^C}\sum_{i=1}^{|\mathcal{A}^C|}\min\big\{P_0^{\mathcal{A}_i^C}, P_1^{\mathcal{A}_i^C}\big\}.
\end{flalign}
	
Let $P_0^s$ and $P_1^s$ be the sequence of probability distribution such that $P_0^s$ converges weakly to $P_0$ and $P_1^s$ converges weakly to $P_1$ as $s\rightarrow \infty$. Let $\mathcal{A}^C = \big\{\mathcal{A}_1^C, \mathcal{A}_2^C,\cdots\mathcal{A}_{|\mathcal{A}|}\big\}\in\Pi^C$. We then have that $\mathcal{A}^C_i \in C$ for all $i\in\{1, 2, \cdots, |\mathcal{A}|\}$. From the Portmanteau theorem \cite{achim2014probability} and the fact that $\mathcal{A}_i^C$ is a continuity set of $P_0$ and $P_1$, the weak convergence implies $P_0^s(\mathcal{A}_i^C) \rightarrow P_0(\mathcal{A}_i^C)$ and $P_1^s(\mathcal{A}_i^C) \rightarrow P_1(\mathcal{A}_i^C)$. It then follows that for any $\epsilon>0$, there exists $s_0$ such that for any $s>s_0$, 
\begin{flalign}
P_0(\mathcal{A}_i^C) + \frac{\epsilon}{|\mathcal{A}^C|} > P_0^s(\mathcal{A}_i^C), P_1(\mathcal{A}_i^C) + \frac{\epsilon}{|\mathcal{A}^C|} > P_1^s(\mathcal{A}_i^C).
\end{flalign} 
Therefore, for any $\epsilon >0$, there exists $\mathcal{A}^C \in\Pi^C$ such that
\begin{flalign}
&\int\min\big\{p_0(x), p_1(x)\big\}dx + 2\epsilon \nn\\&\geq \sum_{i=1}^{|\mathcal{A}^C|} \min\big\{P_0^{\mathcal{A}^C_i}, P_1^{\mathcal{A}^C_i}\big\} + \epsilon\nn\\&= \sum_{i=1}^{|\mathcal{A}^C|} \min\bigg\{P_0^{\mathcal{A}^C_i} + \frac{\epsilon}{|\mathcal{A}^C|}, P_1^{\mathcal{A}^C_i} + \frac{\epsilon}{|\mathcal{A}^C|}\bigg\} \nn\\&\geq\lim_{s\rightarrow \infty}\sum_{i=1}^{|\mathcal{A}^C|}\min\big\{P_0^s(\mathcal{A}^C_i), P_1^s(\mathcal{A}^C_i)\big\}\nn\\&\geq\lim_{s\rightarrow \infty}\int\min\big\{p_0^s(x), p_1^s(x)\big\}dx,
\end{flalign}
where the first inequality is from \eqref{eq:setcontinuous}, the last inequality is from \eqref{eq:jensen}. Let $\epsilon\rightarrow 0$, we have that $\int\min\big\{p_0(x), p_1(x)\big\}dx$ is upper semi-continuous in $P_0, P_1$ with respect to the weak convergence.
This completes the proof.
\end{proof}

\section{Proof of Theorem \ref{theorem:tractable}}\label{sec:pfthm2}
\begin{proof}
Since $\mathcal{X}$ is compact, $\mathcal{P}$ is tight, thus is sequentially compact with respect to the topology of weak convergence from the Prokhorov's theorem \cite{prokhorov1956}. Therefore, $\mathcal{P}$ is compact with respect to weak convergence. Therefore, $\mathcal{P}_0, \mathcal{P}_1$ are compact since $\mathcal{P}_0, \mathcal{P}_1$ are closed subsets of a compact set. We then have that the solutions to $\sup_{P_0\in\mathcal{P}_0, P_1\in\mathcal{P}_1}\int\min\big\{p_0(x), p_1(x)\big\}dx$ exist because upper semi-continuous function attains its supremum on a compact set. Let $P_0^*, P_1^*$ denote the optimal solutions to $\sup_{P_0\in\mathcal{P}_0, P_1\in\mathcal{P}_1}\int\min\big\{p_0(x), p_1(x)\big\}dx$. 


Let $\mathcal{A}_s = \{\mathcal{A}_s^1, \mathcal{A}_s^2, \cdots, \mathcal{A}_s^s\}$ be a partition of $\mathcal{X}$. We define the diameter of each partition $\mathcal{A}_s^i$ as $dia(\mathcal{A}_s^i) = \max_{x, x^\prime \in \mathcal{A}_s^i}\|x - x^\prime\|_2$. Since $\mathcal{X}$ is compact, we can choose the partition $\mathcal{A}_s$ such that $dia(\mathcal{A}_s^i) \rightarrow 0$ as $s\rightarrow \infty$ for $1\leq i\leq s$.

For any partition $\mathcal{A}_s^j$, let $x_s^j$ be an arbitrary point in $\mathcal{A}_s^j$. Denote by $P_0^s, P_1^s$ discrete distributions with $P_0^s(x_s^j) = P_0^*(\mathcal{A}_s^j), P_1^s(x_s^j) = P_1^*(\mathcal{A}_s^j)$. Let $h: \mathcal{X}\rightarrow\mathbb{R}$ be an arbitrary bounded, continuous function. Let $a_s^j = \inf_{x\in\mathcal{A}_s^j}h(x)$ and $b_s^j = \sup_{x\in\mathcal{A}_s^j}h(x)$. Since $h$ is continuous and the diameter of $\mathcal{A}_s^j$ goes to $0$ as $s\rightarrow \infty$, we then have that $\max_{j = 1,\cdots,s}(b_s^j-a_s^j)\rightarrow 0$ as $s\rightarrow\infty$. It then follows that 
\begin{flalign}
&\Big|\int hdP_0^s - \int hdP_0^*\Big| = \Big|\sum_{j=1}^s\int_{\mathcal{A}_s^j}\big(h-h(x_s^j)\big)dP_0^*\Big|\nn\\&\leq \max_{j = 1, \cdots, s}(b_s^j-a_s^j)\rightarrow 0,\ \text{as}\ s\rightarrow\infty.
\end{flalign}
Therefore, $P_0^s$ converges weakly to $P_0^*$ as $s\rightarrow \infty$. Similarly, $P_1^s$ converges weakly to $P_1^*$ as $s\rightarrow\infty$. Moreover, from Jensen's inequality \cite{jensen1906inequality}, we have that 
\begin{flalign}
&\int\min\big\{p_0^*(x), p_1^*(x)\big\}dx \leq \sum_{j=1}^s \min\big\{P_0^*(\mathcal{A}_s^j), P_1^*(\mathcal{A}_s^j)\big\}\nn\\ &= \sum_{j=1}^s\min\big\{P_0^s(x^j_s), P_1^s(x^j_s)\big\}.
\end{flalign}

Since MMD metrizes the weak convergence \cite{simon2018kernel, sripe2016weak}, for any $\epsilon>0$, there exists an integer $s_0$ such that for any $s> s_0$, $\big\|\mu_{P_0^s} - \mu_{P_0^*}\big\|_\mathcal{H}\leq \frac{\epsilon}{2}$ and $\big\|\mu_{P_1^s} - \mu_{P_1^*}\big\|_\mathcal{H}\leq \frac{\epsilon}{2}$. Therefore, from the triangle inequality \cite{gretton2012kernel}, we have that for any $s> s_0$,
\begin{flalign}
&\big\|\mu_{P_0^s} - \mu_{\hat{Q}^0_m}\big\|_\mathcal{H}\leq \big\|\mu_{P_0^s} - \mu_{P_0^*}\big\|_\mathcal{H} + \big\|\mu_{P_0^*} - \mu_{\hat{Q}^0_m}\big\|_\mathcal{H} \leq \theta + \frac{\epsilon}{2}.
\end{flalign}
Similarly, we have $\big\|\mu_{P_1^s} - \mu_{\hat{Q}^1_m}\big\|_\mathcal{H} \leq \theta + \frac{\epsilon}{2}$.

Rewrite $P_0^s = \sum_{j=1}^s \alpha_j\delta_{x_s^j}, P_1^s = \sum_{j=1}^s \beta_j\delta_{x_s^j}$, where $\delta_{x_s^j}$ denote the Dirac measure on $x_s^j$, $\alpha_j = P^*_0(\mathcal{A}_s^j), \beta_j = P^*_1(\mathcal{A}^j_s), \forall j = 1, \cdots, s$, and $\sum_{j=1}^s\alpha_j = 1, \sum_{j=1}^s\beta_j = 1$. Let $dis(x,y)$ denote a distance metric on $\mathcal{X}$ between $x$ and $y$. Note that $\{z_i\}_{i=1}^N$ are generated from a distribution $P$ supported on $\mathcal{X}$. Therefore, for any $x_s^j$, we have that $\min_{z\in\{z_i\}_{i=1}^N}dis(z, x_s^j)\rightarrow 0$ as $N\rightarrow \infty$. Therefore, there exists a sequence $\{z_{sj}^N\}_{N=1}^\infty$ such that $z_{sj}^N \in \{z_i\}_{i = 1}^N$ and $dis(z_{sj}^N, x_s^j)\rightarrow 0$ as $N\rightarrow \infty$. Assume that $z_{sj}^N$ are distinct for all $j$. We then construct the following distributions: $P_0^{s,N} = \sum_{j=1}^s\alpha_j\delta_{z_{sj}^N}, P_1^{s,N} = \sum_{j=1}^s\beta_j\delta_{z_{sj}^N}$. 
For any arbitrary bounded, continuous function $h: \mathcal{X}\rightarrow \mathbb{R}$, since $dis(z_{sj}^N, x_s^j)\rightarrow 0$ as $N\rightarrow \infty$, we have that for a fixed $s$, 

\begin{flalign}
&\Big|\int hdP_0^{s, N} - \int hdP_0^s\Big| = \Big|\sum_{j=1}^s \alpha_j\big(h(z^N_{sj}) - h(x^j_s)\big)\Big|\nn\\& \leq \max_{j = 1, \cdots, s} \big|h(z^N_{sj}) - h(x^j_s)\big| \rightarrow 0,\ \text{as}\ N \rightarrow \infty.
\end{flalign}
Therefore, we have that $P_0^{s,N}$ converges weakly to $P_0^s$ as $N\rightarrow\infty$. Similarly, $P_1^{s,N}$ converges weakly to $P_1^s$ as $N\rightarrow\infty$. Moreover, we have that 
\begin{flalign}
&\sum_{j=1}^s\min\big\{P_0^{s,N}(z_{sj}^N), P_1^{s,N}(z_{sj}^N)\big\} = \sum_{j=1}^s\min\big\{P_0^s(x^j_s), P_1^s(x^j_s)\big\}= \sum_{j=1}^s\min\{\alpha_j, \beta_j\}.
\end{flalign}

Since $P_0^s, P_1^s$ converges weakly to $P_0^*, P_1^*$, respectively, as $s\rightarrow\infty$, and $P_0^{s,N}, P_1^{s,N}$ converges weakly to $P_0^s, P_1^s$, respectively, as $N\rightarrow\infty$, we have that for any $\epsilon>0$, there exists an integer $s_0$ such that for all $s>s_0$, $\big\|\mu_{P_0^s} - \mu_{P_0^*}\big\|_\mathcal{H}\leq\frac{\epsilon}{2}$ and $\big\|\mu_{P_1^s} - \mu_{P_1^*}\big\|_\mathcal{H}\leq\frac{\epsilon}{2}$. For a fixed $s>s_0$, there exists an integer $N(s)$ such that for any $N>N(s)$, $\big\|\mu_{P_0^{s,N}} - \mu_{P_0^s}\big\|_\mathcal{H}\leq\frac{\epsilon}{2}$ and $\big\|\mu_{P_1^{s,N}} - \mu_{P_1^s}\big\|_\mathcal{H}\leq\frac{\epsilon}{2}$. Therefore, for a fixed $s>s_0$ and any $N>N(s)$, from the triangle inequality \cite{gretton2012kernel}, we have that 
\begin{flalign}
&\big\|\mu_{P_0^{s,N}} - \mu_{\hat{Q}^0_m}\big\|_\mathcal{H}\nn\\&\leq \big\|\mu_{P_0^{s,N}} - \mu_{P_0^s}\big\|_\mathcal{H} + \big\|\mu_{P_0^s} - \mu_{P_0^*}\big\|_\mathcal{H} + \big\|\mu_{P_0^*} - \mu_{\hat{Q}^0_m}\big\|_\mathcal{H} \nn\\&\leq \theta + \epsilon.
\end{flalign}
Similarly, we have that $\big\|\mu_{P_1^{s,N}} - \mu_{\hat{Q}^1_m}\big\|_\mathcal{H}\leq \theta + \epsilon$. It then follows that for large $N$,
\begin{flalign}
&\sup_{\substack{P_0^N\in\mathcal{P}:\big\|\mu_{P_0^N} - \mu_{\hat{Q}^0_m}\big\|_\mathcal{H}\leq \theta+\epsilon \\ P_1^N\in\mathcal{P}:\big\|\mu_{P_1^N} - \mu_{\hat{Q}^1_m}\big\|_\mathcal{H}\leq \theta+\epsilon\\ P_0^N, P_1^N\ \text{are supported on}\ \{z_i\}_{i=1}^N}}\sum_{i=1}^N\min\big\{P_0^N(z_i), P_1^N(z_i)\big\}\nn\\&\geq \sum_{j=1}^s\min\big\{P_0^{s,N}(z_{sj}^N), P_1^{s,N}(z_{sj}^N)\big\}\nn\\&\geq \int\min\big\{p_0^*(x), p_1^*(x)\big\}dx,
\end{flalign}
where the second inequality is from Jensen's inequality \cite{jensen1906inequality}.
Therefore, for any $\epsilon >0$,
\begin{flalign}
&\lim_{N\rightarrow \infty} \sup_{\substack{P_0^N\in\mathcal{P}:\big\|\mu_{P_0^N} - \mu_{\hat{Q}^0_m}\big\|_\mathcal{H}\leq \theta+\epsilon \\ P_1^N\in\mathcal{P}:\big\|\mu_{P_1^N} - \mu_{\hat{Q}^1_m}\big\|_\mathcal{H}\leq \theta + \epsilon\\ P_0^N, P_1^N\ \text{are supported on}\ \{z_i\}_{i=1}^N}}\sum_{i=1}^N\min\big\{P_0^N(z_i), P_1^N(z_i)\big\}\geq\int\min\big\{p_0^*(x), p_1^*(x)\big\}dx.
\end{flalign}
Moreover, we have that for any $\epsilon >0$,
\begin{flalign}\label{eq:discon}
&\sup_{\substack{P_0\in\mathcal{P}:\big\|\mu_{P_0} - \mu_{\hat{Q}^0_m}\big\|_\mathcal{H}\leq \theta+\epsilon \\P_1\in\mathcal{P}:\big\|\mu_{P_1} - \mu_{\hat{Q}^1_m}\big\|_\mathcal{H}\leq \theta+\epsilon }}\int\min\big\{p_0(x), p_1(x)\big\}dx\nn\\&\geq\lim_{N\rightarrow \infty} \sup_{\substack{P_0^N\in\mathcal{P}:\big\|\mu_{P_0^N} - \mu_{\hat{Q}^0_m}\big\|_\mathcal{H}\leq \theta +\epsilon \\ P_1^N\in\mathcal{P}:\big\|\mu_{P_1^N} - \mu_{\hat{Q}^1_m}\big\|_\mathcal{H}\leq \theta + \epsilon\\ P_0^N, P_1^N\ \text{are supported on}\ \{z_i\}_{i=1}^N}}\sum_{i=1}^N\min\big\{P_0^N(z_i), P_1^N(z_i)\big\},
\end{flalign}
which is due to the fact that the right-hand side and the left-hand side of \eqref{eq:discon} have the same objective function and the feasible region of the right-hand side is a subset of the feasible region of the left-hand side.
It then follows that
\begin{flalign}\label{eq:limit}
&\lim_{\epsilon\rightarrow 0}\sup_{\substack{P_0\in\mathcal{P}:\big\|\mu_{P_0} - \mu_{\hat{Q}^0_m}\big\|_\mathcal{H}\leq \theta+\epsilon \\P_1\in\mathcal{P}:\big\|\mu_{P_1} - \mu_{\hat{Q}^1_m}\big\|_\mathcal{H}\leq \theta+\epsilon }}\int\min\big\{p_0(x), p_1(x)\big\}dx\nn\\&\geq\lim_{\epsilon\rightarrow 0}\lim_{N\rightarrow \infty} \sup_{\substack{P_0^N\in\mathcal{P}:\big\|\mu_{P_0^N} - \mu_{\hat{Q}^0_m}\big\|_\mathcal{H}\leq \theta +\epsilon \\ P_1^N\in\mathcal{P}:\big\|\mu_{P_1^N} - \mu_{\hat{Q}^1_m}\big\|_\mathcal{H}\leq \theta + \epsilon\\ P_0^N, P_1^N\ \text{are supported on}\ \{z_i\}_{i=1}^N}}\sum_{i=1}^N\min\big\{P_0^N(z_i), P_1^N(z_i)\big\}\nn\\&\geq\int\min\big\{p_0^*(x), p_1^*(x)\big\}dx.
\end{flalign}

We will then show that all the inequality holds with equality in \eqref{eq:limit}. Recall the definition of $g(\theta)$ in \eqref{eq:defineg}:
\begin{flalign}
g(\theta) = \sup_{\substack{P_0\in\mathcal{P}:\big\|\mu_{P_0} - \mu_{\hat{Q}^0_m}\big\|_\mathcal{H}\leq \theta \\P_1\in\mathcal{P}:\big\|\mu_{P_1} - \mu_{\hat{Q}^1_m}\big\|_\mathcal{H}\leq \theta}}\int\min\big\{p_0(x), p_1(x)\big\}dx.
\end{flalign} 
We will show that $\lim_{\epsilon\rightarrow 0}g(\theta+\epsilon) = g(\theta)$, thus
\begin{flalign}
\lim_{\epsilon\rightarrow 0}\sup_{\substack{P_0\in\mathcal{P}:\big\|\mu_{P_0} - \mu_{\hat{Q}^0_m}\big\|_\mathcal{H}\leq \theta+\epsilon \\P_1\in\mathcal{P}:\big\|\mu_{P_1} - \mu_{\hat{Q}^1_m}\big\|_\mathcal{H}\leq \theta+\epsilon }}\int\min\big\{p_0(x), p_1(x)\big\}dx = \int\min\big\{p_0^*(x), p_1^*(x)\big\}dx.
\end{flalign}
It suffices to show that $g(\theta)$ is continuous in $\theta$. To show that, we will show that $g(\theta)$ is concave in $\theta$. Let $P_{0, \theta_1}, P_{1, \theta_1}$ be the optimal solutions to $g(\theta_1)$ and $P_{0, \theta_2}, P_{1, \theta_2}$ be the optimal solutions to $g(\theta_2)$. Consider $\lambda P_{0, \theta_1} + (1-\lambda) P_{0, \theta_2}, \lambda P_{1, \theta_1} + (1-\lambda) P_{1, \theta_2}$ for $0<\lambda<1$. From the triangle inequality \cite{gretton2012kernel}, we have that 
\begin{flalign}
&\|\lambda\mu_{P_{0, \theta_1}} + (1-\lambda) \mu_{P_{0, \theta_2}} - \mu_{\hat{Q}^0_m}\|_\mathcal{H}\nn\\&\leq \lambda\|\mu_{P_{0, \theta_1}}-\mu_{\hat{Q}^0_m}\|_{\mathcal{H}}  + (1-\lambda)\|\mu_{P_{0, \theta_2}} - \mu_{\hat{Q}^0_m}\|_{\mathcal{H}}\nn\\ &\leq \lambda\theta_1 + (1-\lambda)\theta_2.
\end{flalign}
Similarly, we have that $\|\lambda\mu_{P_{1, \theta_1}} + (1-\lambda) \mu_{P_{1, \theta_2}} - \mu_{\hat{Q}^1_m}\|_\mathcal{H}\leq \lambda\theta_1 + (1-\lambda)\theta_2$. Therefore, $\lambda P_{0, \theta_1} + (1-\lambda) P_{0, \theta_2}, \lambda P_{1, \theta_1} + (1-\lambda) P_{1, \theta_2}$ are feasible solutions to $g(\lambda\theta_1 + (1-\lambda)\theta_2)$. It then follows that
\begin{flalign}\label{eq:concaveg}
g(\lambda\theta_1 + (1-\lambda)\theta_2) &\geq \int\min\big\{\lambda p_{0, \theta_1}(x) + (1-\lambda)p_{0, \theta_2}(x), \lambda p_{1, \theta_1}(x) + (1-\lambda)p_{1, \theta_2}(x)\big\}dx\nn\\&\geq \lambda \int\min\big\{p_{0,\theta_1}(x), p_{1, \theta_1}(x)\big\}dx + (1-\lambda) \int\min\big\{p_{0, \theta_2}(x), p_{1, \theta_2}(x)\big\}dx\nn\\&= \lambda g(\theta_1) + (1-\lambda)g(\theta_2).
\end{flalign}
Therefore, $g(\theta)$ is concave in $\theta$, and thus is continuous in $\theta$. From \eqref{eq:limit} and the continuity of $g(\theta)$, we have that for any $\theta\geq0$,
\begin{flalign}\label{eq:epcontinue}
&\lim_{\epsilon\rightarrow 0}\sup_{\substack{P_0\in\mathcal{P}:\big\|\mu_{P_0} - \mu_{\hat{Q}^0_m}\big\|_\mathcal{H}\leq \theta+\epsilon \\P_1\in\mathcal{P}:\big\|\mu_{P_1} - \mu_{\hat{Q}^1_m}\big\|_\mathcal{H}\leq \theta+\epsilon }}\int\min\big\{p_0(x), p_1(x)\big\}dx\nn\\&=\lim_{\epsilon\rightarrow 0}\lim_{N\rightarrow \infty} \sup_{\substack{P_0^N\in\mathcal{P}:\big\|\mu_{P_0^N} - \mu_{\hat{Q}^0_m}\big\|_\mathcal{H}\leq \theta +\epsilon \\ P_1^N\in\mathcal{P}:\big\|\mu_{P_1^N} - \mu_{\hat{Q}^1_m}\big\|_\mathcal{H}\leq \theta + \epsilon\\ P_0^N, P_1^N\ \text{are supported on}\ \{z_i\}_{i=1}^N}}\sum_{i=1}^N\min\big\{P_0^N(z_i), P_1^N(z_i)\big\}\nn\\&=\int\min\big\{p_0^*(x), p_1^*(x)\big\}dx.
\end{flalign}
	
We will then show that
\begin{flalign}\label{eq:rcontinue}
&\lim_{N\rightarrow \infty} \sup_{\substack{P_0^N\in\mathcal{P}:\big\|\mu_{P_0^N} - \mu_{\hat{Q}^0_m}\big\|_\mathcal{H}\leq \theta \\ P_1^N\in\mathcal{P}:\big\|\mu_{P_1^N} - \mu_{\hat{Q}^1_m}\big\|_\mathcal{H}\leq \theta\\ P_0^N, P_1^N\ \text{are supported on}\ \{z_i\}_{i=1}^N}}\sum_{i=1}^N\min\big\{P_0^N(z_i), P_1^N(z_i)\big\}\nn\\&=\lim_{\epsilon\rightarrow 0}\lim_{N\rightarrow \infty} \sup_{\substack{P_0^N\in\mathcal{P}:\big\|\mu_{P_0^N} - \mu_{\hat{Q}^0_m}\big\|_\mathcal{H}\leq \theta +\epsilon \\ P_1^N\in\mathcal{P}:\big\|\mu_{P_1^N} - \mu_{\hat{Q}^1_m}\big\|_\mathcal{H}\leq \theta + \epsilon\\ P_0^N, P_1^N\ \text{are supported on}\ \{z_i\}_{i=1}^N}}\sum_{i=1}^N\min\big\{P_0^N(z_i), P_1^N(z_i)\big\}.
\end{flalign}

Recall the definition of $g_N(\theta)$ in \eqref{eq:definegn}: $$g_N(\theta) = \sup_{\substack{P_0^N\in\mathcal{P}:\big\|\mu_{P_0^N} - \mu_{\hat{Q}^0_m}\big\|_\mathcal{H}\leq \theta \\ P_1^N\in\mathcal{P}:\big\|\mu_{P_1^N} - \mu_{\hat{Q}^1_m}\big\|_\mathcal{H}\leq \theta\\ P_0^N, P_1^N\ \text{are supported on}\ \{z_i\}_{i=1}^N}}\sum_{i=1}^N\min\big\{P_0^N(z_i), P_1^N(z_i)\big\}.$$ Let $g^*(\theta) = \lim_{N\rightarrow \infty}g_N(\theta)$. This limit exists because for any $\theta > 0$, $\{g_N(\theta)\}_{N=1}^\infty$ is a non-decreasing sequence and has upper bound $g(\theta)$.

For any $N$, denote by $P_{0, \theta_1}^N, P_{1, \theta_1}^N$ the optimal solutions to $g_N(\theta_1)$ and $P_{0, \theta_2}^N, P_{1, \theta_2}^N$ the optimal solutions to $g_N(\theta_2)$. Consider $\lambda P_{0, \theta_1}^N + (1-\lambda)P_{0, \theta_2}^N, \lambda P_{1, \theta_1}^N + (1-\lambda)P_{1, \theta_2}^N$ for $0<\lambda<1$. We have that 
\begin{flalign}
&g^*(\lambda\theta_1 + (1-\lambda)\theta_2) = \lim_{N\rightarrow\infty}g_N(\lambda\theta_1 + (1-\lambda)\theta_2)\nn\\&\geq\lim_{N\rightarrow \infty}\sum_{i=1}^N\min\big\{\lambda P_{0,\theta_1}^N(z_i)+(1-\lambda)P_{0, \theta_2}(z_i), \lambda P_{1,\theta_1}^N(z_i)+(1-\lambda)P_{1, \theta_2}(z_i)\big\}\nn\\&\geq \lim_{N\rightarrow\infty}\lambda\sum_{i=1}^N\min\big\{P_{0, \theta_1}^N(z_i), P_{1, \theta_1}^N(z_i)\big\} + (1-\lambda)\sum_{i=1}^N\min\big\{P_{0,\theta_2}^N(z_i), P_{1, \theta_2}^N(z_i)\big\} \nn\\&= \lambda\lim_{N\rightarrow \infty}\sum_{i=1}^N\min\big\{P_{0,\theta_1}^N(z_i), P_{1, \theta_1}^N(z_i)\big\} + (1-\lambda)\lim_{N\rightarrow \infty}\sum_{i=1}^N\min\big\{P_{0,\theta_2}^N(z_i), P_{1, \theta_2}^N(z_i)\big\}\nn\\&= \lambda g^*(\theta_1) + (1-\lambda)g^*(\theta_2),
\end{flalign}
where the first equality is because the limits $\lim_{N\rightarrow \infty}\sum_{i=1}^N\min\big\{P_{0,\theta_1}^N(z_i), P_{1, \theta_1}^N(z_i)\big\}$ and $\lim_{N\rightarrow \infty}\sum_{i=1}^N\min\big\{P_{0,\theta_2}^N(z_i), P_{1, \theta_2}^N(z_i)\big\}$ exist. Therefore, $g^*(\theta)$ is concave in $\theta$, and thus is continuous in $\theta$. From the continuity of $g^*(\theta)$ and \eqref{eq:epcontinue}, we have that \eqref{eq:rcontinue} holds. This completes the proof.
\end{proof}

\section{Proof of Proposition \ref{proposition: tractable}}\label{sec:pfcompute}
\begin{proof}
Note that MMD is non-negative. If $\big\|\mu_{\hat{P}_n} - \mu_{\hat{Q}^0_m}\big\|_\mathcal{H} \leq \theta$, we have that $\hat{P}_n \in \mathcal{P}_0$. Therefore, $\inf_{P\in\mathcal{P}_0}\big\|\mu_{\hat{P}_n} - \mu_{P}\big\|_\mathcal{H} = 0$. We then consider the case that  $\big\|\mu_{\hat{P}_n} - \mu_{\hat{Q}^0_m}\big\|_\mathcal{H} > \theta$. For any $P\in\mathcal{P}_0$, $\big\|\mu_{P} - \mu_{\hat{Q}^0_m}\big\|_\mathcal{H} \leq \theta$, and thus by the triangle inequality \cite{gretton2012kernel}, we have that
\begin{flalign}
\big\|\mu_{\hat{P}_n} - \mu_{P}\big\|_\mathcal{H} &\geq \big\|\mu_{\hat{P}_n} - \mu_{\hat{Q}^0_m}\big\|_\mathcal{H}  - \big\|\mu_{P} - \mu_{\hat{Q}^0_m}\big\|_\mathcal{H}\nn\\&\geq \big\|\mu_{\hat{P}_n} - \mu_{\hat{Q}^0_m}\big\|_\mathcal{H} - \theta.
\end{flalign}
It then follows that 
\begin{flalign}\label{eq:norminf}
\inf_{P\in\mathcal{P}_0} \big\|\mu_{\hat{P}_n} - \mu_{P}\big\|_\mathcal{H} \geq \big\|\mu_{\hat{P}_n} - \mu_{\hat{Q}^0_m}\big\|_\mathcal{H} - \theta.
\end{flalign}
	
The equality in \eqref{eq:norminf} can be achieved when the following condition holds for a $P\in\mathcal{P}_0$:
\begin{flalign}
\big\|\mu_{\hat{P}_n} - \mu_{P}\big\|_\mathcal{H} = \big\|\mu_{\hat{P}_n} - \mu_{\hat{Q}^0_m}\big\|_\mathcal{H} - \theta.
\end{flalign}
We then construct such a $P$. Let $\lambda = \frac{\theta}{\big\|\mu_{\hat{P}_n} - \mu_{\hat{Q}^0_m}\big\|_\mathcal{H}}$. Since $\big\|\mu_{\hat{P}_n} - \mu_{\hat{Q}^0_m}\big\|_\mathcal{H} > \theta$, we have that $0<\lambda<1$. Let $P = \lambda\hat{P}_n + (1-\lambda)\hat{Q}^0_m$ be a linear combination of two distributions $\hat{P}_n$ and $\hat{Q}^0_m$. We then have that $P\in\mathcal{P}$ and
\begin{flalign}
\mu_P =& \int k(x,\cdot)d\big(\lambda\hat{P}_n + (1-\lambda)\hat{Q}^0_m\big)\nn\\ = &\lambda\int k(x,\cdot)d\hat{P}_n + (1-\lambda)\int k(x,\cdot)d\hat{Q}^0_m\nn\\ =& \lambda\mu_{\hat{P}_n} + (1-\lambda)\mu_{\hat{Q}^0_m}.
\end{flalign}
It then follows that 
\begin{flalign}
\big\|\mu_{\hat{P}_n} - \mu_{P}\big\|_\mathcal{H}  =& \big\|(1-\lambda)\mu_{\hat{P}_n} - (1-\lambda)\mu_{\hat{Q}^0_m}\big\|_\mathcal{H} \nn\\ = & (1-\lambda)\big\|\mu_{\hat{P}_n} - \mu_{\hat{Q}^0_m}\big\|_\mathcal{H} \nn\\=& \big\|\mu_{\hat{P}_n} - \mu_{\hat{Q}^0_m}\big\|_\mathcal{H} -\theta.
\end{flalign}
and
\begin{flalign}
\big\|\mu_{P} - \mu_{\hat{Q}^0_m}\big\|_\mathcal{H} = & \big\|\lambda\mu_{\hat{P}_n} - \lambda\mu_{\hat{Q}^0_m}\big\|_\mathcal{H} \nn\\ = & \lambda\big\|\mu_{\hat{P}_n} - \mu_{\hat{Q}^0_m}\big\|_\mathcal{H} \nn\\=& \theta.
\end{flalign}
Therefore, $P = \lambda\hat{P}_n + (1-\lambda)\hat{Q}^0_m \in \mathcal{P}_0$ and achieves the equality in \eqref{eq:norminf}.
Therefore, when $\big\|\mu_{\hat{P}_n} - \mu_{\hat{Q}^0_m}\big\|_\mathcal{H} > \theta$, we have that 
\begin{align}
\inf_{P\in\mathcal{P}_0}\big\|\mu_{\hat{P}_n} - \mu_{P}\big\|_\mathcal{H} = \big\|\mu_{\hat{P}_n} - \mu_{\hat{Q}^0_m}\big\|_\mathcal{H} - \theta,
\end{align}
From \eqref{eq:mmddef}, it follows that 
\begin{flalign}
&\inf_{P\in\mathcal{P}_0}\big\|\mu_{\hat{P}_n} - \mu_{P}\big\|_\mathcal{H} = \big\|\mu_{\hat{P}_n} - \mu_{\hat{Q}^0_m}\big\|_\mathcal{H} - \theta\nn\\&= \Big(\frac{1}{n^2}\sum_{i=1}^n\sum_{j=1}^nk(x_i, x_j) + \frac{1}{m^2}\sum_{i=1}^m\sum_{j=1}^mk(\hat{x}_{0, i}, \hat{x}_{0, j}) - \frac{2}{nm}\sum_{i=1}^n\sum_{j=1}^mk(x_i, \hat{x}_{0, j})\Big)^{1/2} - \theta. 
\end{flalign}
Following the same idea as in solving $\inf_{P\in\mathcal{P}_0}\big\|\mu_{\hat{P}_n} - \mu_{P}\big\|_\mathcal{H}$, the closed-form solution can also be derived for $\inf_{P\in\mathcal{P}_1}\big\|\mu_{\hat{P}_n} - \mu_{P}\big\|_\mathcal{H}$.
\end{proof}

\section{Proof of Theorem \ref{theorem:bayesopt}}\label{sec:pfdirect}
\begin{proof}
1) For the type-\uppercase\expandafter{\romannumeral1} error exponent, from the Sanov's theorem \cite{cover2006information}, we have that for any $P_0\in\mathcal{P}_0$,
\begin{flalign}
\lim_{n\rightarrow\infty}-\frac{1}{n}E_{P_0}[\phi_B(x^n)] &= \lim_{n\rightarrow \infty}-\frac{1}{n}\log P_0\Big(\inf_{P\in\mathcal{P}_0}\big\|\mu_{\hat{P}_n} - \mu_{P}\big\|_\mathcal{H} - \inf_{P\in\mathcal{P}_1}\big\|\mu_{\hat{P}_n} - \mu_{P}\big\|_\mathcal{H}\geq\gamma\Big)\nn\\&\geq \inf_{P^\prime\in \Gamma_0}D(P^\prime\|P_0),
\end{flalign}
where $\Gamma_0 = \Big\{P^\prime: \inf_{P\in\mathcal{P}_0}\big\|\mu_{P^\prime} - \mu_{P}\big\|_\mathcal{H} - \inf_{P\in\mathcal{P}_1}\big\|\mu_{P^\prime} - \mu_{P}\big\|_\mathcal{H} \geq \gamma\Big\}$ and $D(P^\prime\|P_0)$ denotes the KL-divergence between two distributions $P^\prime$ and $ P_0$. For any $P_0\in\mathcal{P}_0$ and $\gamma\in \Big(-\big\|\mu_{\hat{Q}_m^0} - \mu_{\hat{Q}_m^1}\big\|_\mathcal{H} + 2\theta, \big\|\mu_{\hat{Q}_m^0} - \mu_{\hat{Q}_m^1}\big\|_\mathcal{H} - 2\theta\Big)$, we have that 
\begin{flalign}\label{eq:twoopt1}
&\inf_{P\in\mathcal{P}_0}\big\|\mu_{P_0} - \mu_{P}\big\|_\mathcal{H} - \inf_{P\in\mathcal{P}_1}\big\|\mu_{P_0} - \mu_{P}\big\|_\mathcal{H} \nn\\&= -\inf_{P\in\mathcal{P}_1}\big\|\mu_{P_0} - \mu_{P}\big\|_\mathcal{H}\nn\\&= -\big\|\mu_{P_0}-\mu_{\hat{Q}_m^1}\big\|_\mathcal{H} + \theta\nn\\&\leq \big\|\mu_{P_0}-\mu_{\hat{Q}_m^0}\big\|_\mathcal{H} - \big\|\mu_{\hat{Q}_m^0} - \mu_{\hat{Q}_m^1}\big\|_\mathcal{H} + \theta\nn\\& \leq-\big\|\mu_{\hat{Q}_m^0} - \mu_{\hat{Q}_m^1}\big\|_\mathcal{H} + 2\theta\nn\\&<\gamma, 
\end{flalign}
where the first and second equalities are from Proposition \ref{proposition: tractable}, the first inequality is from the triangle inequality of MMD \cite{gretton2012kernel} and the second inequality is because $P_0\in\mathcal{P}_0$. 
We then have that for any $P_0\in\mathcal{P}_0$, when $\gamma\in \Big(-\big\|\mu_{\hat{Q}_m^0} - \mu_{\hat{Q}_m^1}\big\|_\mathcal{H} + 2\theta, \big\|\mu_{\hat{Q}_m^0} - \mu_{\hat{Q}_m^1}\big\|_\mathcal{H} - 2\theta\Big)$, $P_0\notin \Gamma_0$. Therefore, $\inf_{P^\prime\in \Gamma_0}D(P^\prime\|P_0)>0$ thus the type-\uppercase\expandafter{\romannumeral1} error probability of $\phi_B$ decreases exponentially fast with $n$. 
		
Similarly, for the type-\uppercase\expandafter{\romannumeral2} error exponent, we have that for any $P_1\in\mathcal{P}_1$,
\begin{flalign}
\lim_{n\rightarrow \infty}-\frac{1}{n}\log E_{P_1}[1-\phi_B(x^n)] &=  \lim_{n\rightarrow \infty}-\frac{1}{n}\log P_1\Big(\inf_{P\in\mathcal{P}_0}\big\|\mu_{\hat{P}_n} - \mu_{P}\big\|_\mathcal{H} - \inf_{P\in\mathcal{P}_1}\big\|\mu_{\hat{P}_n} - \mu_{P}\big\|_\mathcal{H}< \gamma\Big)\nn\\&\geq \inf_{P^\prime\in \Gamma_1}D(P^\prime\|P_1),
\end{flalign}
where $\Gamma_1 = \Big\{P^\prime: \inf_{P\in\mathcal{P}_0}\big\|\mu_{P^\prime} - \mu_{P}\big\|_\mathcal{H} - \inf_{P\in\mathcal{P}_1}\big\|\mu_{P^\prime} - \mu_{P}\big\|_\mathcal{H} \leq \gamma\Big\}$. For any $P_1\in\mathcal{P}_1$ and $\gamma\in \Big(-\big\|\mu_{\hat{Q}_m^0} - \mu_{\hat{Q}_m^1}\big\|_\mathcal{H} + 2\theta, \big\|\mu_{\hat{Q}_m^0} - \mu_{\hat{Q}_m^1}\big\|_\mathcal{H} - 2\theta\Big)$, we have that 
\begin{flalign}\label{eq:twoopt2}
&\inf_{P\in\mathcal{P}_0}\big\|\mu_{P_1} - \mu_{P}\big\|_\mathcal{H} - \inf_{P\in\mathcal{P}_1}\big\|\mu_{P_1} - \mu_{P}\big\|_\mathcal{H} \nn\\&= \inf_{P\in\mathcal{P}_0}\big\|\mu_{P_1} - \mu_{P}\big\|_\mathcal{H}\nn\\&= \big\|\mu_{P_1} -\mu_{\hat{Q}_m^0}\big\|_\mathcal{H}-\theta\nn\\&\geq\big\|\mu_{\hat{Q}_m^0} - \mu_{\hat{Q}_m^1}\big\|_\mathcal{H} - \big\|\mu_{P_1} -\mu_{\hat{Q}_m^1}\big\|_\mathcal{H} - \theta\nn\\&\geq \big\|\mu_{\hat{Q}_m^0} - \mu_{\hat{Q}_m^1}\big\|_\mathcal{H} - 2\theta\nn\\&>\gamma,
\end{flalign}
where the first and second equalities are from Proposition \ref{proposition: tractable}, the first inequality is from the triangle inequality of MMD \cite{gretton2012kernel} and the second inequality is because $P_1\in\mathcal{P}_1$.
Therefore, for any $P_1\in\mathcal{P}_1$, we have that $P_1\notin\Gamma_1$, and thus $\inf_{P^\prime\in \Gamma_1}D(P^\prime\|P_1)>0$ and the type-\uppercase\expandafter{\romannumeral2} error probability of $\phi_B$ decreases exponentially fast with $n$. Therefore, the direct robust kernel test $\phi_B$ is exponentially consistent.
		
2) We will then prove that with $\gamma\in \Big(-\big\|\mu_{\hat{Q}_m^0} - \mu_{\hat{Q}_m^1}\big\|_\mathcal{H} + 2\theta, \big\|\mu_{\hat{Q}_m^0} - \mu_{\hat{Q}_m^1}\big\|_\mathcal{H} - 2\theta\Big)$, $\phi_B$ and $\phi^\prime_B$ are equivalent. 	
When $\big\|\mu_{\hat{P}_n} - \mu_{\hat{Q}_m^0}\big\|_\mathcal{H}  \leq \theta$, we have that $\hat{P}_n\in\mathcal{P}_0$. From \eqref{eq:twoopt1}, it follows that
\begin{flalign}
&\inf_{P\in\mathcal{P}_0}\big\|\mu_{\hat{P}_n} - \mu_{P}\big\|_\mathcal{H} - \inf_{P\in\mathcal{P}_1}\big\|\mu_{\hat{P}_n} - \mu_{P}\big\|_\mathcal{H}
<\gamma.
\end{flalign}
Moreover, from triangle inequality \cite{gretton2012kernel}, we have that
\begin{flalign}
&\big\|\mu_{\hat{P}_n} - \mu_{\hat{Q}_m^0}\big\|_\mathcal{H} - \big\|\mu_{\hat{P}_n} - \mu_{\hat{Q}_m^1}\big\|_\mathcal{H}\nn\\&\leq \theta - \big\|\mu_{\hat{P}_n} - \mu_{\hat{Q}_m^1}\big\|_\mathcal{H}\nn\\&\leq \theta - \big\|\mu_{\hat{Q}_m^0} - \mu_{\hat{Q}_m^1}\big\|_\mathcal{H} +  \big\|\mu_{\hat{P}_n} - \mu_{\hat{Q}_m^0}\big\|_\mathcal{H} \nn\\&\leq - \big\|\mu_{\hat{Q}_m^0} - \mu_{\hat{Q}_m^1}\big\|_\mathcal{H} + 2\theta\nn\\&<\gamma.
\end{flalign}
Therefore, when $\big\|\mu_{\hat{P}_n} - \mu_{\hat{Q}_m^0}\big\|_\mathcal{H}  \leq \theta$, $\phi_B = \phi^\prime_B = 0$.
		
When $\big\|\mu_{\hat{P}_n} - \mu_{\hat{Q}_m^1}\big\|_\mathcal{H}  \leq \theta$, we have that $\hat{P}_n\in\mathcal{P}_1$. From \eqref{eq:twoopt2}, it follows that 
\begin{flalign}
&\inf_{P\in\mathcal{P}_0}\big\|\mu_{\hat{P}_n} - \mu_{P}\big\|_\mathcal{H} - \inf_{P\in\mathcal{P}_1}\big\|\mu_{\hat{P}_n} - \mu_{P}\big\|_\mathcal{H}> \gamma.
\end{flalign}
From the triangle inequality \cite{gretton2012kernel}, we have that
\begin{flalign}
&\big\|\mu_{\hat{P}_n} - \mu_{\hat{Q}_m^0}\big\|_\mathcal{H} - \big\|\mu_{\hat{P}_n} - \mu_{\hat{Q}_m^1}\big\|_\mathcal{H}\nn\\&\geq \big\|\mu_{\hat{P}_n} - \mu_{\hat{Q}_m^0}\big\|_\mathcal{H} - \theta\nn\\&\geq \big\|\mu_{\hat{Q}_m^0} - \mu_{\hat{Q}_m^1}\big\|_\mathcal{H} - \big\|\mu_{\hat{P}_n} - \mu_{\hat{Q}_m^1}\big\|_\mathcal{H} - \theta\nn\\&\geq\big\|\mu_{\hat{Q}_m^0} - \mu_{\hat{Q}_m^1}\big\|_\mathcal{H} -2\theta \nn\\&>\gamma.
\end{flalign}
Therefore, when $\big\|\mu_{\hat{P}_n} - \mu_{\hat{Q}_m^1}\big\|_\mathcal{H}  \leq \theta$, $\phi_B = \phi^\prime_B = 1$.
		
When $\big\|\mu_{\hat{P}_n} - \mu_{\hat{Q}_m^0}\big\|_\mathcal{H}  > \theta$ and $\big\|\mu_{\hat{P}_n} - \mu_{\hat{Q}_m^1}\big\|_\mathcal{H}  > \theta$, from Proposition \ref{proposition: tractable}, we have that 
\begin{flalign}
&\inf_{P\in\mathcal{P}_0}\big\|\mu_{\hat{P}_n} - \mu_{P}\big\|_\mathcal{H} - \inf_{P\in\mathcal{P}_1}\big\|\mu_{\hat{P}_n} - \mu_{P}\big\|_\mathcal{H}=\big\|\mu_{\hat{P}_n} - \mu_{\hat{Q}_m^0}\big\|_\mathcal{H} - \big\|\mu_{\hat{P}_n} - \mu_{\hat{Q}_m^1}\big\|_\mathcal{H}.
\end{flalign}
Combining the three different cases, we have that $\phi_B$ and $\phi^\prime_B$ are equivalent. We note that $\phi^\prime_B$ only consists of MMD between empirical distributions, thus it can be implemented efficiently. From \eqref{eq:mmddef}, we have that 
\begin{flalign}
&\big\|\mu_{\hat{P}_n} - \mu_{\hat{Q}^0_m}\big\|_\mathcal{H} - \big\|\mu_{\hat{P}_n} - \mu_{\hat{Q}^1_m}\big\|_\mathcal{H}\nn\\&= \Big(\frac{1}{n^2}\sum_{i=1}^n\sum_{j=1}^nk(x_i, x_j) + \frac{1}{m^2}\sum_{i=1}^m\sum_{j=1}^mk(\hat{x}_{0, i}, \hat{x}_{0, j}) - \frac{2}{nm}\sum_{i=1}^n\sum_{j=1}^mk(x_i, \hat{x}_{0, j})\Big)^{1/2}\nn\\ &- \Big(\frac{1}{n^2}\sum_{i=1}^n\sum_{j=1}^nk(x_i, x_j) + \frac{1}{m^2}\sum_{i=1}^m\sum_{j=1}^mk(\hat{x}_{1, i}, \hat{x}_{1, j}) - \frac{2}{nm}\sum_{i=1}^n\sum_{j=1}^mk(x_i, \hat{x}_{1, j})\Big)^{1/2}. 
\end{flalign}
This completes the proof.
\end{proof}

\section{Proof of Proposition \ref{proposition: worstbound}}\label{sec:pfworstbound}

\begin{proof}
	For any $P_0\in\mathcal{P}_0$, we have that
	\begin{flalign}\label{eq:mcdiarmid}
	&E_{P_0}[\phi_B(x^n)]\nn\\ &= P_0\Big(\big\|\mu_{\hat{P}_n} - \mu_{\hat{Q}_m^0}\big\|_\mathcal{H} - \big\|\mu_{\hat{P}_n} - \mu_{\hat{Q}_m^1}\big\|_\mathcal{H}\geq 0\Big)\nn\\&=P_0\Big(\big\|\mu_{\hat{P}_n} - \mu_{\hat{Q}_m^0}\big\|_\mathcal{H}^2 - \big\|\mu_{\hat{P}_n} - \mu_{\hat{Q}_m^1}\big\|_\mathcal{H}^2\geq 0\Big) \nn\\&= P_0\Big(E_{x\sim \hat{P}_n, x^\prime\sim \hat{P}_n}[k(x,x^\prime)] + E_{y\sim \hat{Q}_m^0, y^\prime\sim\hat{Q}_m^0}[k(y,y^\prime)] - 2E_{x\sim \hat{P}_n, y\sim \hat{Q}_m^0}[k(x,y)] - E_{x\sim \hat{P}_n, x^\prime\sim \hat{P}_n}[k(x,x^\prime)]\nn\\ &\hspace{1.17cm}-E_{y\sim \hat{Q}_m^1, y^\prime\sim\hat{Q}_m^1}[k(y,y^\prime)] + 2E_{x\sim \hat{P}_n, y\sim \hat{Q}_m^1}[k(x,y)] \geq 0\Big)\nn\\&= P_0\Big(-\frac{2}{n}\sum_{j=1}^n\frac{1}{m}\sum_{i=1}^m k(x_j, \hat{x}_{0,i}) + \frac{2}{n}\sum_{j=1}^n\frac{1}{m}\sum_{i=1}^m k(x_j, \hat{x}_{1,i}) + E_{y\sim \hat{Q}_m^0, y^\prime\sim\hat{Q}_m^0}[k(y,y^\prime)] \nn\\&\hspace{1.17cm}- E_{y\sim \hat{Q}_m^1, y^\prime\sim\hat{Q}_m^1}[k(y,y^\prime)] \geq 0\Big).
	\end{flalign}
	We will then bound the error probability in \eqref{eq:mcdiarmid} using the McDiarmid's inequality \cite{mcdiarmid1989}. Define 
	\begin{flalign}
	F(x_1, x_2,\cdots, x_n) = -\frac{2}{n}\sum_{j=1}^n\frac{1}{m}\sum_{i=1}^m k(x_j, \hat{x}_{0,i}) + \frac{2}{n}\sum_{j=1}^n\frac{1}{m}\sum_{i=1}^m k(x_j, \hat{x}_{1, i}). 
	\end{flalign}
	To apply the McDiarmid's inequality \cite{mcdiarmid1989}, we first need to bound 
	\begin{flalign}
	\sup_{x_1, \cdots, x_j, \cdots, x_n, x^\prime_j \in \mathcal{X}}\Big(F(x_1, \cdots, x_j, \cdots, x_n) - F(x_1, \cdots, x^\prime_j, \cdots, x_n)\Big), \forall 1\leq j\leq n.
	\end{flalign}
	It can be shown that 
	\begin{flalign}
	&\sup_{x_1, \cdots, x_j, \cdots, x_n, x^\prime_j \in \mathcal{X}}\Big(F(x_1, \cdots, x_j, \cdots, x_n) - F(x_1, \cdots, x^\prime_j, \cdots, x_n)\Big)\nn\\ 
	&=\sup_{x_1, \cdots, x_j, \cdots, x_n, x^\prime_j \in \mathcal{X}}\frac{2}{n}\Big(\frac{1}{m}\sum_{i=1}^m\big(k(x^\prime_j, \hat{x}_{0,i}) - k(x_j, \hat{x}_{0,i})\big) + \frac{1}{m}\sum_{i=1}^m\big(k(x_j, \hat{x}_{1,i}) - k(x^\prime_j, \hat{x}_{1,i})\big)\Big)\nn\\&\leq \frac{4K}{n},
	\end{flalign}
	where the last inequality is due to the fact that the kernel $k(\cdot, \cdot)$ is bounded. We will then consider the expectation of $F(x_1, x_2,\cdots, x_n)$. From similar steps in \eqref{eq:mcdiarmid}, it follows that 
	\begin{flalign}
	&E_{P_0}\big[F(x_1, x_2,\cdots, x_n)\big] + E_{y\sim \hat{Q}_m^0, y^\prime\sim\hat{Q}_m^0}[k(y,y^\prime)] - E_{y\sim \hat{Q}_m^1, y^\prime\sim\hat{Q}_m^1}[k(y,y^\prime)] \nn\\
	&= \big\|\mu_{P_0}-\mu_{\hat{Q}_m^0}\big\|^2_\mathcal{H} - \big\|\mu_{P_0}-\mu_{\hat{Q}_m^1}\big\|^2_\mathcal{H}.
	\end{flalign}
	Note that for any $P_0\in\mathcal{P}_0$, we have that $\big\|\mu_{P_0}-\mu_{\hat{Q}_m^0}\big\|^2_\mathcal{H} - \big\|\mu_{P_0}-\mu_{\hat{Q}_m^1}\big\|^2_\mathcal{H}\leq 0$. It then follows that 
	\begin{flalign}
	&P_0\Big(-\frac{2}{n}\sum_{j=1}^n\frac{1}{m}\sum_{i=1}^m k(x_j, \hat{x}_{0,i}) + \frac{2}{n}\sum_{j=1}^n\frac{1}{m}\sum_{i=1}^m k(x_j, \hat{x}_{1,i}) + E_{y\sim \hat{Q}_m^0, y^\prime\sim\hat{Q}_m^0}[k(y,y^\prime)]\nn\\&\hspace{0.65cm} - E_{y\sim \hat{Q}_m^1, y^\prime\sim\hat{Q}_m^1}[k(y,y^\prime)] \geq 0\Big)\nn\\&= P_0 \bigg(-\frac{2}{n}\sum_{j=1}^n\frac{1}{m}\sum_{i=1}^m k(x_j, \hat{x}_{0,i}) + \frac{2}{n}\sum_{j=1}^n\frac{1}{m}\sum_{i=1}^m k(x_j, \hat{x}_{1,i})\nn\\&\hspace{1.22cm} - E_{P_0}\Big[-\frac{2}{n}\sum_{j=1}^n\frac{1}{m}\sum_{i=1}^m k(x_j, \hat{x}_{0,i}) + \frac{2}{n}\sum_{j=1}^n\frac{1}{m}\sum_{i=1}^m k(x_j, \hat{x}_{1,i})\Big]\nn\\&\hspace{1.22cm}\geq E_{y\sim \hat{Q}_m^1, y^\prime\sim\hat{Q}_m^1}[k(y,y^\prime)] - E_{y\sim \hat{Q}_m^0, y^\prime\sim\hat{Q}_m^0}[k(y,y^\prime)] \nn\\&\hspace{1.22cm} -E_{P_0}\Big[-\frac{2}{n}\sum_{j=1}^n\frac{1}{m}\sum_{i=1}^m k(x_j, \hat{x}_{0,i}) + \frac{2}{n}\sum_{j=1}^n\frac{1}{m}\sum_{i=1}^m k(x_j, \hat{x}_{1,i})\Big] \bigg) \nn\\ &= P_0 \bigg(-\frac{2}{n}\sum_{j=1}^n\frac{1}{m}\sum_{i=1}^m k(x_j, \hat{x}_{0,i}) + \frac{2}{n}\sum_{j=1}^n\frac{1}{m}\sum_{i=1}^m k(x_j, \hat{x}_{1,i})\nn\\&\hspace{1.22cm} - E_{P_0}\Big[-\frac{2}{n}\sum_{j=1}^n\frac{1}{m}\sum_{i=1}^m k(x_j, \hat{x}_{0,i}) + \frac{2}{n}\sum_{j=1}^n\frac{1}{m}\sum_{i=1}^m k(x_j, \hat{x}_{1,i})\Big]\nn\\&\hspace{1.22cm}\geq \big\|\mu_{P_0}-\mu_{\hat{Q}_m^1}\big\|^2_\mathcal{H} - \big\|\mu_{P_0}-\mu_{\hat{Q}_m^0}\big\|^2_\mathcal{H}\bigg)\nn\\&\leq \exp\Bigg(-\frac{n\Big(\big\|\mu_{P_0}-\mu_{\hat{Q}_m^1}\big\|^2_\mathcal{H} - \big\|\mu_{P_0}-\mu_{\hat{Q}_m^0}\big\|^2_\mathcal{H}\Big)^2}{8K^2}\Bigg),
	\end{flalign}
	where the last inequality is from the McDiarmid's inequality \cite{mcdiarmid1989}. We then have that
	\begin{flalign}\label{eq:worstbound}
	&\sup_{P_0\in\mathcal{P}_0}E_{P_0}[\phi_B(x^n)]\leq \sup_{P_0\in\mathcal{P}_0}\exp\Bigg(-\frac{n\Big(\big\|\mu_{P_0}-\mu_{\hat{Q}_m^1}\big\|^2_\mathcal{H} - \big\|\mu_{P_0}-\mu_{\hat{Q}_m^0}\big\|^2_\mathcal{H}\Big)^2}{8K^2}\Bigg).
	\end{flalign}
	Since $\big\|\mu_{P_0}-\mu_{\hat{Q}_m^1}\big\|^2_\mathcal{H} - \big\|\mu_{P_0}-\mu_{\hat{Q}_m^0}\big\|^2_\mathcal{H}\geq 0$ and the exponential function is monotonically increasing, the optimization problem on the right-hand side of \eqref{eq:worstbound} can be solved by solving $\inf_{P_0\in\mathcal{P}_0}\big\|\mu_{P_0}-\mu_{\hat{Q}_m^1}\big\|^2_\mathcal{H} - \big\|\mu_{P_0}-\mu_{\hat{Q}_m^0}\big\|^2_\mathcal{H}$. We then have that 
	\begin{flalign}
	&\inf_{P_0\in\mathcal{P}_0}\big\|\mu_{P_0}-\mu_{\hat{Q}_m^1}\big\|^2_\mathcal{H} - \big\|\mu_{P_0}-\mu_{\hat{Q}_m^0}\big\|^2_\mathcal{H}\nn\\ &= \inf_{P_0\in\mathcal{P}_0} \Big(\big\|\mu_{P_0}-\mu_{\hat{Q}_m^1}\big\|_\mathcal{H} +\big\|\mu_{P_0}-\mu_{\hat{Q}_m^0}\big\|_\mathcal{H} \Big)\Big(\big\|\mu_{P_0}-\mu_{\hat{Q}_m^1}\big\|_\mathcal{H} - \big\|\mu_{P_0}-\mu_{\hat{Q}_m^0}\big\|_\mathcal{H}\Big).
	\end{flalign}
	From the triangle inequality of MMD \cite{gretton2012kernel}, we have that 
	\begin{flalign}
	\big\|\mu_{P_0}-\mu_{\hat{Q}_m^1}\big\|_\mathcal{H} +\big\|\mu_{P_0}-\mu_{\hat{Q}_m^0}\big\|_\mathcal{H} \geq \big\|\mu_{\hat{Q}_m^1}-\mu_{\hat{Q}_m^0}\big\|_\mathcal{H}. 
	\end{flalign}
	Moreover, since $\big\|\mu_{P_0}-\mu_{\hat{Q}_m^0}\big\|_\mathcal{H}\leq \theta$, it can be shown that 
	\begin{flalign}
	\big\|\mu_{P_0}-\mu_{\hat{Q}_m^1}\big\|_\mathcal{H} - \big\|\mu_{P_0}-\mu_{\hat{Q}_m^0}\big\|_\mathcal{H} \geq \big\|\mu_{\hat{Q}_m^1}-\mu_{\hat{Q}_m^0}\big\|_\mathcal{H} - 2\big\|\mu_{P_0}-\mu_{\hat{Q}_m^0}\big\|_\mathcal{H}\geq \big\|\mu_{\hat{Q}_m^1}-\mu_{\hat{Q}_m^0}\big\|_\mathcal{H} - 2\theta.
	\end{flalign}
	It then follows that 
	\begin{flalign}
	&\inf_{P_0\in\mathcal{P}_0}\big\|\mu_{P_0}-\mu_{\hat{Q}_m^1}\big\|^2_\mathcal{H} - \big\|\mu_{P_0}-\mu_{\hat{Q}_m^0}\big\|^2_\mathcal{H} \geq \big\|\mu_{\hat{Q}_m^1}-\mu_{\hat{Q}_m^0}\big\|_\mathcal{H}\Big(\big\|\mu_{\hat{Q}_m^1}-\mu_{\hat{Q}_m^0}\big\|_\mathcal{H} - 2\theta\Big).
	\end{flalign}
	Therefore, we have that 
	\begin{flalign}
	&\sup_{P_0\in\mathcal{P}_0}E_{P_0}[\phi_B(x^n)]\leq \exp\Bigg(-\frac{n\Big(\big\|\mu_{\hat{Q}_m^1} - \mu_{\hat{Q}_m^0}\big\|_\mathcal{H}^2-2\theta\big\|\mu_{\hat{Q}_m^1} - \mu_{\hat{Q}_m^0}\big\|_\mathcal{H}\Big)^2}{8K^2}\Bigg).
	\end{flalign}
	Following the same idea as in the proof of \eqref{eq:type1bound}, \eqref{eq:type2bound} can also be proved. 
	This completes the proof.
\end{proof}

\section{Proof of Theorem \ref{theorem:error}}\label{sec:pfnperror}
\begin{proof}
    For any $P_0 \in \mathcal{P}_0$, we have that 
\begin{flalign*}
&P_0 \Big(\inf_{P\in\mathcal{P}_0}\big\|\mu_{\hat{P}_n} - \mu_{P}\big\|_\mathcal{H}>\gamma_n\Big)\leq P_0 \Big(\big\|\mu_{\hat{P}_n} - \mu_{P_0}\big\|_\mathcal{H}>\gamma_n\Big).
\end{flalign*}
Set $\gamma_n = \sqrt{2K/n}\big(1+\sqrt{-\log\alpha}\big)$. From Lemma \ref{lemma:empiricalmmd} in Appendix \ref{sec:appendix}, we have that for any $P_0 \in \mathcal{P}_0$,
\begin{flalign}
P_0 \Big(\big\|\mu_{\hat{P}_n} - \mu_{P_0}\big\|_\mathcal{H}>\gamma_n\Big) \leq \alpha.
\end{flalign}
We then have that 
\begin{flalign}
\sup_{P_0\in\mathcal{P}_0}P_0 \Big(\inf_{P\in\mathcal{P}_0}\big\|\mu_{\hat{P}_n} - \mu_{P}\big\|_\mathcal{H} > \gamma_n\Big)\leq \alpha.
\end{flalign}
	
Note that $\gamma_n \rightarrow 0$ as $n\rightarrow \infty$. For any $\gamma>0$, there exists an integer $n_0$ such that $\gamma_n < \gamma$ for all $n> n_0$. We then have that for large $n$, 
\begin{align}
&\Big\{P^\prime \in \mathcal{P}: \inf_{P\in\mathcal{P}_0}\big\|\mu_{P^\prime} - \mu_{P}\big\|_\mathcal{H} \leq \gamma_n\Big\} \subseteq \Big\{P^\prime \in \mathcal{P}: \inf_{P\in\mathcal{P}_0}\big\|\mu_{P^\prime} - \mu_{P}\big\|_\mathcal{H} \leq \gamma\Big\}.
\end{align}
It then follows that 
\begin{flalign}
&\inf_{P_1\in\mathcal{P}_1}\lim_{n\rightarrow\infty}-\frac{1}{n}\log P_1 \Big(\inf_{P\in\mathcal{P}_0}\big\|\mu_{\hat{P}_n} - \mu_{P}\big\|_\mathcal{H}\leq \gamma_n\Big)\nn\\&\geq \inf_{P_1\in\mathcal{P}_1}\lim_{n\rightarrow\infty}-\frac{1}{n}\log P_1 \Big(\inf_{P\in\mathcal{P}_0}\big\|\mu_{\hat{P}_n} - \mu_{P}\big\|_\mathcal{H}\leq \gamma\Big)\nn\\&\geq \inf_{P_1\in\mathcal{P}_1}\inf_{\big\{P^\prime\in\mathcal{P}: \inf_{P\in\mathcal{P}_0}\|\mu_{P^\prime} - \mu_{P}\|_\mathcal{H}\leq \gamma\big\}} D(P^\prime\|P_1),
\end{flalign}
where the last inequality is from the Sanov's theorem \cite{cover2006information} and the fact that $\{P^\prime\in\mathcal{P}: \inf_{P\in\mathcal{P}_0}\big\|\mu_{P^\prime} - \mu_{P}\big\|_\mathcal{H}\leq \gamma\}$ is closed w.r.t. the weak topology. Let 
\begin{align}
\Gamma = \{P^\prime\in\mathcal{P}: \inf_{P\in\mathcal{P}_0}\big\|\mu_{P^\prime} - \mu_{P}\big\|_\mathcal{H}\leq \gamma\}.
\end{align}
Since MMD metrizes the weak convergence on $\mathcal{P}$ \cite{simon2018kernel, sripe2016weak}, and KL divergence is lower semi-continuous with respect to the weak topology of $\mathcal{P}$ (see Lemma \ref{lemma:KLD} in Appendix \ref{sec:appendix}), we have that for any $\epsilon>0$ and $P_1\in\mathcal{P}_1$, there exists a neighborhood $U$ of $P_0$ defined by MMD such that $D(P^\prime\|P_1) \geq D(P_0\|P_1)-\epsilon$ for any $P^\prime\in U$. 

Specifically, for any $P_0\in\mathcal{P}_0$, define the neighborhood of $P_0$ with radius $\gamma$ as $U(P_0, \gamma) = \{P\in\mathcal{P}:\|\mu_P - \mu_{P_0}\|_\mathcal{H} \leq \gamma\}$. From the lower semi-continuity of KL divergence, we have that for any $\epsilon>0$ and $P_1\in\mathcal{P}_1$, there exists $\gamma(\epsilon, P_0) > 0$ such that $D(P^\prime\|P_1) \geq D(P_0\|P_1)-\epsilon$ for any $P^\prime\in U(P_0, \gamma(\epsilon, P_0))$. Therefore, for any $P^\prime\in\bigcup_{P_0\in\mathcal{P}_0}U(P_0, \gamma(\epsilon, P_0))$, there exists $P_0\in\mathcal{P}_0$ such that $D(P^\prime\|P_1) \geq D(P_0\|P_1)-\epsilon$.
	
For a given $\epsilon>0$ and $P_1\in\mathcal{P}_1$, let $\gamma^* = \min_{P_0\in\mathcal{P}_0}\gamma(\epsilon, P_0)$. Since $\gamma(\epsilon, P_0) > 0$ holds for any $P_0\in\mathcal{P}_0$ and $\mathcal{P}_0$ is a closed set, we have that $\gamma^* > 0$. Let 
\begin{flalign}
\Gamma^* = \{P^\prime\in\mathcal{P}: \inf_{P\in\mathcal{P}_0}\big\|\mu_{P^\prime} - \mu_{P}\big\|_\mathcal{H}\leq \gamma^*\}.
\end{flalign} 
We then have that 
\begin{flalign}
\Gamma^* = \bigcup_{P_0\in\mathcal{P}_0}U(P_0, \gamma^*) \subseteq\bigcup_{P_0\in\mathcal{P}_0}U(P_0, \gamma(\epsilon, P_0)).
\end{flalign} 
We then have that for any $P^\prime\in\Gamma^*$, there exists a $P_0\in\mathcal{P}_0$ such that
\begin{flalign}
D(P^\prime\|P_1) \geq D(P_0\|P_1)-\epsilon.
\end{flalign}	
It then follows that there exists $P_0\in\mathcal{P}_0$ such that
\begin{flalign}
\inf_{P^\prime \in \Gamma^*} D(P^\prime\|P_1) \geq D(P_0\|P_1) -\epsilon.
\end{flalign}
We then have that for any $\epsilon >0$,
\begin{flalign}
\inf_{P^\prime \in \Gamma^*} D(P^\prime\|P_1) \geq \inf_{P_0\in\mathcal{P}_0}D(P_0\|P_1) -\epsilon.
\end{flalign}
Since $\epsilon$ can be arbitrarily small, it then follows that 
\begin{flalign}\label{eq:lowerexp}
&\inf_{P_1\in\mathcal{P}_1}\lim_{n\rightarrow\infty}-\frac{1}{n}\log P_1 \Big(\inf_{P\in\mathcal{P}_0}\big\|\mu_{\hat{P}_n} - \mu_{P}\big\|_\mathcal{H}\leq \gamma_n\Big)\geq \inf_{P_0\in\mathcal{P}_0, P_1\in\mathcal{P}_1}D(P_0\|P_1).
\end{flalign}
Combining \eqref{eq:lowerexp} with Proposition \ref{proposition:upper}, we have that 
\begin{flalign}
&\inf_{P_1\in\mathcal{P}_1}\lim_{n\rightarrow\infty}-\frac{1}{n}\log E_{P_1}[1-\phi_N(x^n)]\nn\\&=\inf_{P_1\in\mathcal{P}_1}\lim_{n\rightarrow\infty}-\frac{1}{n}\log P_1 \Big(\inf_{P\in\mathcal{P}_0}\big\|\mu_{\hat{P}_n} - \mu_{P}\big\|_\mathcal{H}\leq \gamma_n\Big)\nn\\&=\sup_{\phi:P_F(\phi)\leq \alpha} \inf_{P_1\in\mathcal{P}_1}\lim_{n\rightarrow\infty} -\frac{1}{n} \log E_{P_1}[1-\phi(x^n)]\nn\\&= \inf_{P_0\in\mathcal{P}_0,P_1\in\mathcal{P}_1} D(P_0\|P_1).
\end{flalign}
This completes the proof.
\end{proof}
	


\section*{Acknowledgment}
The work of Z. Sun and S. Zou was supported in part by the National Science Foundation under Grants 1948165, 2106560 and 2112693.
 
\bibliographystyle{ieeetr}
\bibliography{Robust2}
	
\end{document}